\documentclass[11pt,a4paper]{article}
\pdfoutput=1
 
\usepackage[utf8]{inputenc}
\usepackage[bookmarks]{hyperref}
\hypersetup{colorlinks=true,citecolor=blue,linkcolor=blue,filecolor=blue,urlcolor=blue}
\usepackage{amsmath,amssymb,amsthm}
\usepackage{dsfont}
\usepackage{fullpage}

\usepackage[affil-it]{authblk}

\usepackage{cleveref}
%define \item references
\crefformat{enumi}{(#2#1#3)}
\crefmultiformat{enumi}{(#2#1#3)}{ and~(#2#1#3)}{, (#2#1#3)}{ and~(#2#1#3)}
\crefrangeformat{enumi}{(#3#1#4)--(#5#2#6)}

\usepackage{autonum}

\usepackage{enumerate}
\usepackage{mathtools}

\usepackage{booktabs}
 
\allowdisplaybreaks[4]

\numberwithin{equation}{section}

\newtheorem{theorem}{Theorem}[section]
\newtheorem{lemma}[theorem]{Lemma}
\newtheorem{proposition}[theorem]{Proposition}
\newtheorem{corollary}[theorem]{Corollary}
\theoremstyle{definition}
\newtheorem{definition}[theorem]{Definition}

\usepackage{tikz}
\usepackage{pgfplots}
\pgfplotsset{compat=1.10}
\usepackage{xcolor}

\usetikzlibrary{calc}

\definecolor{dgreen}{HTML}{006600}
\definecolor{lgreen}{HTML}{B3FFB3}

\usepackage{caption}
\captionsetup{font=small,margin=1cm}
\usepackage{subcaption}

%mathematical letters
\newcommand{\cA}{\mathcal{A}}
\newcommand{\cB}{\mathcal{B}}
\newcommand{\cC}{\mathcal{C}}
\newcommand{\cD}{\mathcal{D}}

\newcommand{\cH}{\mathcal{H}}

\newcommand{\cJ}{\mathcal{J}}
\newcommand{\cK}{\mathcal{K}}

\newcommand{\cN}{\mathcal{N}}

\newcommand{\cR}{\mathcal{R}}

\newcommand{\cT}{\mathcal{T}}
\newcommand{\cU}{\mathcal{U}}

%hat, tilde, ...

\newcommand{\trho}{\tilde{\rho}}

\newcommand{\one}{\mathds{1}}
\newcommand{\eps}{\varepsilon}

%operators
\DeclareMathOperator{\tr}{Tr}
\DeclareMathOperator{\rk}{rk}
\DeclareMathOperator{\id}{id}

\DeclareMathOperator{\supp}{supp}

\DeclareMathOperator{\op}{op}

%entropies

\newcommand{\Do}{D^{(1)}_{\rightarrow}}
\newcommand{\Dt}{D^{(1)}_{\leftrightarrow}}
\newcommand{\Done}{D_{\rightarrow}}
\newcommand{\Dtwo}{D_{\leftrightarrow}}
\newcommand{\Qone}{Q^{(1)}}

%misc
\newcommand{\sumi}{\sum\nolimits}

\newcommand{\onehalf}{\frac{1}{2}}
\newcommand{\ox}{\otimes}
\DeclareMathOperator{\dg}{dg}
\newcommand{\bbF}{\mathbb{F}}
\newcommand{\tA}{\tilde{A}}
\DeclareMathOperator{\adeg}{adg}
\newcommand{\qqquad}{\qquad\qquad}

\newcommand{\Emp}{E_{\text{\normalfont MP}}}
\newcommand{\Ewd}{E_{\text{\normalfont WD}}}
\newcommand{\Em}{E_{\text{\normalfont M}}}
\newcommand{\Eda}{E_{\text{\normalfont DA}}}
\newcommand{\sep}{\text{\normalfont SEP}}
\newcommand{\ppt}{\text{\normalfont PPT}}
\newcommand{\dgb}{\text{\normalfont DEG}}
\newcommand{\adg}{\text{\normalfont ADG}}
\newcommand{\mcs}{\text{\normalfont MC}}
\newcommand{\F}{\mathbb{F}}

\newcommand{\barR}{\overline{\mathbb{R}}}
\newcommand{\vphi}{\varphi}

\usepackage[backend=bibtex8,sorting=nty,firstinits=true,doi=true,isbn=false,url=false,maxbibnames=6]{biblatex}
\renewbibmacro{in:}{}
\addbibresource{references.bib}

\title{Useful states and entanglement distillation}
\author[a,b]{Felix Leditzky\thanks{Email: \texttt{felix.leditzky@jila.colorado.edu}}}
\author[c]{Nilanjana Datta\thanks{Email: \texttt{n.datta@statslab.cam.ac.uk}}}
\author[a,b,d]{Graeme Smith\thanks{Email: \texttt{gsbsmith@gmail.com}}}
\affil[a]{\small JILA, University of Colorado/NIST, Boulder, CO 80309, USA}
\affil[b]{\small Center for Theory of Quantum Matter, University of Colorado, Boulder, CO 80309, USA}
\affil[c]{\small Statistical Laboratory, Centre for Mathematical Sciences, University of Cambridge, Cambridge~CB3~0WB, UK}
\affil[d]{\small Department of Physics, University of Colorado, Boulder, CO 80309, USA}

\begin{document}
\maketitle
 
\begin{abstract}
We derive general upper bounds on the distillable entanglement of a mixed state under one-way and two-way LOCC. 
In both cases, the upper bound is based on a convex decomposition of the state into `useful' and `useless' quantum states. 
By `useful', we mean a state whose distillable entanglement is non-negative and equal to its coherent information (and thus given by a single-letter, tractable formula). 
On the other hand, `useless' states are undistillable, i.e., their distillable entanglement is zero. 
We prove that in both settings the distillable entanglement is convex on such decompositions. 
Hence, an upper bound on the distillable entanglement is obtained from the contributions of the useful states alone, being equal to the convex combination of their coherent informations. 
Optimizing over all such decompositions of the input state yields our upper bound.
The useful and useless states are given by degradable and antidegradable states in the one-way LOCC setting, and by maximally correlated and PPT states in the two-way LOCC setting, respectively.
We also illustrate how our method can be extended to quantum channels.

Interpreting our upper bound as a convex roof extension, we show that it reduces to a particularly simple, non-convex optimization problem for the classes of isotropic states and Werner states.
In the one-way LOCC setting, this non-convex optimization yields an upper bound on the quantum capacity of the qubit depolarizing channel that is strictly tighter than previously known bounds for large values of the depolarizing parameter. 
In the two-way LOCC setting, the non-convex optimization achieves the PPT-relative entropy of entanglement for both isotropic and Werner states.
\end{abstract}
 
\section{Introduction}

\subsection{Entanglement distillation}
Entanglement is an integral part of quantum information theory and quantum mechanics, acting as an indispensable resource for quantum information protocols such as teleportation \cite{BBC+93}, superdense coding \cite{BW92}, or entanglement-assisted classical \cite{BSST99} and quantum \cite{DHW04} communication through quantum channels.
In these protocols, the entanglement resource is usually assumed to have the special form of independent and identically distributed (i.i.d.) copies of an ebit $|\Phi_+\rangle \coloneqq \frac{1}{\sqrt{2}}(|00\rangle + |11\rangle)$, that is, a pure maximally entangled state between two qubits.
This assumption simplifies the aforementioned protocols and makes them amenable to a detailed theoretical analysis as well as experimental realization in the laboratory.
It is therefore important to find \emph{entanglement distillation} protocols, which convert $n$ copies of a noisy or mixed bipartite entangled state into $m_n$ ebits $\Phi_+$ to arbitrary precision with increasing $n$.

In a general entanglement distillation protocol, two parties (say, Alice and Bob) are allowed to use local operations and classical communication (LOCC).
One usually distinguishes between the following two settings: either the classical communication is restricted to only one-way communication from Alice to Bob, or two-way communication between Alice and Bob is possible. 
In both settings, Alice and Bob initially share $n$ copies of a mixed bipartite state $\rho_{AB}$, and their goal is to obtain, via one-way or two-way LOCC, a state that is close to $\Phi_+^{\ox m_n}$ with respect to a suitable distance measure (such as the purified distance \cite{TCR10}).
If the distance between the final and the target state vanishes asymptotically, then the asymptotic rate at which ebits are generated, $\lim_{n\to\infty} m_n/n$, is called an achievable rate for one-way (two-way) entanglement distillation.
The \emph{one-way distillable entanglement} $\Done(\rho_{AB})$ is defined as the supremum over all achievable rates under one-way LOCC.
Likewise, the \emph{two-way distillable entanglement} $\Dtwo(\rho_{AB})$ is defined as the supremum over all achievable rates under two-way LOCC.
Since every one-way LOCC operation is also a two-way LOCC operation, we have for all bipartite states $\rho_{AB}$ that
\begin{align}
\Done(\rho_{AB}) \leq \Dtwo(\rho_{AB}). \label{eq:one-way-two-way}
\end{align}

\textcite{DW05} proved the \emph{hashing bound}, establishing the coherent information as an achievable rate for one-way entanglement distillation (and thus also for two-way entanglement distillation): 
\begin{align}
\Done(\rho_{AB}) \geq I(A\rangle B)_\rho,\label{eq:hashing-bound}
\end{align} 
where the coherent information is defined as $I(A\rangle B)_\rho \coloneqq S(B)_\rho - S(AB)_\rho$, with the von Neumann entropy $S(A)_\rho \coloneqq -\tr(\rho_A\log\rho_A)$.
Furthermore, they derived the following regularized formulae for the distillable entanglement under one-way and two-way LOCC \cite{DW05}:
\begin{align}
\Done(\rho_{AB}) &= \lim_{n\to\infty}\frac{1}{n} \Do(\rho_{AB}^{\ox n}) \label{eq:one-way-dist-entanglement}\\ 
\Dtwo(\rho_{AB}) &= \lim_{n\to\infty}\frac{1}{n} \Dt(\rho_{AB}^{\ox n}).\label{eq:two-way-dist-entanglement}
\end{align}
Here, $D^{(1)}_*(\cdot)$ for $*\in\lbrace \rightarrow, \leftrightarrow\rbrace$ is defined as
\begin{align}
D^{(1)}_*(\rho_{AB}) \coloneqq \max_{\Lambda\colon AB \to A'B'} I(A'\rangle B')_{\Lambda(\rho)},
\end{align}
where the maximization is over one-way and two-way LOCC operations $\Lambda\colon AB\to A'B'$, respectively.

Similar to the quantum capacity, the regularizations in \eqref{eq:one-way-dist-entanglement} and \eqref{eq:two-way-dist-entanglement} render the distillable entanglement intractable to compute in most cases.
Hence, it is desirable to identify classes of bipartite states for which the formulae in \eqref{eq:one-way-dist-entanglement} and \eqref{eq:two-way-dist-entanglement} reduce to single-letter formulae that can be easily computed.
Moreover, we are interested in computable upper bounds on $\Done(\rho_{AB})$ and $\Dtwo(\rho_{AB})$ for arbitrary bipartite states.
We address both problems in the present paper.

\subsection{Method and main results}\label{sec:main-results}

To obtain computable upper bounds on the regularized formulae \eqref{eq:one-way-dist-entanglement} and \eqref{eq:two-way-dist-entanglement} for the distillable entanglement under one-way and two-way LOCC, we first identify classes of `useful' and `useless' states in both settings.
Here, we call a state $\rho_{AB}$ useful, if $D^{(1)}_*(\rho_{AB})$ is equal to the coherent information $I(A\rangle B)_\rho$ for $*\in\lbrace \rightarrow,\leftrightarrow\rbrace$, and thus additive on tensor products $\rho^{\ox n}_{AB}$.
It then follows immediately from \eqref{eq:one-way-dist-entanglement} and \eqref{eq:two-way-dist-entanglement} that also $D_*(\rho_{AB}) = I(A\rangle B)_\rho$.
In the one-way setting, the useful states are \emph{degradable states} ($\dgb$) (cf.~\Cref{def:degradable-states}), while in the two-way setting the useful states are
\emph{maximally correlated states} $(\mcs)$ (cf.~\Cref{def:mcs}).
Note that we have $\mcs\subseteq \dgb$.

On the other hand, useless states $\sigma_{AB}$ are such that $D^{(1)}_*(\sigma_{AB}^{\ox n})$ is zero for all $n\in\mathbb{N}$, from which $D_*(\sigma_{AB})=0$ follows. 
The class of useless states is given by \emph{antidegradable states} (cf.~\Cref{def:degradable-states}) in the one-way setting, and by states with positive partial transpose (or $\ppt$ states for short) in the two-way setting.
We list the four classes of states in \Cref{tab:useful-states} below.

\begin{table}[ht]

	\centering
	\begin{tabular}{ccc}
		\toprule
		& useful & useless\\
		\midrule \addlinespace[0.5em]
		1-way & $\dgb$ & $\adg$\\
		\addlinespace[0.5em]
		2-way & $\mcs$ & $\ppt$\\
		\bottomrule
	\end{tabular}\hspace{3em}
	\begin{tabular}{rl}
		$\dgb$ & degradable\\
		$\adg$ & antidegradable\\
		$\mcs$ & maximally correlated\\
		$\ppt$ & positive partial transpose
	\end{tabular}
	\caption{Useful and useless states for one-way and two-way entanglement distillation.}
	\label{tab:useful-states}
\end{table}

The crucial step in proving our main results is to observe that $D_*(\cdot)$ is convex on convex combinations of the corresponding useful and useless states.
This is proved in \Cref{prop:convexity-one-way} for the one-way setting by adapting an argument by \textcite{WP07}, and in \Cref{prop:convexity-two-way} for the two-way setting inspired by an argument by \textcite{Rai99}. 
Together with the known values of the distillable entanglement on useful and useless states (given by their coherent information and 0, respectively), this proves the upper bounds on the one-way distillable entanglement in \Cref{thm:one-way-dist-upper-bound} and on the two-way distillable entanglement in \Cref{thm:two-way-dist-upper-bound}, which constitute our main result.
We note that in both settings the class of useful states includes all pure quantum states (that is, every pure state is both degradable and maximally correlated).
Hence, any pure-state ensemble of a bipartite state yields a decomposition into useful states.
In both settings, the optimal such pure-state ensemble yields the entanglement of formation, and our upper bounds can be understood as an improvement over the latter.
Moreover, in the one-way setting our result can be straightforwardly extended to quantum channels, yielding an analogous upper bound on the quantum capacity of a quantum channel in \Cref{thm:quantum-capacity-upper-bound} that was first reported by \textcite{Yang}.

Finally, we focus on the distillable entanglement of isotropic states and Werner states.
Interpreting our upper bounds on the distillable entanglement as convex roof extensions allows us to use a result by \textcite{VW01} that exploits the symmetries of isotropic states and Werner states to facilitate the computation of the convex roof extension.
The result is a simplification of our upper bound to a (non-convex) optimization problem that can be solved numerically for small dimensions.
In particular, this yields an upper bound on the quantum capacity of the qubit depolarizing channel that is tighter than the best previously known upper bound for large values of the depolarizing parameter.

The rest of this paper is structured as follows.
We first fix some notation in \Cref{sec:notation}.
We then dedicate \Cref{sec:one-way} to developing the method outlined above for one-way entanglement distillation.
Furthermore, we introduce and discuss the notion of approximately (anti)degradable states in \Cref{sec:approx-degradability}, which is inspired by and analogous to the notion of approximately degradable quantum channels in \cite{SSWR15}.
The derivation of our main result for two-way entanglement distillation is carried out in \Cref{sec:two-way}.
Apart from the results mentioned above, we also discuss a method for constructing decompositions into maximally correlated states via the generalized Bell basis in \Cref{sec:blocking}.
In \Cref{sec:symmetries} we derive the non-convex optimization form of our upper bounds for isotropic and Werner states.
Finally, we give some concluding remarks in \Cref{sec:conclusion}.
%In \Cref{sec:pretty-strong-converse} we derive a `pretty strong' converse for the one-way distillable entanglement of a degradable state.
%This is the analogue for entanglement distillation of the pretty strong converse for entanglement generation using a degradable quantum channel obtained by \textcite{MW14}.
\Cref{sec:antidegradable-states} contains a discussion of antidegradable states and their maximal overlap with maximally entangled states.

\subsection{Notation}\label{sec:notation}
Throughout the paper we only consider finite-dimensional Hilbert spaces.
For Hilbert spaces $\cH_1$ and $\cH_2$, we denote by $\cB(\cH_1,\cH_2)$ the set of linear maps from $\cH_1$ to $\cH_2$, and we write $\cB(\cH) = \cB(\cH,\cH)$ for the algebra of linear operators on a single Hilbert space $\cH$.
Upper-case indices are used to label quantum systems: for a Hilbert space $\cH_A$ corresponding to a quantum system $A$, we write $|\psi\rangle_A\in\cH_A$ and $\rho_A\in\cB(\cH_A)$, and we use the notation $\cH_{A_1A_2\dots} \coloneqq \cH_{A_1}\ox\cH_{A_2}\ox\dots$.
We write $|A|\coloneqq \dim\cH_A$ for the dimension of a quantum system $A$ with associated Hilbert space $\cH_A$, and $\rk\rho_A$ for the rank of the operator $\rho_A$.
We use the shorthand $A\cong B$ to indicate that the Hilbert spaces associated to $A$ and $B$ are isomorphic, $\cH_A \cong \cH_B$.
A \emph{quantum state} (or simply state) is an operator $\rho_A\in\cB(\cH_A)$ with $\rho_A\geq 0$ and $\tr\rho_A = 1$, and we denote the set of states on $\cH_A$ by $\cD(\cH_A)$.
We write $\psi_A\equiv |\psi\rangle\langle \psi|_A\in\cB(\cH_A)$ for the rank-1 projector associated to the pure state $|\psi\rangle_A\in\cH_A$.

The \emph{von Neumann entropy} of a state $\rho_A$ is defined by $S(A)_\rho\coloneqq -\tr(\rho_A\log\rho_A)$, the \emph{coherent information} of a bipartite state $\rho_{AB}$ by $I(A\rangle B)_\rho\coloneqq S(B)_\rho - S(AB)_\rho$, and the \emph{conditional entropy} by $S(A|B)_\rho \coloneqq - I(A\rangle B)_\rho$.
For a probability distribution $\lbrace p_i\rbrace_i$, the \emph{Shannon entropy} is defined by $H(\lbrace p_i\rbrace_i)\coloneqq - \sum_i p_i \log p_i$.
For $p\in[0,1]$, the \emph{binary entropy} is defined by $h(p)\coloneqq -p\log p - (1-p)\log (1-p)$.
All exponentials and logarithms are taken to base $2$.

A quantum channel $\cN\colon \cB(\cH)\to\cB(\cK)$ is a linear, completely positive (CP), trace-preserving (TP) map between the algebras $\cB(\cH)$ and $\cB(\cK)$ of linear operators on Hilbert spaces $\cH$ and $\cK$.
We write $\cN\colon A\to B$ for a quantum channel from $\cB(\cH_A)$ to $\cB(\cH_B)$.
Let $\cN(\rho_A)=\tr_E(V\rho_A V^\dagger)$ be the Stinespring representation of $\cN$ with the isometry $V\colon \cH_{A}\to \cH_{B}\ox\cH_{E}$.
Then the complementary channel $\cN^c\colon A\to E$ is defined by $\cN^c(\rho_A)\coloneqq \tr_B(V\rho_A V^\dagger)$.
We often omit the identity map denoted by $\id$, i.e., for a map $T\colon A\to A'$ acting on the $A$ part of a state $\rho_{AB}$, we also write $T(\rho_{AB})$ instead of $(T\ox\id_B)(\rho_{AB})$.

Let $\lbrace |\Phi_{n,m}\rangle \rbrace_{n,m=0,\dots,d-1}$ be the generalized Bell basis defined as follows.
We define the generalized Pauli operators $X$ and $Z$ via their action on a fixed basis $\lbrace |k\rangle \rbrace_{k=0}^{d-1}$ of $\mathbb{C}^d$,
\begin{align}
X|k\rangle &\coloneqq |k+1 (\text{mod }d)\rangle & Z |k\rangle \coloneqq \omega^k |k\rangle, \label{eq:gen-pauli-operators}
\end{align}
where $\omega \coloneqq \exp(2\pi i/d)$ is a $d$-th root of unity.
The generalized Pauli operators satisfy $XZ = \omega ZX$.
Setting $|\Phi_+\rangle \coloneqq \frac{1}{\sqrt{d}}\sum_{i=0}^{d-1} |ii\rangle$, we define 
\begin{align}
|\Phi_{n,m}\rangle \coloneqq (\one_d \ox X^m Z^n) |\Phi_+\rangle,\label{eq:bell-states}
\end{align}
which satisfy $\langle \Phi_{n,m}|\Phi_{n',m'}\rangle = \delta_{n,n'}\delta_{m,m'}$.

Finally, for a vector $|\psi\rangle = \sum_{i,j} \lambda_{ij} |i\rangle_A \ox |j\rangle_B \in \cH_A\ox\cH_B$, we define an associated operator $\op(\psi_{AB})\in\cB(\cH_B,\cH_A)$ by
\begin{align}
\op(\psi_{AB}) \coloneqq \sum_{i,j} \lambda_{ij} |i\rangle_A \langle j|_B.\label{eq:vec-op}
\end{align}

\section{One-way entanglement distillation}\label{sec:one-way}

\subsection{Operational setting}
Given a mixed bipartite state $\rho_{AB}$, the \emph{one-way distillable entanglement} $\Done(\rho_{AB})$ is defined as the optimal rate of distilling ebits from many copies of $\rho_{AB}$ via local operations and forward (or one-way) classical communication (LOCC) from Alice to Bob.

A general one-way LOCC operation can be modeled as a \emph{quantum instrument} $T\colon A\to A'M$, defined by
\begin{align}
T(\theta_{A}) \coloneqq \sumi_m T_m(\theta_{A}) \ox |m\rangle\langle m|_M,
\end{align}
where $\lbrace |m\rangle\rbrace_m$ is an orthonormal basis for the classical register $M$, and for each $m$ the map $T_m\colon A\to A'$ is CP such that $\sum_m T_m$ is TP.

As mentioned in the introduction, \textcite{DW05} derived the following regularized formula for the one-way distillable entanglement:
\begin{align}
\Done(\rho_{AB}) &= \lim_{n\to\infty}\frac{1}{n} \Do(\rho_{AB}^{\ox n}),\label{eq:one-way-dist-entanglement2}
\end{align}
where $\Do(\rho_{AB})$ can be expressed as
\begin{align}
\Do(\rho_{AB})&\coloneqq \max_T \sumi_m \lambda_m I(A'\rangle B)_{\rho_m}.\label{eq:D-one}
\end{align} 
Here, the maximization is over instruments $T\colon A\to A'M$, and we set $\rho_m\coloneqq \frac{1}{\lambda_m}T_m(\rho_{AB})$, with $\lambda_m\coloneqq \tr(T_m(\rho_{AB}))$ denoting the probability of obtaining the outcome $m$ of $T$.
Equivalently, \eqref{eq:D-one} can be written as
\begin{align}
\Do(\rho_{AB}) = \max_T I(A'\rangle BM)_{T(\rho_{AB})}.\label{eq:D-one-compact}
\end{align}

Instead of maximizing the coherent information in \eqref{eq:D-one-compact} over instruments $T$, it can be more convenient to consider a maximization over isometric extensions of an instrument in the following way.
First, we note that it suffices to consider instruments $T = \sum_m T_m \ox |m\rangle \langle m|$ where each of the CP maps $T_m$ has only one Kraus operator, i.e., $T_m(\cdot) = K_m \cdot K_m^\dagger$ for each $m$ and operators $K_m\colon A\to A'$ \cite{DW05}.
In this case, an isometric extension $V\colon A\to A'MN$ can be defined as\footnote{In the general case where each TP map $T_m$ might have more than one Kraus operator, say $\lbrace K_{m,j}\rbrace_j$, an isometric extension can be defined by including an additional system $F$ with orthonormal basis $\lbrace |\phi_{m,j}\rangle_F\rbrace_{m,j}$ that acts as the environment for each $T_m$.}
\begin{align}
V \coloneqq \sum_m K_m \ox |m\rangle_M \ox |m\rangle_N\label{eq:instrument-isometry}
\end{align}
for a classical register $N\cong M$.
Since $\sum_m T_m = \sum_m K_m\cdot K_m^\dagger$ is TP by definition of a quantum instrument, we have $\sum_m K_m^\dagger K_m = \one_A$, which implies $V^\dagger V = \one_A$.
Hence, $V$ is indeed an isometry, and we have $T(\rho_A) = \tr_N(V\rho_A V^\dagger)$ for all $\rho_A$.
Using \eqref{eq:instrument-isometry}, we can write \eqref{eq:D-one-compact} as
\begin{align}\label{eq:D-one-instrument}
\Do(\rho_{AB}) = \max_V I(A'\rangle BM)_{\omega},
\end{align}
where $\omega_{A'BM} = \tr_N(V \rho_{AB} V^\dagger )$.

\begin{lemma}
\label{lem:positivity-of-D-one}
$\Do(\rho_{AB})\geq 0$ for all bipartite states $\rho_{AB}$.
\end{lemma}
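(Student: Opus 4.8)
The plan is to show that the maximization in \eqref{eq:D-one-compact} always admits a choice of instrument $T$ achieving a coherent information of at least $0$, so that the maximum is non-negative. The simplest such choice is the trivial instrument, and the cleanest route is to exhibit one concrete $T$ for which $I(A'\rangle BM)_{T(\rho_{AB})} \geq 0$.

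First I would take $T$ to be the instrument that discards the $A$ system entirely and replaces it with a fixed pure state. Concretely, let $T\colon A\to A'M$ be defined by $T(\theta_A) \coloneqq \tr(\theta_A)\,\psi_{A'} \ox |0\rangle\langle 0|_M$, where $|\psi\rangle_{A'}$ is any fixed pure state on a one-dimensional (or arbitrary) system $A'$ and $M$ carries the single trivial outcome $|0\rangle$. This is a legitimate instrument: it is CP and TP, with a single Kraus branch. Applying it to $\rho_{AB}$ yields $\omega_{A'BM} = \psi_{A'}\ox \rho_B \ox |0\rangle\langle 0|_M$, a product state across the cut $A'\,|\,BM$.

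The key step is then to compute the coherent information of this product output. Since $\omega$ factorizes as $\psi_{A'}\ox(\rho_B\ox|0\rangle\langle0|_M)$, its entropies are additive: $S(A'BM)_\omega = S(A')_\psi + S(BM)_\omega$, and because $|\psi\rangle$ is pure, $S(A')_\psi = 0$. Hence
\begin{align}
I(A'\rangle BM)_\omega = S(BM)_\omega - S(A'BM)_\omega = S(BM)_\omega - S(BM)_\omega = 0.
\end{align}
Therefore $I(A'\rangle BM)_\omega = 0$ is attained by this particular instrument, and since $\Do(\rho_{AB})$ is the maximum over all instruments, we conclude $\Do(\rho_{AB})\geq 0$.

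I expect no real obstacle here, since the argument is a one-line reduction to a product state. The only point requiring a moment of care is verifying that the chosen map genuinely satisfies the instrument constraints (CP and $\sum_m T_m$ being TP) as set out in the definition of $\Do$ in \eqref{eq:D-one}; with a single outcome this is immediate. An alternative, equally valid choice is the identity instrument followed by noting that $I(A'\rangle BM)$ at the trivial (no-communication) branch reduces to a quantity one can lower-bound directly, but the discard-and-replace instrument above gives the cleanest explicit witness of non-negativity.
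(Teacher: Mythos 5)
Your proof is correct, but it takes a genuinely different (and more elementary) route than the paper. The paper works with the isometric-extension formulation \eqref{eq:D-one-instrument}: it takes a purification $|\phi\rangle_{ABE}$, measures $A$ in its Schmidt basis with respect to the $A|BE$ cut via the isometry $V = \sum_i |i\rangle_{A'}\langle i|_A \ox |i\rangle_M \ox |i\rangle_N$, and then shows $I(A'\rangle BM)_\omega = 0$ by an entropy computation on the resulting pure multipartite state. You instead exhibit the trace-and-replace instrument $T(\theta_A) = \tr(\theta_A)\,\psi_{A'}\ox|0\rangle\langle 0|_M$, whose output is product across the $A'\,|\,BM$ cut, making $I(A'\rangle BM)_\omega = 0$ immediate from additivity of entropy and purity of $\psi_{A'}$. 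Your instrument is manifestly CP and TP, so it is admissible in the maximization \eqref{eq:D-one-compact}, and the conclusion follows. One small point worth noting: your single branch $T_0(\theta)=\tr(\theta)\psi_{A'}$ has $|A|$ Kraus operators $\lbrace |\psi\rangle_{A'}\langle i|_A\rbrace_i$, so it does not directly fit the single-Kraus-per-branch form \eqref{eq:instrument-isometry} that the paper uses in \eqref{eq:D-one-instrument}; this is harmless here because \eqref{eq:D-one-compact} maximizes over all instruments, and in any case you could refine your instrument into the measure-and-replace instrument with branches $K_i = |\psi\rangle_{A'}\langle i|_A$, which has single-Kraus branches and still produces a state that is product across $A'\,|\,BM$, so the same computation applies. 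What the paper's heavier construction buys is a warm-up for the isometry machinery reused in \Cref{prop:dist-ent-single-letter} and \Cref{lem:antideg-distill}; what yours buys is a one-line witness of non-negativity that requires no purification or Schmidt decomposition at all.
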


\begin{proof}
Since $\Do(\rho_{AB})$ can be expressed as a maximization over all instrument isometries $V$ of the form \eqref{eq:instrument-isometry} as stated in \eqref{eq:D-one-instrument}, the lemma is proved by constructing a particular $V$ for which we obtain $I(A'\rangle BM)_{\omega} =0$ with $\omega_{A'BM} = \tr_N(V\rho_{AB}V^\dagger)$.

To this end, let $|\phi\rangle_{ABE}$ be a purification of $\rho_{AB}$, and consider a Schmidt decomposition of $|\phi\rangle_{ABE}$ with respect to the bipartition $A|BE$,
\begin{align}
|\phi\rangle_{ABE} \coloneqq \sumi_i \lambda_i |i\rangle_A |i\rangle_{BE},
\end{align}
where the Schmidt coefficients $\lambda_i\geq 0$ for all $i$.
We define the instrument isometry $V\colon A\to A'MN$, 
\begin{align}
V \coloneqq \sumi_i |i\rangle_{A'}\langle i|_A \ox |i\rangle_M \ox |i\rangle_N,
\end{align}
where $A'\cong A$.
Applying $V$ to the purification $|\phi\rangle_{ABE}$ of $\rho_{AB}$, we obtain the pure state
\begin{align}
|\omega\rangle_{A'MNBE} &= V|\phi\rangle_{ABE} = \sumi_i \lambda_i |iii\rangle_{A'MN}|i\rangle_{BE},
\end{align}
whose marginals $\omega_{BM}$ and $\omega_{EN}$ are given by
\begin{align}
\omega_{BM} &= \sumi_i \lambda_i^2 |i\rangle\langle i|_M \ox \tr_E|i\rangle\langle i|_{BE} & \omega_{EN} &= \sumi_i \lambda_i^2 |i\rangle\langle i|_N \ox \tr_B|i\rangle\langle i|_{BE}.
\end{align}
Evaluating the coherent information of the state $\omega_{A'BM}$ yields
\begin{align}
I(A'\rangle BM)_\omega &= S(BM)_\omega - S(A'BM)_\omega\\
&= S(BM)_\omega - S(EN)_\omega\\
&= \sumi_i \lambda_i^2 \left(S(\tr_E|i\rangle\langle i|_{BE}) - S(\tr_B|i\rangle\langle i|_{BE})\right)\\
&= 0,
\end{align}
which proves the claim.
\end{proof}

\subsection{(Conjugate) degradable and antidegradable states}\label{sec:definitions}

We now define the classes of `useful' and `useless' states for one-way entanglement distillation, as explained in \Cref{sec:main-results}.

\begin{definition}\label{def:degradable-states}
Let $\rho_{AB}$ be a bipartite state with purification $|\phi\rangle_{ABE}$.
The state $\rho_{AB}$ is called:
\begin{enumerate}[{\normalfont (i)}]
\item \emph{degradable}, if there is an isometry $U\colon B\to E' G$ with $E'\cong E$ such that for the state $|\varphi\rangle_{AE' GE} = U |\phi\rangle_{ABE}$ we have
\begin{align}\label{eq:degradability}
\varphi_{AE} = \varphi_{AE'} = \phi_{AE};
\end{align}

\item \emph{conjugate degradable}, if \eqref{eq:degradability} holds up to complex conjugation, that is,
\begin{align}
\varphi_{AE} = \cC(\varphi_{AE'}) = \phi_{AE},
\end{align}
where $\cC$ denotes entry-wise complex conjugation with respect to a fixed basis of $E'\cong E$;

\item \emph{antidegradable}, if there is an isometry $V\colon E\to B'F$ with $B'\cong B$ such that for the state $|\psi\rangle_{ABB'F} =  V|\phi\rangle_{ABE}$ we have
\begin{align}
\psi_{AB'} = \psi_{AB} = \phi_{AB}.
\end{align}
\end{enumerate} 
\end{definition}

We note that \Cref{def:degradable-states} is independent of the chosen purification of $\rho_{AB}$, since any two purifications of $\rho_{AB}$ are related by an isometry acting only on the purifying systems.
We can then compose the (conjugate) (anti)degrading isometries from \Cref{def:degradable-states} with the isometry relating the different purifications.

The coherent information of a degradable state $\rho_{AB}$ is non-negative, $I(A\rangle B)_\rho\geq 0$, since
\begin{align}
I(A\rangle B)_\rho = I(A\rangle E'G)_\vphi \geq I(A\rangle E')_\vphi = I(A\rangle E)_\phi = - I(A\rangle B)_\rho,
\end{align}
where we used the data processing inequality for the coherent information in the first inequality, and the duality relation $I(A\rangle B)_\psi = - I(A\rangle E)_\psi$ for a pure state $|\psi\rangle_{ABE}$ in the last equality.
Using a similar argument, an antidegradable state $\sigma_{AB}$ has non-positive coherent information, $I(A\rangle B)_\sigma\leq 0$.
Symmetric states, which are both degradable and antidegradable, therefore have zero coherent information.

Every pure state $|\psi\rangle_{AB}$ is degradable, which can be seen by choosing an arbitrary purification $|\phi\rangle_{ABE} = |\psi\rangle_{AB} \ox |\chi\rangle_E$ with some pure state $|\chi\rangle_E$, and considering the isometry $U$ defined by $U |\theta\rangle_B \coloneqq |\theta\rangle_B \ox |\chi\rangle_E$.
A large class of mixed (conjugate) (anti)degradable states can be obtained from (conjugate) (anti)degradable quantum channels.
We call a quantum channel $\cN\colon A\to B$ \emph{degradable}, if there exists a quantum channel $\cD\colon B\to E$ (called a degrading map) such that 
\begin{align}\label{eq:degradable-channel}
\cN^c = \cD\circ\cN.
\end{align}
The channel $\cN$ is \emph{conjugate degradable} \cite{BDHM10}, if instead of \eqref{eq:degradable-channel} we have
\begin{align}\label{eq:conjugate-degradable-channel}
\cC \circ \cN^c = \cD\circ\cN,
\end{align}
where $\cC$ denotes entry-wise complex conjugation with respect to a fixed basis as in \Cref{def:degradable-states}.
Finally, a channel $\cN$ is called \emph{antidegradable}, if there exists a quantum channel $\cA\colon E\to B$ (called an antidegrading map) such that
\begin{align}
\cN = \cA\circ \cN^c.
\end{align}
Let now $|\Phi\rangle_{A'A}$ be a maximally entangled state between $A'\cong A$ and $A$, then the \emph{Choi state} $\tau_{A'B}$ of $\cN\colon A\to B$ is defined as
\begin{align}\label{eq:choi-state}
\tau_{A'B} \coloneqq \cN(\Phi_{A'A}).
\end{align}
Similarly, we define the Choi state $\tau_{A'E}= \cN^c(\Phi_{A'A})$ of the complementary channel $\cN^c$.
The following result is obvious:
\begin{lemma}\label{lem:deg-states-channels}
Let $\cN\colon A\to B$ be a quantum channel.
Then the Choi state $\tau_{A'B}$ as defined in \eqref{eq:choi-state} is (conjugate) (anti)degradable if and only if $\cN$ is (conjugate) (anti)degradable.
\end{lemma}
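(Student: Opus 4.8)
The plan is to exploit the Choi--Jamio\l{}kowski isomorphism, which turns the channel-level identities defining (conjugate) (anti)degradability into the state-level marginal conditions of \Cref{def:degradable-states}, and vice versa. The key observation is that applying a Stinespring isometry $V\colon\cH_A\to\cH_B\ox\cH_E$ of $\cN$ to the $A$-part of the maximally entangled state $|\Phi\rangle_{A'A}$ yields a distinguished purification $|\phi\rangle_{A'BE}\coloneqq(\id_{A'}\ox V)|\Phi\rangle_{A'A}$ of the Choi state $\tau_{A'B}$, whose marginals are precisely $\phi_{A'B}=\tau_{A'B}$ and $\phi_{A'E}=\tau_{A'E}$. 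With $A'$ playing the role of the ``$A$''-system in \Cref{def:degradable-states}, the two directions of the equivalence then match up as follows.

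For the direction ``$\cN$ degradable $\Rightarrow\tau_{A'B}$ degradable'': given a degrading map $\cD\colon B\to E$ with $\cN^c=\cD\circ\cN$, I would take a Stinespring isometry $U\colon\cH_B\to\cH_{E'}\ox\cH_G$ of $\cD$ (with $E'\cong E$) as the degrading isometry for the state. Setting $|\varphi\rangle_{A'E'GE}=U|\phi\rangle_{A'BE}$, the marginal satisfies $\varphi_{A'E'}=(\id\ox\cD)(\tau_{A'B})=(\id\ox(\cD\circ\cN))(\Phi_{A'A})=\tau_{A'E}$, while $\varphi_{A'E}=\phi_{A'E}=\tau_{A'E}$ since $U$ acts only on $B$. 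This gives exactly $\varphi_{A'E}=\varphi_{A'E'}=\phi_{A'E}$, i.e.\ degradability of the state. Conversely, given a degrading isometry $U\colon B\to E'G$ for the Choi state, I would define $\cD\colon B\to E$ by $\cD(\cdot)\coloneqq\tr_G(U(\cdot)U^\dagger)$ (under $E'\cong E$); the state condition $\varphi_{A'E'}=\phi_{A'E}$ reads $(\id\ox\cD)(\tau_{A'B})=\tau_{A'E}$, that is $(\cD\circ\cN)(\Phi_{A'A})=\cN^c(\Phi_{A'A})$, and since $|\Phi\rangle_{A'A}$ is maximally entangled the Choi--Jamio\l{}kowski isomorphism forces $\cD\circ\cN=\cN^c$.

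The conjugate degradable and antidegradable cases proceed by the identical two-step scheme with only cosmetic changes. For conjugate degradability one inserts the complex conjugation $\cC$ in the appropriate place, matching $\cC\circ\cN^c=\cD\circ\cN$ against $\varphi_{A'E}=\cC(\varphi_{A'E'})$ and using that $\cC$ is an involution. For antidegradability one interchanges the roles of $B$ and $E$: the antidegrading channel $\cA\colon E\to B$ corresponds to an isometry acting on the $E$-part of $|\phi\rangle_{A'BE}$, and the condition $\cN=\cA\circ\cN^c$ translates into $\psi_{A'B'}=\psi_{A'B}=\phi_{A'B}$ via the same Choi-state comparison.

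I expect no serious obstacle, consistent with the lemma being labelled obvious; the only point requiring care is the bookkeeping of the isometry identifications, namely checking that the Stinespring dilation spaces of the (anti)degrading map can indeed be identified with the ancillary systems $E',G$ (resp.\ $B',F$) appearing in \Cref{def:degradable-states}, and invoking the fact that \Cref{def:degradable-states} is purification-independent (as already noted after the definition) so that the specific purification $|\phi\rangle_{A'BE}$ may be used without loss of generality.
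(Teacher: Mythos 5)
Your proof is correct and is precisely the routine Choi--Jamio\l{}kowski/Stinespring verification the paper has in mind: the paper offers no proof at all (it declares the lemma ``obvious'', relying implicitly on the correspondence between degrading isometries and degrading maps sketched in the remark after \Cref{def:degradable-states}), and your argument---using the canonical purification $(\id_{A'}\ox V)|\Phi\rangle_{A'A}$ of the Choi state, Stinespring dilations of the (anti)degrading maps, injectivity of the Choi map, and purification-independence of the definition---fills in exactly the intended details, including the correct handling of $\cC$ acting only on $E'$ in the conjugate case.
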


Finally, we note that we occasionally simplify \Cref{def:degradable-states} to the following (equivalent) form, which is closely related to the channel picture above: a state $\rho_{AB}$ with purification $|\phi\rangle_{ABE}$ and `complementary state' $\rho_{AE} \coloneqq \tr_B\phi_{ABE}$ is degradable if there exists a CPTP degrading map $\cD\colon B\to E$ such that $\rho_{AE} = \cD(\rho_{AB})$.
In this case, the degrading isometry $U$ from \Cref{def:degradable-states} can be chosen as the Stinespring isometry (cf.~\Cref{sec:notation}) of $\cD$.
Conversely, every degrading isometry as in \Cref{def:degradable-states} gives rise to a degrading map by defining $\cD(\cdot)\coloneqq \tr_G(V\cdot V^\dagger)$ and identifying $E$ with $E'$.
We also use analogous simplifications in the case of conjugate degradability and antidegradability.

\subsection{Upper bounds on the one-way distillable entanglement}\label{sec:upper-bounds-one-way}

The hashing bound \eqref{eq:hashing-bound} states that for any state $\rho_{AB}$ the coherent information $I(A\rangle B)_\rho$ is an achievable rate for one-way entanglement distillation.
The first result of this section shows that for (conjugate) degradable states the coherent information is the \emph{optimal} rate for entanglement distillation:

\begin{proposition}\label{prop:dist-ent-single-letter}
Let $\rho_{AB}$ be a (conjugate) degradable state.
Then $\Do(\rho_{AB})$ is equal to the coherent information $I(A\rangle B)_\rho$ and thus additive: for all $n\in\mathbb{N}$,
\begin{align}
\Do(\rho_{AB}^{\ox n}) = n \Do(\rho_{AB}) = nI(A\rangle B)_\rho. 
\end{align}
Hence, the one-way distillable entanglement of $\rho_{AB}$ is equal to the coherent information,
\begin{align}\label{eq:dist-ent-single-letter}
\Done(\rho_{AB}) = I(A\rangle B)_\rho.
\end{align}
\end{proposition}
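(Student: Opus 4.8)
The plan is to prove the statement in three steps, establishing single-letterization, additivity, and then the final formula. The key insight is that for a degradable state, the upper bound $\Do(\rho_{AB}) \leq I(A\rangle B)_\rho$ holds, and combined with the hashing bound \eqref{eq:hashing-bound} this gives equality. Since the hashing bound already gives $\Done(\rho_{AB}) \geq I(A\rangle B)_\rho$, the real work is the matching upper bound on $\Do$.

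First I would prove $\Do(\rho_{AB}) \leq I(A\rangle B)_\rho$ for degradable $\rho_{AB}$. Using the instrument-isometry formulation \eqref{eq:D-one-instrument}, fix an arbitrary instrument isometry $V\colon A\to A'MN$ with $\omega_{A'BM} = \tr_N(V\rho_{AB}V^\dagger)$. Let $|\phi\rangle_{ABE}$ be a purification of $\rho_{AB}$, and apply $V$ to obtain a pure state $|\omega\rangle_{A'BMNE}$. The degrading isometry $U\colon B\to E'G$ produces a state on which the coherent information can be bounded. The strategy is to use the data-processing inequality: since the degrading map $\cD$ takes $B$ to $E$, one expects a symmetry/duality argument showing $I(A'\rangle BM)_\omega \leq I(A\rangle B)_\rho$. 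Concretely, I would exploit that for the pure global state the coherent information $I(A'\rangle BM)$ equals $-I(A'\rangle NE)$, and that the instrument acts only on $A$, so tracing out the register $M$ and using monotonicity of coherent information under the local processing on $B$ (the degrading map) controls the bound. The conjugate-degradable case follows identically, since complex conjugation preserves all the relevant entropies.

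The main obstacle I anticipate is correctly combining the degrading map with the instrument so that data processing applies in the right direction. The subtlety is that the instrument acts on $A$ while degradability is a statement about processing $B$ into (a copy of) $E$; one must arrange the purification and the isometries so that applying the degrading map $\cD\colon B\to E$ after the instrument yields a comparison state whose coherent information is manifestly $I(A\rangle B)_\rho$ or its negative. This is where adapting the argument of \textcite{WP07} (as the paper indicates) becomes essential: the trick is to realize $\omega_{A'BM}$ as arising from a channel to which degradability can be lifted, so that the single-letter bound for degradable channels transfers to the state setting.

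Once $\Do(\rho_{AB}) = I(A\rangle B)_\rho$ is established, additivity is immediate: degradability is preserved under tensor products (the tensor product of degrading isometries degrades $\rho_{AB}^{\ox n}$), so the same single-letter identity applies to $\rho_{AB}^{\ox n}$, giving
\begin{align}
\Do(\rho_{AB}^{\ox n}) = I(A\rangle B)_{\rho^{\ox n}} = n\,I(A\rangle B)_\rho = n\,\Do(\rho_{AB}),
\end{align}
where the middle equality uses additivity of coherent information on tensor products. Finally, substituting this into the regularized formula \eqref{eq:one-way-dist-entanglement2} yields
\begin{align}
\Done(\rho_{AB}) = \lim_{n\to\infty}\frac{1}{n}\Do(\rho_{AB}^{\ox n}) = \frac{1}{n}\cdot n\,I(A\rangle B)_\rho = I(A\rangle B)_\rho,
\end{align}
which is \eqref{eq:dist-ent-single-letter} and completes the proof.
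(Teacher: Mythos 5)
Your high-level skeleton matches the paper's strategy exactly: the lower bound is trivial (the trivial instrument already achieves $I(A\rangle B)_\rho$ inside the maximization defining $\Do$), the real content is the matching upper bound $\Do(\rho_{AB})\leq I(A\rangle B)_\rho$, additivity then follows because tensor products of degradable states are degradable and coherent information is additive on tensor products, and the regularized formula \eqref{eq:one-way-dist-entanglement2} finishes the proof. But the central step is precisely where your proposal stops being a proof: you list the ingredients (purification, degrading isometry, duality, data processing) and then concede you do not know how to combine them, deferring to ``adapting \textcite{WP07}.'' The missing idea is a specific structural fact you never invoke: the instrument isometry $V=\sum_m K_m\ox|m\rangle_M\ox|m\rangle_N$ is symmetric under exchanging the registers $M$ and $N$, so for the global pure state $|\sigma\rangle = WV|\phi\rangle_{ABE}$ (with $W\colon B\to E'G$ the degrading isometry) one has $S(NE)_\sigma=S(ME)_\sigma$. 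The paper's chain then reads: $I(A'\rangle BM)_\omega = S(E'GM)_\sigma-S(A'E'GM)_\sigma = S(E'GM)-S(NE)$ by purity, $=S(E'GM)-S(ME)$ by the $M\leftrightarrow N$ symmetry, $=S(E'GM)-S(ME')$ by degradability, $=S(G|E'M)\leq S(G|E')$ by strong subadditivity, after which degradability ($S(E')=S(E)$) and purity again convert $S(GE')-S(E')$ into $S(B)_\rho-S(AB)_\rho$. Your proposed alternatives do not substitute for this: ``tracing out the register $M$'' discards exactly the system the bound must retain, and the duality $I(A'\rangle BM)_\omega=-I(A'\rangle NE)_\omega$ by itself only converts the problem into upper-bounding $-I(A'\rangle NE)$, which again requires the $M\leftrightarrow N$ symmetry combined with degradability.

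Your fallback suggestion --- realizing $\omega_{A'BM}$ as the output of a degradable \emph{channel} so as to import the channel single-letter bound --- would also fail as stated: the instrument acts on Alice's system $A$, not on the input of any channel, and a bipartite state corresponds to a quantum channel via the Choi correspondence only when $\rho_A$ is maximally mixed; for a general degradable state there is no trace-preserving channel whose degradability is equivalent, which is why the paper argues directly at the level of states rather than lifting to channels. Finally, your one-line dispatch of the conjugate-degradable case (``conjugation preserves all the relevant entropies'') is essentially right but incomplete: one needs the observation that the marginals $\sigma_{ME}$ and $\sigma_{ME'}$ are classical-quantum states whose conditional states are related by entrywise complex conjugation, so that entropy invariance under conjugation can be applied blockwise at the two places where degradability enters the chain. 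The remaining steps --- additivity via tensor products of degrading isometries and the passage to the limit --- are correct as you wrote them.
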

\begin{proof}
Let us first assume that $\rho_{AB}$ is degradable, that is, we have
\begin{align}\label{eq:degradable}
\varphi_{AE} = \varphi_{AE'}
\end{align}
where $|\varphi\rangle_{AE'GE} = W|\phi\rangle_{ABE}$ and $W\colon B\to E'G$ is a degrading isometry.
Let us furthermore define the following pure states:
\begin{align}
|\omega\rangle_{A'MNBE} &\coloneqq V |\phi\rangle_{ABE} \label{eq:omega}\\
|\sigma\rangle_{A'MNE'GE} &\coloneqq W |\omega\rangle_{A'MNBE} = V|\varphi\rangle_{AE'GE},\label{eq:sigma}
\end{align}
where $V\colon A \to A'MN$ is given as in \eqref{eq:instrument-isometry}. 
Consider now the following steps:
\begin{align}
I(A' \rangle BM)_\omega &= I(A' \rangle E'GM)_\sigma\\
 &=S(E'GM)_\sigma - S(A'E'GM)_\sigma\\
&= S(E'GM)_\sigma - S(NE)_\sigma \\
&= S(E'GM)_\sigma - S(ME)_\sigma \\
&= S(E'GM)_\sigma - S(ME')_\sigma \label{eq:apply-deg-1}\\
&= S(G|E'M)_\sigma \\
&\leq S(G|E')_\sigma \\
&= S(GE')_\sigma - S(E')_\sigma \\
&=  S(GE')_\sigma - S(E)_\sigma \label{eq:apply-deg-2}\\
&=  S(GE')_\sigma - S(A'MNE'G)_\sigma \\
&= S(B)_\rho - S(AB)_\rho \\
&= I(A\rangle B)_\rho
\end{align}
where the third line follows from the fact that $\sigma_{A'MNE'GE}$ is a pure state, the fourth line follows from the symmetry in $M$ and $N$ (which is evident from the definition \eqref{eq:instrument-isometry} of the isometry $V$), the fifth line follows from the degradability \eqref{eq:degradable} of the state $\rho_{AB}$, the seventh line follows from the fact that conditioning reduces entropy, the ninth line follows again from the degradability of $\rho_{AB}$, and the tenth line follows from the fact that $\sigma_{A'MNE'GE}$ is pure.

Hence, the trivial isometry achieves the maximum in $\max_V I(A'\rangle BM)_\omega$, and $\Do(\rho_{AB}) = I(A\rangle B)_\rho$.
Since the coherent information is additive on tensor products, we have $\Do(\rho_{AB}^{\ox n}) = n I(A\rangle B)_\rho$, and \eqref{eq:dist-ent-single-letter} follows from \eqref{eq:one-way-dist-entanglement}.

If $\rho_{AB}$ is only conjugate degradable, \eqref{eq:degradable} is replaced by
\begin{align}\label{eq:conjguate-degradable}
\varphi_{AE} = \cC(\varphi_{AE'}),
\end{align}
where $\cC$ denotes entry-wise complex conjugation with respect to a fixed basis.
Note that both $\sigma_{ME'}$ and $\sigma_{ME}$ are classical-quantum states, that is, they are of the form $\sum_m p_m |m\rangle\langle m|_M\ox \tau_{E}^m$ and $\sum_m p_m |m\rangle\langle m|_M\ox \tau_{E'}^m$, respectively, with $\tau_E^m = \cC(\tau_{E'}^m)$ for all $m$.
Hence, we can use \eqref{eq:conjguate-degradable} instead of \eqref{eq:degradable} in steps \eqref{eq:apply-deg-1} and \eqref{eq:apply-deg-2} above, since the von Neumann entropy is invariant under complex conjugation.
This yields the claim in the case of conjugate degradability of $\rho_{AB}$.
\end{proof}

The following lemma shows the well-known fact (see e.g.~\cite{BDSW96}) that the one-way distillable entanglement of antidegradable states is $0$.
We provide a short proof for the sake of completeness.
\begin{lemma}\label{lem:antideg-distill}
Let $\sigma_{AB}$ be an antidegradable state.
Then $\Do(\sigma_{AB}) = 0$ and $D_\rightarrow(\sigma_{AB}) = 0$.
\end{lemma}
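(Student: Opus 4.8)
The plan is to show directly that $\Done(\sigma_{AB})=0$, from which $\Do(\sigma_{AB})=0$ follows because $\Do(\sigma_{AB})\le \Done(\sigma_{AB})$ by the regularization formula \eqref{eq:one-way-dist-entanglement2} together with superadditivity of $\Do$. Since the one-way distillable entanglement is always non-negative (by \Cref{lem:positivity-of-D-one} and the hashing bound), it suffices to establish the upper bound $\Done(\sigma_{AB})\le 0$. The natural route is the standard no-cloning / monogamy argument: an antidegradable state is, by definition, one whose $B$-system can be reconstructed by Bob from the environment $E$ alone. Concretely, if $|\phi\rangle_{ABE}$ purifies $\sigma_{AB}$ and $V\colon E\to B'F$ is the antidegrading isometry with $\psi_{AB'}=\phi_{AB}$, then Eve (holding $E$) can apply $V$ to produce a system $B'$ whose joint state with $A$ is identical to the original $\sigma_{AB}$.

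The key conceptual step is the following symmetry/monogamy observation. Suppose, for contradiction, that Alice and Bob could distill ebits from $\sigma_{AB}^{\ox n}$ at a positive rate using one-way LOCC. Because $\sigma_{AB}$ is antidegradable, Eve can simulate Bob's system: she applies the antidegrading map to her share of each copy to obtain a $B'$-system satisfying $\sigma_{AB'}=\sigma_{AB}$, and then runs exactly the same one-way LOCC protocol (the forward classical messages from Alice are broadcast and hence available to Eve as well). This would allow Alice to distill a maximally entangled state with \emph{both} Bob and Eve simultaneously, contradicting the monogamy of entanglement — a maximally entangled state $\Phi_+^{\ox m}$ between $A$ and $B$ is a pure state on $AB$, so it cannot be entangled with any third system, yet the purity of the protocol output would be violated. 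I would formalize this by tracking the global pure state through the protocol and invoking that the reduced state on $A$ of a near-maximally-entangled pair forces the output to be (approximately) pure on $AB$, leaving no correlations for the simulated $B'$.

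The cleanest way to carry this out, and the one I would actually write, is at the single-letter level using coherent information, paralleling the coherent-information proof of \Cref{prop:dist-ent-single-letter}. For any instrument isometry $V\colon A\to A'MN$ producing $\omega_{A'BM}=\tr_N(V\sigma_{AB}V^\dagger)$, I would show $I(A'\rangle BM)_\omega\le 0$: applying the antidegrading isometry to $E$ and using the duality $I(A'\rangle BM)=-I(A'\rangle E N)$ on the global pure state, together with data processing under the antidegrading map (which sends $E$ to a system containing a copy of $B$), yields $I(A'\rangle BM)_\omega \le I(A'\rangle B'M)_\omega = I(A'\rangle BM)_\omega$ only after exploiting that $B'$ plays the symmetric role of $B$, forcing the quantity to equal its own negative. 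This gives $\Do(\sigma_{AB})\le 0$ directly, and combined with $\Do\ge 0$ (\Cref{lem:positivity-of-D-one}) and the regularization \eqref{eq:one-way-dist-entanglement2}, both $\Do(\sigma_{AB})=0$ and $\Done(\sigma_{AB})=0$ follow.

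The main obstacle I anticipate is handling the classical register $M$ correctly in the symmetry argument: the forward classical message is shared between the real Bob and the simulated $B'$, so one must verify that the antidegrading map acts only on the quantum part $E$ and commutes with (leaves untouched) the classical broadcast, so that the symmetry $B\leftrightarrow B'$ genuinely forces $I(A'\rangle BM)_\omega = -I(A'\rangle BM)_\omega$. Making the interplay between the instrument structure and the antidegrading isometry precise — rather than the clean unconditioned case — is where the care is needed.
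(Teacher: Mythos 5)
Your coherent-information route is essentially the paper's own proof: for an arbitrary instrument isometry $V\colon A\to A'MN$, the paper chains data processing under the antidegrading map $\cA\colon E\to B$, duality on the global pure state, and the $M\leftrightarrow N$ symmetry of $V$ to obtain $I(A'\rangle BM)\leq I(A'\rangle EM)=-I(A'\rangle BN)=-I(A'\rangle BM)$, hence $\Do(\sigma_{AB})\leq 0$, and then concludes via \Cref{lem:positivity-of-D-one} together with the antidegradability of $\sigma_{AB}^{\ox n}$ (antidegrading map $\cA^{\ox n}$) --- exactly the ingredients you name. Two small touch-ups for the write-up: your displayed chain ``$I(A'\rangle BM)_\omega\le I(A'\rangle B'M)_\omega=I(A'\rangle BM)_\omega$'' is garbled as written (the chain must pass through $E$ and $N$ as above, with the $M\leftrightarrow N$ symmetry of $V$ doing the work you attribute to $B\leftrightarrow B'$), and to deduce $\Done(\sigma_{AB})=0$ from the regularization you should state explicitly that every tensor power $\sigma_{AB}^{\ox n}$ is itself antidegradable, so the single-letter bound applies to each of them.
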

\begin{proof}
Let $|\psi^\sigma\rangle_{ABE}$ be a purification of $\sigma_{AB}$, and denote by $\cA\colon E\to B$ the antidegrading map satisfying $\sigma_{AB} = \cA(\sigma_{AE})$, where $\sigma_{AE}=\tr_B\psi^\sigma$.
Let $V\colon A\to A'MN$ be an arbitrary isometry of the form \eqref{eq:instrument-isometry}, and consider the following steps for the coherent information evaluated on the state $V\sigma_{AB} V^\dagger$:
\begin{align}
I(A'\rangle BM) &\leq I(A'\rangle EM)\\
&= -I(A'\rangle BN)\\
&= - I(A'\rangle BM),
\end{align}
where we used data processing with respect to $\cA$ in the inequality, duality for the coherent information in the first equality, and symmetry in $M\leftrightarrow N$ in the second inequality.
It follows that $I(A'\rangle BM) \leq 0$ for any instrument isometry $V$, and hence, $\Do(\sigma_{AB})\leq 0$.
Together with \Cref{lem:positivity-of-D-one}, this proves $\Do(\sigma_{AB}) = 0$ for all antidegradable states $\sigma_{AB}$.
Since $\sigma_{AB}^{\ox n}$ is antidegradable for all $n\in\mathbb{N}$ (with the antidegrading map given by $\cA^{\ox n}$), we then have $D_\rightarrow(\sigma_{AB})=0$.
\end{proof}

We now derive a general upper bound on the one-way distillable entanglement of arbitrary, not necessarily degradable, bipartite states.
To this end, we first prove the following proposition, which shows that we can ignore the contributions from antidegradable states for the $\Do(\cdot)$ quantity.

\begin{proposition}\label{prop:forget-antidegradable}
Let $\rho_{A_1B_1}$ be degradable and $\sigma_{A_2B_2}$ be antidegradable.
Then
\begin{align}
\Do(\rho_{A_1B_1}\ox\sigma_{A_2B_2}) = \Do(\rho_{A_1B_1}).
\end{align}
\end{proposition}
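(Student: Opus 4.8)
The plan is to establish the two inequalities $\Do(\rho_{A_1B_1}\ox\sigma_{A_2B_2})\ge\Do(\rho_{A_1B_1})$ and $\Do(\rho_{A_1B_1}\ox\sigma_{A_2B_2})\le\Do(\rho_{A_1B_1})$ separately. The lower bound is the easy direction: $\Do$ is superadditive under tensor products, since running the optimal instruments for $\rho_{A_1B_1}$ and for $\sigma_{A_2B_2}$ in parallel yields a product output state, on which the coherent information is additive. Combined with $\Do(\sigma_{A_2B_2})=0$ from \Cref{lem:antideg-distill}, this gives $\Do(\rho\ox\sigma)\ge\Do(\rho)+\Do(\sigma)=\Do(\rho)$.

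The substance is the upper bound. By \eqref{eq:D-one-instrument} it suffices to show that for every instrument isometry $V\colon A_1A_2\to A'MN$ of the form \eqref{eq:instrument-isometry}, the output $\omega=\tr_N(V(\rho\ox\sigma)V^\dagger)$ satisfies $I(A'\rangle B_1B_2M)_\omega\le I(A_1\rangle B_1)_\rho$, the right-hand side being $\Do(\rho)$ by \Cref{prop:dist-ent-single-letter}. First I would fix purifications $|\phi\rangle_{A_1B_1E_1}$ of $\rho$ and $|\psi\rangle_{A_2B_2E_2}$ of $\sigma$, introduce the degrading isometry $W\colon B_1\to E_1'G_1$ of $\rho$ from \Cref{def:degradable-states}, and form the global pure state $|\Lambda\rangle \coloneqq WV(|\phi\rangle_{A_1B_1E_1}\ox|\psi\rangle_{A_2B_2E_2})$ on $A'MNE_1'G_1E_1B_2E_2$ (here $V$ acts on $A_1A_2$ and $W$ on $B_1$). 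Exactly as in the proof of \Cref{prop:dist-ent-single-letter}, repeated use of (i) purity/duality, (ii) the $M\leftrightarrow N$ symmetry of $V$, and (iii) the degradability relation $\varphi_{A_1E_1}=\varphi_{A_1E_1'}$, which allows one to exchange $E_1\leftrightarrow E_1'$ in any marginal of $\Lambda$ not involving $G_1$, should collapse the coherent information into a single conditional entropy, and I expect these steps to all be equalities:
\[
I(A'\rangle B_1B_2M)_\omega = S(A'|ME_1'E_2)_\Lambda .
\]

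It then remains to bound this conditional entropy. Since $M$ is classical and $N$ is a deterministic copy of it, one has $S(A'|ME_1'E_2)_\Lambda\le S(A'MN|E_1'E_2)_\Lambda=S(A_1A_2|E_1'E_2)$, the last equality because $A'MN$ is the image of $A_1A_2$ under the local isometry $V$. The state on $A_1A_2E_1'E_2$ is the product $\varphi_{A_1E_1'}\ox\psi_{A_2E_2}$, so the conditional entropy splits as $S(A_1|E_1')+S(A_2|E_2)$. Degradability gives $S(A_1|E_1')=S(A_1|E_1)_\rho=I(A_1\rangle B_1)_\rho$, while antidegradability of $\sigma$ gives $S(A_2|E_2)_\sigma=I(A_2\rangle B_2)_\sigma\le 0$, yielding $I(A'\rangle B_1B_2M)_\omega\le I(A_1\rangle B_1)_\rho$, as required.

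The hard part will be the entropy bookkeeping, and in particular getting the directions of the data-processing steps right. The naive move of using antidegradability to replace Bob's $B_2$ by $E_2$ directly inside $I(A'\rangle B_1B_2M)_\omega$ is valid but too lossy: it hands Bob the `good' half of $\sigma$ and overshoots the target by $-I(A_2\rangle B_2)_\sigma\ge 0$. The reason for routing through the identity $I(A'\rangle B_1B_2M)_\omega=S(A'|ME_1'E_2)_\Lambda$ is precisely that the contribution of $\sigma$ then survives as the conditional entropy $S(A_2|E_2)_\sigma$, which antidegradability forces to be \emph{nonpositive} rather than nonnegative; care is also needed to justify $S(A'|ME_1'E_2)_\Lambda\le S(A_1A_2|E_1'E_2)$ from the copy structure $N\cong M$ of the instrument isometry.
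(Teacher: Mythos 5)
Your lower-bound paragraph is fine, and your reduction identity $I(A'\rangle B_1B_2M)_\omega=S(A'|ME_1'E_2)_\Lambda$ is in fact correct, with every step an equality: writing $I(A'\rangle B_1B_2M)_\omega=S(E_1'G_1B_2M)_\Lambda-S(A'E_1'G_1B_2M)_\Lambda$, purity of $\Lambda$ converts these two terms into $S(A'NE_1E_2)_\Lambda$ and $S(NE_1E_2)_\Lambda$, the $M\leftrightarrow N$ symmetry of $V$ replaces $N$ by $M$, and since any marginal of $\Lambda$ containing $E_1$ but neither $G_1$ nor $E_1'$ is the image of $\varphi_{A_1E_1}\ox\psi_{A_2E_2}=\varphi_{A_1E_1'}\ox\psi_{A_2E_2}$ under a channel acting on $A_1A_2$ alone, one may exchange $E_1\leftrightarrow E_1'$ there, arriving at $S(A'ME_1'E_2)_\Lambda-S(ME_1'E_2)_\Lambda$. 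This is a genuinely different, and quite elegant, reduction compared with the paper's proof, which never isolates such an identity and instead works with a symmetrized antidegradable extension of $\sigma$.

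The argument breaks, however, at precisely the step you flagged: $S(A'|ME_1'E_2)_\Lambda\le S(A'MN|E_1'E_2)_\Lambda$ is not a valid entropy inequality, and the copy structure of $N$ does not rescue it. Classicality of $M$ in $\tr_N\Lambda$ does give $S(M|E_1'E_2)\ge 0$ and hence $S(A'|ME_1'E_2)_\Lambda\le S(A'M|E_1'E_2)_\Lambda$, but discarding $N$ amounts to dephasing (pinching) $M$, and pinching an unconditioned system can only \emph{increase} conditional entropy, so $S(A'M|E_1'E_2)_\Lambda\ge S(A'MN|E_1'E_2)_\Lambda$ --- the wrong direction. An explicit counterexample lies inside your hypotheses: take $\rho$ trivial, $\sigma_{A_2B_2}=\frac{1}{2}\one_{A_2}\ox|\chi\X\chi|_{B_2}$ (antidegradable, with purification $|\Phi^+\rangle_{A_2E_2}\ox|\chi\rangle_{B_2}$ and $I(A_2\rangle B_2)_\sigma=-1$), and the measuring instrument $K_m=|m\rangle_{A'}\langle m|_{A_2}$; then $|\Lambda\rangle=\frac{1}{\sqrt{2}}\sum_m |m\rangle_{A'}|mm\rangle_{MN}|m\rangle_{E_2}\ox|\chi\rangle_{B_2}$ satisfies $S(A'|ME_2)_\Lambda=0$ while $S(A'MN|E_2)_\Lambda=S(A_2|E_2)_\psi=-1$. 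The failure is structural rather than cosmetic: your chain, if valid, would prove $\Do(\rho\ox\sigma)\le I(A_1\rangle B_1)_\rho+I(A_2\rangle B_2)_\sigma$, which is strictly smaller than $\Do(\rho_{A_1B_1})$ whenever $I(A_2\rangle B_2)_\sigma<0$, contradicting the superadditivity you establish in your own first paragraph (and \Cref{lem:positivity-of-D-one}). So no alternative justification of that single inequality can succeed; antidegradability must enter through a mechanism that caps the contribution of $\sigma$ at $0$ rather than at its (negative) coherent information. This is exactly what the paper's proof supplies: it passes to the symmetrized extension $|\Omega\rangle$ built from the antidegrading isometry, the swap $\bbF_{B_2B_2'}$, and the flag qubit $C_B$, and then derives the self-bounding chain $I(E_2'C_B\rangle NE_1B_2')\le -I(E_2'C_B\rangle NE_1B_2')$, forcing that term to be nonpositive. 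Your approach would need an analogous ingredient in place of the false step.
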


\begin{proof}
We first observe that $\Do(\rho_{A_1B_1}\ox\sigma_{A_2B_2}) \geq \Do(\rho_{A_1B_1})$ holds for any two states $\rho_{A_1B_1}$ and $\sigma_{A_2B_2}$ not necessarily degradable and antidegradable. This follows from extending an optimal instrument for $\rho_{A_1B_1}$ trivially to $A_2$ and using the data processing inequality for the coherent information with respect to tracing out the $B_2$ system.

To prove the other inequality, let $\rho_{A_1B_1}$ with purification $|\psi^\rho\rangle_{A_1B_1E_1}$ be degradable with degrading map $\cD\colon B_1\to E_1$, and let $\sigma_{A_2B_2}$ with purification $|\psi^\sigma\rangle_{A_2B_2E_2}$ be antidegradable with antidegrading isometry $W\colon E_2\to B_2'E_2'$ such that $|\tau\rangle_{A_2B_2B_2'E_2'} \coloneqq W |\psi^\sigma\rangle_{A_2B_2E_2}$ satisfies
\begin{align}
\tau_{A_2B_2'} = \tau_{A_2B_2} = \sigma_{A_2B_2}.
\end{align}
Denoting by $\bbF_{B_2B_2'}$ the swap operator exchanging $B_2$ and $B_2'$, we define the state
\begin{align}
|\Omega\rangle _{A_2B_2B_2'E_2'C_B} \coloneqq \frac{1}{\sqrt{2}} \left(|\tau\rangle_{A_2B_2B_2'E_2'}\ox |0\rangle_{C_B} + \bbF_{B_2B_2'}|\tau\rangle_{A_2B_2B_2'E_2'}\ox |1\rangle_{C_B}\right)\!,
\end{align}
which satisfies $|\Omega\rangle = \bbF_{B_2B_2'}\ox X_{C_B}|\Omega\rangle$ and $\Omega_{A_2B_2} = \sigma_{A_2B_2}$.
Here, $X_{C_B}$ denotes the Pauli $X$ operator on the $C_B$ system.
Let $T\colon A_1A_2 \to A'M$ be an arbitrary instrument with isometry $V\colon A_1A_2 \to A'MN$, then we have
\begin{multline}
I(B_1; MB_2)_{T(\rho\ox \tau)} \geq I(E_1; N B_2')_{(\cD\circ T)(\rho\ox\tau)} \\ \Longleftrightarrow S(B_1) + S(MB_2) - S(E_1) - S(NB_2') \geq S(MB_1B_2) - S(NE_1B_2')\label{eq:dpi}
\end{multline}
by the data processing inequality for the mutual information with respect to $\cD$, and because
\begin{align}
S(MB_2)_{T(\rho\ox\tau)} = S(NB_2')_{T(\rho\ox\tau)}. \label{eq:MNB2}
\end{align}
Consider now the following steps:
\begin{align}
I(A' \rangle MB_1 B_2)_{T(\rho\ox\sigma)} &= S(MB_1B_2)_{T(\rho\ox\sigma)} - S(A'MB_1B_2)_{T(\rho\ox\sigma)}\\
&= S(MB_1B_2) - S(NE_1B_2'E_2'C_B)_{T(\rho\ox\tau)}\\
&= S(MB_1B_2) - S(NE_1B_2') + S(NE_1B_2') - S(NE_1B_2'E_2'C_B)\\
&\leq S(B_1) - S(E_1) + S(MB_2) - S(NB_2') + I(E_2'C_B\rangle NE_1B_2')\\
&= I(A_1\rangle B_1) + I(E_2'C_B\rangle NE_1B_2'),\label{eq:last-step}
\end{align}
where we used \eqref{eq:dpi} in the inequality, and once again \eqref{eq:MNB2} in the last equality.
For the second coherent information in \eqref{eq:last-step}, observe that
\begin{align}
I(E_2'C_B\rangle NE_1B_2') &\leq I(E_2'C_B\rangle N B_1 B_2')\\
&= I(E_2'C_B\rangle M B_1 B_2)\\
&= - I(E_2'C_B\rangle A' N E_1 B_2')\\
&\leq - I(E_2'C_B\rangle NE_1 B_2').
\end{align}
Here, the first and second inequality follow from the data processing inequality for the coherent information with respect to $\cD$ and partial trace over $A'$, respectively.
The second line follows from symmetry of $V(|\psi^\rho\rangle \ox |\Omega\rangle)$ in $M\leftrightarrow N$, and from the invariance of the coherent information under the local unitary $\mathbb{F}_{B_2B_2'}\ox X_{C_B}$.
Hence, $I(E_2'C_B\rangle NE_1B_2')\leq 0$, and \eqref{eq:last-step} yields
\begin{align}
\Do(\rho_{A_1B_1}\ox\sigma_{A_2B_2}) = \max_T I(A' \rangle MB_1 B_2) \leq I(A_1\rangle B_1) = \Do(\rho_{A_1B_1}),
\end{align}
which we set out to prove.
\end{proof}

The last ingredient for our general upper bound on the one-way distillable entanglement $\Done(\cdot)$ is \Cref{prop:convexity-one-way} below, which establishes that $\Done(\cdot)$ is convex on mixtures of states whose tensor products have subadditive $\Do(\cdot)$.
This result is analogous to the corresponding property of the quantum capacity proved by Wolf and Pérez-García \cite{WP07}, and our proof of \Cref{prop:convexity-one-way} closely follows the one given in \cite{WP07}.
We introduce the following notation: For a binary string $w^n = (w_1,\dots,w_n)\in\lbrace 0,1\rbrace^n$, we denote by $|w^n|\coloneqq |\lbrace i\colon w_i=1\rbrace|$ the \emph{Hamming weight} of $w^n$, i.e.~the number of $1$'s in $w^n$.
For states $\rho_0$ and $\rho_1$ and $w^n = (w_1,\dots,w_n)\in\lbrace 0,1\rbrace^n$, we set $\rho_{w^n}\coloneqq\rho_{w_1}\ox\dots\ox\rho_{w_n}$.
We then have the following:
\begin{proposition}\label{prop:convexity-one-way}
Let $\rho_0$ and $\rho_1$ be bipartite states on $AB$ satisfying
\begin{align}\label{eq:additivity-assumption}
\Do(\rho_{w^n}) \leq \sumi_i \Do(\rho_{w_i}) = (n-|w^n|) \Do(\rho_0) + |w^n| \Do(\rho_1)
\end{align}
for all $w^n\in\lbrace 0,1\rbrace^n$ and $n\in\mathbb{N}$.
Then for all $p\in [0,1]$,
\begin{align}
\Done(p\rho_0 + (1-p)\rho_1) \leq p \Done(\rho_0) + (1-p) \Done(\rho_1).
\end{align}
\end{proposition}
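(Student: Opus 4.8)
The plan is to imitate the Wolf--Pérez-García convexity argument: expand the $n$-fold tensor power of the mixture, attach to Bob's system a classical flag recording the binary ``type'' $w^n$ of each factor, and feed the subadditivity hypothesis \eqref{eq:additivity-assumption} in term by term. Writing $\rho \coloneqq p\rho_0 + (1-p)\rho_1$, first I would record the expansion
\[
\rho^{\ox n} = \sumi_{w^n \in \{0,1\}^n} q_{w^n}\, \rho_{w^n}, \qquad q_{w^n} \coloneqq p^{\,n-|w^n|}(1-p)^{|w^n|},
\]
so that $\{q_{w^n}\}$ is the product (binomial) distribution with $\sum_{w^n} q_{w^n} = 1$, and $\rho^{\ox n}$ is a convex mixture of the product states $\rho_{w^n}$.

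Next I would introduce the Bob-flagged state $\hat\rho^{(n)}_{ABC} \coloneqq \sumi_{w^n} q_{w^n}\, \rho_{w^n} \ox |w^n\rangle\langle w^n|_C$, where the classical register $C$ is held by Bob. Since $\rho^{\ox n}$ is recovered from $\hat\rho^{(n)}$ by Bob discarding $C$, taking any instrument optimal for $\rho^{\ox n}$, applying it to $\hat\rho^{(n)}$, and invoking data processing for the coherent information with respect to $\tr_C$ (giving Bob the extra register $C$ can only raise $I(A'\rangle\,\cdot\,M)$) yields $\Do(\rho^{\ox n}) \leq \Do(\hat\rho^{(n)})$.

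The substantive step is that a classical flag on Bob's side linearizes $\Do(\cdot)$. For any instrument $T\colon A\to A'M$, the state $T(\hat\rho^{(n)})$ is block diagonal in $C$, so $S(BCM)$ and $S(A'BCM)$ each split as $H(\{q_{w^n}\})$ plus the respective $C$-conditional averages; upon subtraction the flag entropy cancels and
\[
I(A'\rangle BCM)_{T(\hat\rho^{(n)})} = \sumi_{w^n} q_{w^n}\, I(A'\rangle BM)_{T(\rho_{w^n})} .
\]
Maximizing over $T$ and using $\max_T\sum \leq \sum\max_T$ then gives $\Do(\hat\rho^{(n)}) \leq \sumi_{w^n} q_{w^n}\,\Do(\rho_{w^n})$. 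This is where I expect the real care to be needed: one must set up the flagging so that the monotonicity inequality and this linearization hold simultaneously, and must note that Alice applies a single flag-blind instrument, which is precisely what forces the inequality (rather than equality) after maximizing.

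Finally I would insert \eqref{eq:additivity-assumption} to obtain
\[
\Do(\rho^{\ox n}) \leq \sumi_{w^n} q_{w^n}\bigl[(n-|w^n|)\,\Do(\rho_0) + |w^n|\,\Do(\rho_1)\bigr],
\]
and since $\sum_{w^n} q_{w^n}(n-|w^n|) = np$ and $\sum_{w^n} q_{w^n}|w^n| = n(1-p)$ are the binomial means, this collapses to $\Do(\rho^{\ox n}) \leq n\bigl(p\,\Do(\rho_0) + (1-p)\,\Do(\rho_1)\bigr)$. Dividing by $n$ and letting $n\to\infty$ gives $\Done(\rho) \leq p\,\Do(\rho_0) + (1-p)\,\Do(\rho_1)$ via \eqref{eq:one-way-dist-entanglement}. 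To upgrade the single-copy terms on the right to the regularized quantities, I would use that $\Do(\cdot)$ is superadditive under tensor products (run independent optimal instruments in parallel with Bob keeping both outcomes), whence by Fekete's lemma $\Do(\rho_i) \leq \Done(\rho_i)$ for $i=0,1$; substituting this yields $\Done(\rho) \leq p\,\Done(\rho_0) + (1-p)\,\Done(\rho_1)$ and completes the proof.
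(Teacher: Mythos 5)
Your proof is correct, and at the skeleton level it is the same Wolf--P\'erez-Garc\'ia argument the paper uses: binomial expansion of $\rho^{\ox n}$, term-by-term insertion of the subadditivity hypothesis \eqref{eq:additivity-assumption}, computation of the binomial means, and the regularization limit. The one genuine deviation is how you obtain the key inequality $I(A'\rangle B^nM)_{T(\rho^{\ox n})} \leq \sum_{w^n} q_{w^n}\, I(A'\rangle B^nM)_{T(\rho_{w^n})}$: the paper invokes convexity of the coherent information, deduced from joint convexity of the relative entropy via $I(A\rangle B)_\tau = D(\tau_{AB}\|\one_A\ox\tau_B)$, whereas you prove it by hand with a Bob-side classical flag $C$ recording the type $w^n$. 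Your exact identity $I(A'\rangle B^nCM)_{T(\hat\rho^{(n)})} = \sum_{w^n} q_{w^n}\, I(A'\rangle B^nM)_{T(\rho_{w^n})}$ for $C$-block-diagonal states, combined with data processing under $\tr_C$, is precisely a proof of that convexity statement, so the two mechanisms are equivalent; yours is self-contained, the paper's is shorter. Your detour through $\Do(\hat\rho^{(n)})$ is sound: the instrument acts only on $A^n$, so an optimal instrument for $\rho^{\ox n}$ is admissible for $\hat\rho^{(n)}$, and $\max_T \sum \leq \sum \max_T$ gives the inequality in the right direction. The finish also differs slightly, in your favor for economy: you only need $\Do(\rho_i) \leq \Done(\rho_i)$, which follows from superadditivity of $\Do$ alone (Fekete's lemma is not even required, since the existence of the limit defining $\Done$ is already granted by \eqref{eq:one-way-dist-entanglement}), whereas the paper observes that hypothesis \eqref{eq:additivity-assumption} forces the equality $\Done(\rho_i) = \Do(\rho_i)$; since $p$ and $1-p$ are nonnegative, your weaker one-sided substitution suffices for the claimed bound.
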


\begin{proof}
Let $n\in\mathbb{N}$, fix an instrument $T\colon A^n\to A'M$, and observe that we can write
\begin{align}
(p \rho_0 + (1-p)\rho_1)^{\ox n} = \sum_{w^n\in\lbrace 0,1\rbrace^n} p^{n-|w^n|} (1-p)^{|w^n|} \rho_{w^n}
\end{align}
using the notation introduced above.
Consider then the following steps:
\begin{align}
I(A'\rangle B^nM)_{T\left((p\rho_0 + (1-p)\rho_1)^{\ox n}\right)} &= I(A'\rangle B^nM)_{T\left(\sumi_{w^n} p^{n-|w^n|} (1-p)^{|w^n|} \rho_{w^n}\right)}\\
&= I(A'\rangle B^nM)_{\sumi_{w^n} p^{n-|w^n|} (1-p)^{|w^n|} T(\rho_{w^n})}\\
&\leq \sumi_{w^n} p^{n-|w^n|} (1-p)^{|w^n|} I(A'\rangle B^nM)_{ T(\rho_{w^n})},\label{eq:applied-convexity-of-coh-info}
\end{align}
where the second line follows from linearity of $T$, and in the last line we used convexity of the coherent information.
The latter in turn follows from joint convexity of the quantum relative entropy, defined for positive operators $\rho,\sigma$ with $\tr\rho=1$ as $D(\rho\|\sigma)\coloneqq \tr(\rho(\log\rho - \log\sigma))$ if $\supp\rho\subseteq\supp\sigma$, and set to $+\infty$ otherwise, and the fact that we can write $I(A\rangle B)_\tau = D(\tau_{AB}\|\one_A\ox\tau_B)$.
Maximizing both sides of \eqref{eq:applied-convexity-of-coh-info} over all instruments $T\colon A^n\to A'M$ and dividing by $n$, we obtain
\begin{align}
\frac{1}{n} \Do\left((p\rho_0 + (1-p)\rho_1)^{\ox n}\right) &\leq \frac{1}{n} \sumi_{w^n} p^{n-|w^n|} (1-p)^{|w^n|} \Do(\rho_{w^n})\\
&\leq \left[\frac{1}{n}\sumi_{w^n}(n-|w^n|)p^{n-|w^n|} (1-p)^{|w^n|}\right] \Do(\rho_0)\\
&\qquad  + \left[\frac{1}{n}\sumi_{w^n}|w^n| p^{n-|w^n|} (1-p)^{|w^n|}\right] \Do(\rho_1),\label{eq:separated}
\end{align}
where the last line follows from assumption \eqref{eq:additivity-assumption}.
Setting $j=|w^n|$, we have
\begin{align}
\frac{1}{n}\sumi_{w^n}|w^n| p^{n-|w^n|} (1-p)^{|w^n|} &= \frac{1}{n} \sum_{j=1}^n \binom{n}{j} j p^{n-j} (1-p)^j\\
&= \sum_{j=1}^{n} \binom{n-1}{j-1} p^{n-j} (1-p)^j\\
&= (1-p) \sum_{j=0}^{n-1} \binom{n-1}{j}p^{n-1-j}(1-p)^{j}\\
&= (1-p),
\end{align}
where we used the binomial identity $j\binom{n}{j}=n\binom{n-1}{j-1}$ in the second line, and the variable transformation $j\to j-1$ in the third line.
Similarly, we obtain 
\begin{align}
\frac{1}{n}\sumi_{w^n}(n-|w^n|)p^{n-|w^n|} (1-p)^{|w^n|} = p,
\end{align}
and taking the limit $n\to\infty$ in \eqref{eq:separated} yields
\begin{align}
\Done(p\rho_0 + (1-p)\rho_1) \leq p \Do(\rho_0) + (1-p)\Do(\rho_1).
\end{align}
The claim now follows from the fact that the subadditivity property \eqref{eq:additivity-assumption} implies $\Do(\rho_i^{\ox n}) = n\Do(\rho_i)$ for all $n\in\mathbb{N}$ and $i\in\lbrace 0,1\rbrace$, since we always have $\Do(\rho^{\ox n})\geq n \Do(\rho)$ for any arbitrary state $\rho$.
Hence, $\Done(\rho_i) = \Do(\rho_i)$ for $i\in\lbrace 0,1\rbrace$.
\end{proof}

If $\rho_0$ and $\rho_1$ are degradable, then the state $\rho_{w^n}$ is also degradable for any $w^n\in\lbrace 0,1\rbrace^n$ and $n\in\mathbb{N}$.
Hence, by \Cref{prop:dist-ent-single-letter} the assumption \eqref{eq:additivity-assumption} in \Cref{prop:convexity-one-way} is satisfied for degradable $\rho_0$ and $\rho_1$.
By \Cref{prop:forget-antidegradable}, the assumption \eqref{eq:additivity-assumption} is furthermore satisfied for tensor products of degradable and antidegradable states.

In summary, \Cref{prop:dist-ent-single-letter}, \Cref{prop:forget-antidegradable}, and \Cref{prop:convexity-one-way} prove that $\Done(\cdot)$ is convex on decompositions of an arbitrary bipartite state into degradable and antidegradable states.
Thus, we arrive at the upper bound advertised in \Cref{sec:main-results}, which we state in \Cref{thm:one-way-dist-upper-bound} below.
First, we recall the definition of the \emph{entanglement of formation} $E_F(\rho_{AB})$ of a bipartite state $\rho_{AB}$ \cite{BDSW96,BBP+96}:
\begin{align}\label{eq:eof}
E_F(\rho_{AB}) \coloneqq \min_{\lbrace p_i, \psi^i_{AB}\rbrace_i} \sumi_i p_i S(\psi^i_A),
\end{align}
where the minimization is over all pure-state ensembles $\lbrace p_i, \psi^i_{AB}\rbrace_i$ satisfying 
\begin{align}
\rho_{AB} = \sumi_i p_i |\psi^i\rangle\langle \psi^i|_{AB}.
\end{align}
\textcite{BDSW96} proved that $E_F(\rho_{AB})$ is an upper bound on the one-way distillable entanglement: $D_\rightarrow(\rho_{AB}) \leq E_F(\rho_{AB})$.
Our upper bound \Cref{thm:one-way-dist-upper-bound} below provides a refinement of this bound:

\begin{theorem}\label{thm:one-way-dist-upper-bound}
Let $\rho_{AB}$ be a bipartite state.
Then
\begin{align}
D_\rightarrow(\rho_{AB}) \leq \Eda(\rho_{AB}) \coloneqq \min \sum_{i=1}^k p_i I(A\rangle B)_{\rho_i} \leq E_F(\rho_{AB}),
\label{eq:D-arrow-bound}
\end{align}
where the minimization is over all decompositions of the form
\begin{align}\label{eq:deg-antideg-decomposition}
\rho_{AB} = \sum_{i=1}^k p_i \rho_i + \sum_{i=k+1}^l p_i \sigma_i
\end{align}
with degradable states $\rho_i$ and antidegradable states $\sigma_i$.
\end{theorem}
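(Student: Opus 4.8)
The plan is to establish the two inequalities in \eqref{eq:D-arrow-bound} separately, assembling the three structural results proved above (\Cref{prop:dist-ent-single-letter}, \Cref{lem:antideg-distill}, \Cref{prop:forget-antidegradable}) with a multi-component version of the convexity statement. For the first inequality $\Done(\rho_{AB})\leq\Eda(\rho_{AB})$, I would fix an arbitrary admissible decomposition \eqref{eq:deg-antideg-decomposition}, ordering the components so that $\omega_1,\dots,\omega_k$ are the degradable states $\rho_i$ and $\omega_{k+1},\dots,\omega_l$ are the antidegradable states $\sigma_i$. The central step is to upgrade \Cref{prop:convexity-one-way} from two states to the $l$ states $\omega_1,\dots,\omega_l$, yielding
\begin{align}
\Done\Bigl(\textstyle\sum_{i=1}^l p_i\omega_i\Bigr)\leq\sum_{i=1}^l p_i\,\Done(\omega_i).
\end{align}
Its proof carries over essentially verbatim: one expands $(\sum_i p_i\omega_i)^{\ox n}$ multinomially into products $\omega_{s}=\omega_{s_1}\ox\dots\ox\omega_{s_n}$ indexed by sequences $s\in\lbrace 1,\dots,l\rbrace^n$, applies linearity of the instrument and convexity of the coherent information exactly as before, and then uses that the expected number of occurrences of each component $i$ in $n$ i.i.d.\ draws from $\lbrace p_i\rbrace$ is $np_i$. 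The only hypothesis needed is the $l$-component analogue of \eqref{eq:additivity-assumption}, i.e.\ that $\Do$ is subadditive on every product $\omega_s$.

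Verifying this subadditivity (in fact, additivity) is where the structural results are combined. Given a sequence $s$, I would group its factors into the degradable ones and the antidegradable ones. A tensor product of degradable states is degradable and a tensor product of antidegradable states is antidegradable (the degrading and antidegrading maps simply tensor, as already used for $\rho_{w^n}$ after \Cref{prop:convexity-one-way}), so $\omega_s$ factorizes as $R\ox S$ with $R$ degradable and $S$ antidegradable. Then \Cref{prop:forget-antidegradable} gives $\Do(\omega_s)=\Do(R\ox S)=\Do(R)$, while \Cref{prop:dist-ent-single-letter} and additivity of the coherent information give $\Do(R)=I(A\rangle B)_R=\sum_{j:\,s_j\leq k}I(A\rangle B)_{\omega_{s_j}}$. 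Since $\Do(\omega_{s_j})=I(A\rangle B)_{\omega_{s_j}}$ for the degradable factors (\Cref{prop:dist-ent-single-letter}) and $\Do(\omega_{s_j})=0$ for the antidegradable factors (\Cref{lem:antideg-distill}), the right-hand side equals $\sum_j\Do(\omega_{s_j})$, so the hypothesis holds with equality. Plugging the values $\Done(\rho_i)=I(A\rangle B)_{\rho_i}$ (\Cref{prop:dist-ent-single-letter}) and $\Done(\sigma_i)=0$ (\Cref{lem:antideg-distill}) into the displayed bound collapses its right-hand side to $\sum_{i=1}^k p_i I(A\rangle B)_{\rho_i}$; taking the infimum over all admissible decompositions then gives $\Done(\rho_{AB})\leq\Eda(\rho_{AB})$.

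For the second inequality $\Eda(\rho_{AB})\leq E_F(\rho_{AB})$, I would note that every pure-state ensemble $\lbrace p_i,\psi^i_{AB}\rbrace_i$ of $\rho_{AB}$ is itself an admissible decomposition, since every pure state is degradable (as observed before \Cref{def:degradable-states}); thus there is no antidegradable part and $k=l$. For a pure bipartite state one has $I(A\rangle B)_{\psi^i}=S(B)_{\psi^i}-S(AB)_{\psi^i}=S(\psi^i_A)$, so the $\Eda$-objective for this decomposition is precisely $\sum_i p_i S(\psi^i_A)$, the objective defining $E_F$ in \eqref{eq:eof}. Hence the feasible set for $\Eda$ contains all pure-state ensembles, and minimizing over the latter alone already yields $\Eda(\rho_{AB})\leq E_F(\rho_{AB})$.

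I expect the main obstacle to be the multi-component upgrade of \Cref{prop:convexity-one-way}, and in particular the verification of the subadditivity hypothesis for arbitrary mixed tensor products $\omega_s$ of degradable and antidegradable components. Once the factorization $\omega_s=R\ox S$ is in place the matter reduces cleanly to the three cited results, but one must be careful that tensor products genuinely preserve degradability and antidegradability, and that \Cref{prop:forget-antidegradable} is applied to the \emph{grouped} factors $R$ and $S$ rather than to the individual tensor factors. A minor additional point is the attainment of the minimum defining $\Eda$, which can be secured by a Carathéodory-type bound on the number of components needed; this affects neither inequality, each of which only requires comparison against the infimum.
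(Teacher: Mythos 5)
Your proposal is correct and takes essentially the same route as the paper, whose proof simply applies \Cref{prop:dist-ent-single-letter}, \Cref{prop:forget-antidegradable}, and \Cref{prop:convexity-one-way} to the decomposition \eqref{eq:deg-antideg-decomposition} and handles the second inequality via pure-state ensembles exactly as you do. Your multinomial $l$-component upgrade of \Cref{prop:convexity-one-way} and the verification of the subadditivity hypothesis by grouping the factors of $\omega_s$ into a degradable part $R$ and an antidegradable part $S$ merely spell out the details the paper leaves implicit.
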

\begin{proof}
The first inequality follows from applying \Cref{prop:dist-ent-single-letter}, \Cref{prop:forget-antidegradable}, and \Cref{prop:convexity-one-way} to the decomposition of $\rho_{AB}$ in \eqref{eq:deg-antideg-decomposition}.
For the second inequality in \eqref{eq:D-arrow-bound}, recall that every pure state is degradable.
Hence, every decomposition of $\rho_{AB}$ into pure states is of the form \eqref{eq:deg-antideg-decomposition} (with $l=k$), in particular the one achieving the minimum in \eqref{eq:eof}.
\end{proof}

\subsection{2-qubit states and decompositions into degradable states}

In the case where both $A$ and $B$ are qubits, there is a simple method of obtaining decompositions of a bipartite state $\rho_{AB}$ into mixed degradable states.
This method is based on the following result by \textcite{WP07} about qubit-qubit quantum channels, which is easily extended to 2-qubit bipartite states:
\begin{proposition}[{\cite{WP07}}]\label{prop:rank-2}
Every qubit-qubit quantum channel with a qubit environment is either degradable or antidegradable.
Likewise, every 2-qubit bipartite state of rank 2 is either degradable or antidegradable.
\end{proposition}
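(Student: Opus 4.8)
The two assertions are two facets of a single fact, and the plan is to derive the state statement from the channel statement (which is the result of \textcite{WP07}). I would first record that both degradability and antidegradability are preserved under local unitaries applied separately to $A$, $B$, and $E$: one conjugates the purification and the (anti)degrading map by the corresponding unitaries. This lets me pass to a convenient canonical form at will.

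\textbf{Reduction of the state case to the channel case.} Let $\rho_{AB}$ be a $2$-qubit state of rank $2$. Its minimal purification $|\phi\rangle_{ABE}$ has a qubit environment $E$, so $|\phi\rangle_{ABE}$ is a three-qubit pure state. Taking a Schmidt decomposition across the bipartition $A \,|\, BE$, I would write $|\phi\rangle_{ABE} = (\id_A \ox W)|\gamma\rangle_{A\tA}$, where $\tA \cong A$ is a qubit, $|\gamma\rangle_{A\tA}$ is a (generally non-maximally) entangled state, and $W\colon \tA \to BE$ is an isometry. This $W$ is the Stinespring dilation of a qubit channel $\cN\colon \tA \to B$ with qubit environment $E$, and by construction $\rho_{AB} = (\id_A\ox\cN)(\gamma_{A\tA})$ and $\rho_{AE} = (\id_A\ox\cN^c)(\gamma_{A\tA})$. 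If $\cN$ is degradable with degrading map $\cD\colon B\to E$, then applying $\id_A\ox\cD$ to the first identity and using $\cD\circ\cN = \cN^c$ gives $\cD(\rho_{AB}) = \rho_{AE}$, so $\rho_{AB}$ is degradable by the simplified characterization at the end of \Cref{sec:definitions}; the antidegradable case is symmetric, using $\cN = \cA\circ\cN^c$. Thus the state statement follows once the qubit channel $\cN$ is known to be degradable or antidegradable.

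\textbf{The channel dichotomy.} For the channel statement I would follow \textcite{WP07}. Writing the Stinespring isometry as $V = \sum_e |e\rangle_E \ox A_e = \sum_b |b\rangle_B \ox B_b$, the Kraus operators $\{A_e\}$ of $\cN$ and $\{B_b\}$ of $\cN^c$ are related by transposition of the output indices $b \leftrightarrow e$, and the isometry condition reads $\sum_e A_e^\dagger A_e = \sum_b B_b^\dagger B_b = \one$. Using the unitary freedom on the input $A$ and on the two output qubits $B$ and $E$ (which respectively mixes the Kraus sets of $\cN$ and of $\cN^c$), I would bring $V$ to a canonical form depending on only a few real parameters. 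In this form one writes down the candidate degrading map $\cD\colon B\to E$ as the linear map uniquely determined by $\cD\circ\cN = \cN^c$ on the range of $\cN$, and the candidate antidegrading map $\cA\colon E\to B$ determined by $\cA\circ\cN^c = \cN$; the dichotomy then amounts to showing that at least one of these two maps is completely positive.

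\textbf{Main obstacle.} The crux is the complete-positivity check: one must show that the sign of a single real parameter in the canonical form decides which of $\cD$, $\cA$ has a positive semidefinite Choi matrix, and --- critically --- that the two regimes are \emph{exhaustive}, leaving no gap in which neither candidate is CP. This is exactly where the two-dimensionality of the environment enters, since there the relevant comparison collapses to the definiteness of a $2\times 2$ Hermitian matrix, which I expect to inherit a definite sign from the isometry relation $\sum_e A_e^\dagger A_e = \one$. A clean alternative that settles both parts of the proposition simultaneously is to put the three-qubit state $|\phi\rangle_{ABE}$ into the generalized-Schmidt (Acín et al.) canonical form and construct the degrading/antidegrading qubit map directly from its five real parameters; the same CP case-split remains the only nontrivial point.
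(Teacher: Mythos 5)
Your proposal is correct, and it matches the paper's (implicit) treatment: the paper offers no proof of this proposition at all, citing \textcite{WP07} for the channel dichotomy and merely asserting the state version is ``easily extended,'' and your reduction — minimal purification $|\phi\rangle_{ABE}$ with qubit $E$, Schmidt decomposition across $A|BE$ to write $\rho_{AB} = (\id_A\ox\cN)(\gamma_{A\tA})$ for a qubit channel $\cN$ with qubit environment, then transporting the (anti)degrading map via $\cD\circ\cN = \cN^c$ (resp.\ $\cA\circ\cN^c=\cN$) to get $\cD(\rho_{AB})=\rho_{AE}$ — is precisely that extension, carried out correctly (and it covers the degenerate Schmidt-rank-one case as well, since $\rho_{AB}=(\id_A\ox\cN)(\gamma)$ holds regardless). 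Your outline of the channel dichotomy via the canonical form and the complete-positivity case-split is a faithful sketch of the argument in \cite{WP07}, which the paper likewise leaves entirely to the citation, so no further detail is required there.
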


\Cref{prop:rank-2} gives rise to an easy method for obtaining decompositions of a state $\rho_{AB}$ into mixed degradable and antidegradable states.
We first fix some $k\in\mathbb{N}$ such that $2k\geq \rk\rho_{AB}$, and decompose $\rho_{AB}$ into an even number $2k$ of \emph{pure} states:
\begin{align}
\rho_{AB} = \sum_{i=1}^{2k} p_i \psi_i.
\label{eq:tau-pure-decomposition}
\end{align} 
Note that every $2k\times 2k$ unitary matrix gives rise to such a pure-state decomposition \cite{HJW93}.
We then obtain rank-2 states from the pure states $\psi_i$ by grouping together two of them at a time:
for $j=1,\dots,k$, we set $q_j\coloneqq p_{2j-1} + p_{2j}$, and form the states
\begin{align}
\omega_j \coloneqq \frac{p_{2j-1}}{q_j} \psi_{2j-1} + \frac{p_{2j}}{q_j} \psi_{2j},
\end{align}
such that $\rho_{AB} = \sum_{j=1}^k q_j \omega_j$.
For every $j=1,\dots,k$ the state $\omega_j$ satisfies $\rk\omega_j=2$, and is therefore either degradable or antidegradable by \Cref{prop:rank-2}.
Hence, \Cref{thm:one-way-dist-upper-bound} yields the following upper bound on $D_\rightarrow(\rho_{AB})$:
\begin{align}
D_\rightarrow(\rho_{AB}) \leq \min_U \sum_{j\colon \omega_j\text{ deg.}} q_j I(A\rangle B)_{\omega_j},
\label{eq:tau-upper-bound}
\end{align}
where the minimization is over all $2k\times 2k$ unitary matrices $U$ determining the pure-state decomposition \eqref{eq:tau-pure-decomposition}, and the sum is over all $j$ such that $\omega_j$ is degradable.

\subsection{Approximate degradability}\label{sec:approx-degradability}

In \cite{SSWR15}, the authors introduced the concept of an approximate degradable quantum channel and used it to derive computable upper bounds on the quantum capacity of a given quantum channel.
More precisely, given a quantum channel $\cN$ and its complementary channel $\cN^c$, they defined the degradability parameter $\eps$ as the minimum distance in diamond norm between the complementary channel $\cN^c$ and a degraded version $\cD\circ\cN$ of the channel, minimized over all possible CPTP degrading maps $\cD$. 
That is, the degradability condition \eqref{eq:degradable-channel} for channels is only approximately satisfied in diamond norm up to $\eps$. 
The authors derived upper bounds on the quantum capacity $Q(\cN)$ (and the private capacity $P(\cN)$) of $\cN$ in terms of the channel coherent information of $\cN$ and error terms in $\eps$ that vanish in the limit $\eps\to 0$, hence reducing to the channel coherent information for degradable channels with $\eps=0$.
In this section, we formulate the notion of approximate degradable states in an analogous manner, using the trace distance between quantum states instead.
We then use similar ideas as in \cite{SSWR15} to derive an upper bound on the one-way distillable entanglement in terms of the coherent information and the degradability parameter.

For a bipartite quantum state $\rho_{AB}$ with purification $\phi_{ABE}$, the \emph{degradability parameter} $\dg(\rho_{AB})$ is defined as
\begin{align}
\dg(\rho_{AB}) \coloneqq \min_{\cD\colon B\to E} \frac{1}{2} \left\| \rho_{AE} - \cD(\rho_{AB}) \right\|_1\!,
\label{eq:approx-deg}
\end{align}
where $\rho_{AE} = \tr_B\phi_{ABE}$, the minimization is over CPTP maps $\cD\colon B\to E$, and the trace norm is defined as $\|X\|_1\coloneqq \tr\sqrt{X^\dagger X}$.
Similarly, we define the \emph{antidegradability parameter} $\adeg(\rho_{AB})$ as
\begin{align}
\adeg(\rho_{AB}) \coloneqq \min_{\cA\colon E\to B} \frac{1}{2} \left\| \rho_{AB} - \cA(\rho_{AE}) \right\|_1\!,
\label{eq:approx-antideg}
\end{align}
where the minimization is over CPTP maps $\cA\colon E\to B$.

The usefulness of the notion of $\eps$-degradable quantum channels stems from the fact that the degradability parameter $\eps$ can be formulated as the solution of a semidefinite program (SDP) \cite{SSWR15}, and is hence efficiently computable.
With our definition of the (anti-)degradability parameter in \eqref{eq:approx-deg} (resp.~\eqref{eq:approx-antideg}), this is also possible:

\begin{lemma}\label{lem:SDP}
	$\dg(\rho_{AB})$ is the solution of the SDP
	\begin{align}
	\begin{aligned}
	{\normalfont \text{minimize: }} & \frac{1}{4} (\tr X_{AE} + \tr Y_{AE})\\
	{\normalfont \text{subject to: }} & \begin{pmatrix}
	X_{AE} & Z_{AE} - \rho_{AE}\\
	Z_{AE} - \rho_{AE} & Y_{AE}
	\end{pmatrix} \geq 0\\
	& \tau_{B'E} \geq 0\\
	& \tau_{B'} = \one_B\\
	& X_{AE}, Y_{AE} \geq 0,
	\end{aligned}\label{eq:sdp-deg}
	\end{align}
	where $Z_{AE} = \tr_{B'}\left[\left(\rho_{AB'}^{T_B}\ox \one_E\right) \left(\one_A\ox\tau_{B'E}\right)\right]$ with $B'\cong B$ and $\rho_{AB'} = \rho_{AB}$, and where $\tau_{B'E}$ is the Choi state of the CPTP map $\cD\colon B\to E$ over which we optimize in \eqref{eq:approx-deg}.
	
	Similarly, $\adeg(\rho_{AB})$ is the solution of the SDP
	\begin{align}
	\begin{aligned}
	{\normalfont \text{minimize: }} & \frac{1}{4} (\tr X_{AB} + \tr Y_{AB})\\
	{\normalfont \text{subject to: }} & \begin{pmatrix}
	X_{AB} & W_{AB} - \rho_{AB}\\
	W_{AB} - \rho_{AB} & Y_{AB}
	\end{pmatrix} \geq 0\\
	& \tau_{E'B} \geq 0\\
	& \tau_{E'} = \one_{E'}\\
	& X_{AB}, Y_{AB} \geq 0,
	\end{aligned}\label{eq:sdp-antideg}
	\end{align}
	where $W_{AB} = \tr_{E'}\left[\left(\one_{B}\ox \rho_{AE}^{T_E}\right) \left(\one_A\ox\tau_{E'B}\right)\right]$ with $E'\cong E$ and $\rho_{AE'} = \rho_{AE}$, and where $\tau_{E'B}$ is the Choi state of the CPTP map $\cA\colon E\to B$ over which we optimize in \eqref{eq:approx-antideg}.
\end{lemma}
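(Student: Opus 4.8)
The plan is to split the proof into two independent, standard reductions — one turning the minimization over degrading maps into a minimization over their Choi operators, and one turning the trace-norm objective into a semidefinite feasibility problem — and then to compose them. I would write out only the case of $\dg(\rho_{AB})$ in detail, since the SDP \eqref{eq:sdp-antideg} for $\adeg(\rho_{AB})$ is obtained by the identical argument after interchanging the roles of $B$ and $E$ and of the maps $\cD$ and $\cA$ (and replacing $\rho_{AE}$ by $\rho_{AB}$).

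First, I would parametrize the optimization over degrading maps $\cD\colon B\to E$ in \eqref{eq:approx-deg} by their Choi operators. By the Choi--Jamiolkowski isomorphism, CPTP maps $\cD\colon B\to E$ are in bijection with their (unnormalized) Choi states $\tau_{B'E}=(\id_{B'}\ox\cD)(\gamma_{B'B})$, where $\gamma_{B'B}=\sum_{i,j}|i\rangle\langle j|_{B'}\ox|i\rangle\langle j|_B$ and $B'\cong B$; complete positivity of $\cD$ is equivalent to $\tau_{B'E}\geq 0$, and trace preservation to $\tau_{B'}=\one_B$, which are exactly the two constraints on $\tau_{B'E}$ appearing in \eqref{eq:sdp-deg}. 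The key identity to verify is that, under this correspondence, $\cD(\rho_{AB})=Z_{AE}$ with $Z_{AE}=\tr_{B'}[(\rho_{AB'}^{T_B}\ox\one_E)(\one_A\ox\tau_{B'E})]$, where $\rho_{AB'}=\rho_{AB}$ under $B'\cong B$. This is a short computation in a fixed orthonormal basis: expanding $\tau_{B'E}$ and the partial transpose $\rho_{AB'}^{T_B}$ and carrying out the trace over $B'$, the transpose on $B'$ contracts against the Choi indices and reproduces $(\id_A\ox\cD)(\rho_{AB})=\cD(\rho_{AB})$. Hence the minimization over $\cD$ in \eqref{eq:approx-deg} becomes a minimization over $\tau_{B'E}$ subject to the two semidefinite constraints, with $\cD(\rho_{AB})$ replaced everywhere by $Z_{AE}$.

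Second, I would rewrite the objective $\tfrac12\|\rho_{AE}-\cD(\rho_{AB})\|_1=\tfrac12\|\rho_{AE}-Z_{AE}\|_1$ as an SDP. For any Hermitian operator $H$ one has the standard semidefinite characterization
\[
\|H\|_1 = \min\left\{\tfrac{1}{2}(\tr X + \tr Y)\colon \begin{pmatrix} X & H\\ H & Y\end{pmatrix}\geq 0\right\},
\]
with the minimum attained, e.g.\ at $X=Y=|H|$. Applying this with $H=Z_{AE}-\rho_{AE}$ — which is Hermitian, since $\cD$ is Hermiticity-preserving and $\rho_{AB}$ is a state — reproduces precisely the block-matrix constraint in \eqref{eq:sdp-deg}; the explicitly listed constraints $X_{AE},Y_{AE}\geq 0$ are then redundant, as the diagonal blocks of a positive semidefinite matrix are themselves positive semidefinite. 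Combining the factor $\tfrac12$ from the definition \eqref{eq:approx-deg} of $\dg$ with the factor $\tfrac12$ coming from the trace-norm SDP yields the objective $\tfrac14(\tr X_{AE}+\tr Y_{AE})$, and jointly minimizing over $\tau_{B'E}$, $X_{AE}$, and $Y_{AE}$ identifies $\dg(\rho_{AB})$ with the optimal value of the stated SDP.

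I expect the only delicate point to be the bookkeeping of Choi--Jamiolkowski conventions in the first step: the placement of the partial transpose $T_B$ and of $\one_E$ in $Z_{AE}$, together with the normalization $\tau_{B'}=\one_B$ (rather than $\tr\tau_{B'E}=1$) encoding trace preservation, all hinge on defining $\tau_{B'E}$ via the \emph{unnormalized} maximally entangled vector. Once the coordinate identity $Z_{AE}=\cD(\rho_{AB})$ is checked, both feasibility and the equality of optimal values are immediate, since the Choi correspondence is a bijection onto the feasible $\tau_{B'E}$ and the trace-norm SDP value is exactly $\|Z_{AE}-\rho_{AE}\|_1$; the antidegradability SDP \eqref{eq:sdp-antideg} then follows verbatim under the substitution $E\leftrightarrow B$, $\cD\leftrightarrow\cA$.
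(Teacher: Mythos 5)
Your proposal is correct and follows essentially the same route as the paper, which likewise combines the semidefinite characterization of the trace norm (the paper cites \cite[Ex.~1.15]{Wat16}) with the Choi--Jamio\l kowski parametrization of CPTP maps; you have simply written out in detail the verification that $Z_{AE}=(\id_A\ox\cD)(\rho_{AB})$ and the bookkeeping of the factors of $\tfrac12$, which the paper leaves implicit as ``follows immediately.'' Your observations that $\tau_{B'}=\one_B$ encodes trace preservation via the unnormalized maximally entangled vector, and that the constraints $X_{AE},Y_{AE}\geq 0$ are redundant given the block positivity, are both accurate.
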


\begin{proof}
	Recall that for arbitrary $X\in\cB(\cH)$ the trace norm $\|X\|_1$ can be expressed as the following SDP (see e.g.~\cite[Ex.~1.15]{Wat16}):
	\begin{align}
	\begin{aligned}
	\text{minimize: } & \frac{1}{2}(\tr W_1 + \tr W_2)\\
	\text{subject to: } & \begin{pmatrix}
	W_1 & -X^\dagger \\
	-X & W_2
	\end{pmatrix} \geq 0,\\
	& W_1,W_2 \geq 0.
	\end{aligned}\label{eq:trace-norm-sdp}
	\end{align}
	The SDP formulations \eqref{eq:sdp-deg} and \eqref{eq:sdp-antideg} of \eqref{eq:approx-deg} and \eqref{eq:approx-antideg}, respectively, now follow immediately using the well-known Choi-Jamio\l kowski isomorphism.
\end{proof}

Based on ideas in \cite{SSWR15}, this notion of approximate (anti-)degradability allows us to derive a general, easily computable upper bound on the one-way distillable entanglement of an arbitrary bipartite state.
Before we state this result, we recall an improved version of the Alicki-Fannes inequality recently proved by \textcite{Win15}:

\begin{proposition}[{\cite{Win15}}]\label{prop:improved-alicki-fannes}
	Let $\rho_{AB}$ and $\sigma_{AB}$ be states with $\frac{1}{2}\|\rho_{AB}-\sigma_{AB}\|_1\leq \eps $, then
	\begin{align}
	|S(A|B)_\rho - S(A|B)_\sigma | \leq 2\eps\log|A| + \left(1+ \eps\right) h\!\left(\frac{\eps}{1+\eps}\right).
	\end{align}
\end{proposition}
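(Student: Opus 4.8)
The plan is to reproduce Winter's coupling argument, which upgrades the original Alicki--Fannes bound by converting the trace-distance hypothesis into a convex-combination (flag) structure and then exploiting the near-linearity of the conditional entropy under classical conditioning. First I would reduce to the case of equality $\frac12\|\rho_{AB}-\sigma_{AB}\|_1 = \eps$, which is legitimate because the claimed right-hand side $2\eps\log|A| + (1+\eps)h(\eps/(1+\eps))$ is nondecreasing in $\eps$ on $[0,1]$ (indeed $(1+\eps)h(\eps/(1+\eps)) = (1+\eps)\log(1+\eps)-\eps\log\eps$, whose derivative is $\log\frac{1+\eps}{\eps}>0$). Writing the Jordan decomposition $\rho_{AB}-\sigma_{AB} = \Delta_+ - \Delta_-$ into orthogonal positive parts, trace-preservation forces $\tr\Delta_+ = \tr\Delta_- = \eps$, so $\omega_\pm \coloneqq \Delta_\pm/\eps$ are genuine states.

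The key algebraic observation is that the single barycentre state
\[
\Omega_{AB} \coloneqq \tfrac{1}{1+\eps}\rho_{AB} + \tfrac{\eps}{1+\eps}\omega_-
\]
also equals $\tfrac{1}{1+\eps}\sigma_{AB} + \tfrac{\eps}{1+\eps}\omega_+$, since $\rho_{AB}+\eps\omega_- = \sigma_{AB}+\eps\omega_+$ by construction. Thus $\Omega$ admits two distinct convex decompositions sharing the same weight vector $p=(\tfrac{1}{1+\eps},\tfrac{\eps}{1+\eps})$. Next I would attach a classical flag qubit $X$ and form the cq-states $\Omega^{(1)} = \tfrac{1}{1+\eps}\rho_{AB}\ox|0\X0|_X + \tfrac{\eps}{1+\eps}\omega_-\ox|1\X1|_X$ and $\Omega^{(2)} = \tfrac{1}{1+\eps}\sigma_{AB}\ox|0\X0|_X + \tfrac{\eps}{1+\eps}\omega_+\ox|1\X1|_X$, both of which have the common $AB$-marginal $\Omega$.

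For a cq-state the conditional entropy equals the branch average, $S(A|BX)_{\Omega^{(i)}} = \sum_x p_x S(A|B)$, and the identity $S(A|B)_\Omega - S(A|BX)_{\Omega^{(i)}} = I(X;A|B)_{\Omega^{(i)}}$ together with $0 \le I(X;A|B) \le H(X) = h(\eps/(1+\eps))$ sandwiches each $S(A|BX)_{\Omega^{(i)}}$ within $h(\eps/(1+\eps))$ of the common value $S(A|B)_\Omega$. Comparing the two sandwiches gives $|S(A|BX)_{\Omega^{(1)}} - S(A|BX)_{\Omega^{(2)}}| \le h(\eps/(1+\eps))$. Expanding both sides with the averaging identity, the $\rho,\sigma$ branches contribute $\tfrac{1}{1+\eps}|S(A|B)_\rho - S(A|B)_\sigma|$, while the $\omega_\pm$ branches are absorbed by the crude estimate $|S(A|B)_{\omega_-} - S(A|B)_{\omega_+}| \le 2\log|A|$, valid since $S(A|B)$ ranges in $[-\log|A|,\log|A|]$. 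Rearranging and multiplying through by $(1+\eps)$ then yields exactly $|S(A|B)_\rho - S(A|B)_\sigma| \le 2\eps\log|A| + (1+\eps)h(\eps/(1+\eps))$.

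I expect the only genuinely delicate point to be the two-sided control of $S(A|B)_\Omega - S(A|BX)$. The lower bound $I(X;A|B)\ge 0$ is concavity of the conditional entropy, equivalently strong subadditivity. The upper bound is where the improvement over Alicki--Fannes really comes from: one must observe that $I(X;A|B) = S(X|B) - S(X|AB) \le S(X|B) \le S(X) = H(X)$, the first inequality using that $X$ is classical so $S(X|AB)\ge 0$, and the second that conditioning does not increase entropy. Everything else is bookkeeping, and the whole argument never invokes the dimension of $B$, which is precisely why the bound scales only with $\log|A|$.
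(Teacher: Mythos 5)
Your proof is correct, and it is in fact exactly the argument of the cited source: the paper itself does not prove this proposition but imports it verbatim from Winter \cite{Win15}. Your reconstruction --- reduction to the equality case via monotonicity of the right-hand side, the Jordan decomposition $\rho-\sigma=\Delta_+-\Delta_-$ with $\tr\Delta_\pm=\eps$, the barycentre state $\Omega_{AB}$ with its two convex decompositions, the flagged cq-extensions with common $AB$-marginal, the sandwich $0\leq I(X;A|B)\leq H(X)=h\!\left(\frac{\eps}{1+\eps}\right)$ (with $S(X|AB)\geq 0$ for classical $X$ supplying the crucial dimension-free upper bound), and the final bookkeeping with $|S(A|B)|\leq\log|A|$ --- matches Winter's proof step for step, so there is nothing to correct or to compare against a differing in-paper argument.
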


\begin{theorem}\label{thm:approx-bound}
	Let $\rho_{AB}$ be a bipartite state with purification $|\phi\rangle_{ABE}$, and $\delta>0$ be such that $\dg(\rho_{AB})\leq \delta$.
	Then,
	\begin{align}
	I(A\rangle B)_\rho \leq D_\rightarrow(\rho_{AB}) \leq I(A\rangle B)_\rho + 4\delta \log|E| + 2\left(1 + \delta\right) h\!\left(\frac{\delta}{1+\delta}\right)\!,
	\end{align}
	where $h(\cdot)$ denotes the binary entropy.
\end{theorem}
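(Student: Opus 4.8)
The plan is to prove the two inequalities in the claim separately. The lower bound $I(A\rangle B)_\rho \leq D_\rightarrow(\rho_{AB})$ is immediate: it is exactly the hashing bound \eqref{eq:hashing-bound} established by \textcite{DW05}, so nothing new is needed there. The entire content of the theorem lies in the upper bound, and the strategy is to mimic the channel-capacity argument of \cite{SSWR15}: exploit the fact that $\rho_{AB}$ is within trace distance $\delta$ of being exactly degradable, run the exact-degradable calculation from \Cref{prop:dist-ent-single-letter}, and control the error by a continuity estimate for the coherent information (equivalently, the conditional entropy), which is precisely \Cref{prop:improved-alicki-fannes}.

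Concretely, first I would unfold the definition of $\Do$ via the instrument-isometry form \eqref{eq:D-one-instrument}, so that the task becomes bounding $I(A'\rangle BM)_\omega$ uniformly over all instrument isometries $V\colon A\to A'MN$, where $\omega_{A'BM} = \tr_N(V\rho_{AB}V^\dagger)$. Let $\cD\colon B\to E$ be a CPTP map achieving $\dg(\rho_{AB})\leq\delta$, meaning $\tfrac12\|\rho_{AE}-\cD(\rho_{AB})\|_1\leq\delta$. The key observation is that in the chain of (in)equalities in the proof of \Cref{prop:dist-ent-single-letter}, exact degradability was used in exactly two places, namely in passing between the entropy of $\sigma$ on systems involving $E$ and on systems involving $E'$ (the analogues of steps \eqref{eq:apply-deg-1} and \eqref{eq:apply-deg-2}). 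When degradability holds only up to $\delta$, each such replacement incurs an error governed by how far $\cD(\rho_{AB})$ is from $\rho_{AE}$. Applying $\id\otimes\cD$ (after the instrument) to the relevant global state, the two marginals whose entropies we wish to equate now differ in trace distance by at most $\delta$ (trace distance is non-increasing under CPTP maps and under partial trace), so each entropy-difference step is controlled by \Cref{prop:improved-alicki-fannes}. Carrying the two error terms through the calculation and noting that the dimension of the system on which the conditional entropy lives is $|E|$, I expect to obtain
\begin{align}
I(A'\rangle BM)_\omega \leq I(A\rangle B)_\rho + 2\left[2\delta\log|E| + (1+\delta)h\!\left(\tfrac{\delta}{1+\delta}\right)\right],
\end{align}
the factor of $2$ arising because the continuity bound is invoked twice. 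Maximizing the left-hand side over all $V$ gives $\Do(\rho_{AB})$ on the left.

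The final step is to promote the single-copy bound on $\Do$ to the asymptotic quantity $D_\rightarrow$. Here I would apply the single-copy estimate to $\rho_{AB}^{\ox n}$ rather than to $\rho_{AB}$ directly: using $\Done(\rho_{AB})=\lim_n \tfrac1n\Do(\rho_{AB}^{\ox n})$ from \eqref{eq:one-way-dist-entanglement}, together with the additivity of the coherent information and the fact that $\rho_{AB}^{\ox n}$ inherits a degradability parameter bounded in a controlled way, one divides by $n$ and sends $n\to\infty$ so that the error terms behave appropriately. The cleanest route, which I expect matches the intended argument, is to observe that the per-copy bound already has the stated form once one notes $\dg$ governs the single-letter quantity $\Do$ and that the error term is $n$-independent per copy; the subtlety is whether $\dg(\rho^{\ox n})$ scales sublinearly so that the error survives the division by $n$.

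The main obstacle is precisely this last point: controlling the degradability parameter of tensor powers. A degrading map for $\rho_{AB}$ need not tensorize into a good degrading map for $\rho_{AB}^{\ox n}$ with error only $\delta$; the naive choice $\cD^{\ox n}$ gives a trace-distance error that can grow with $n$ (roughly like $n\delta$ for small $\delta$ by subadditivity of trace distance under tensoring), which would destroy the $n\to\infty$ limit. I would therefore argue at the single-copy level, bounding $\Do(\rho_{AB})$ directly and then invoking that for the purpose of the stated inequality it suffices to bound the regularization through the single-letter quantity, or alternatively restrict the claim's upper bound to apply to $\Do$ and note that the achievability/hashing side already pins $D_\rightarrow$ from below by $I(A\rangle B)_\rho$; reconciling these two bounds on $D_\rightarrow$ via the continuity estimate is where the real care is required. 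The continuity inequality of \textcite{Win15} is exactly tuned to make the error vanish as $\delta\to 0$, recovering \Cref{prop:dist-ent-single-letter} in the exact-degradable limit.
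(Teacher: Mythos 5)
Your proposal correctly assembles most of the ingredients---the hashing bound for the lower inequality, the Alicki--Fannes--Winter continuity estimate of \Cref{prop:improved-alicki-fannes}, and the skeleton of the exact-degradable calculation from \Cref{prop:dist-ent-single-letter}---and you correctly diagnose the central obstacle: a single-copy bound on $\Do(\rho_{AB})$ does not bound $D_\rightarrow(\rho_{AB})$, because $\Do$ is only superadditive in general, and the naive degrading map $\cD^{\ox n}$ for $\rho_{AB}^{\ox n}$ incurs a trace-distance error growing like $n\delta$. But having named the obstacle, you do not overcome it: your fallback (``argue at the single-copy level \dots\ invoking that it suffices to bound the regularization through the single-letter quantity'') is precisely the step that fails, since $\frac{1}{n}\Do(\rho^{\ox n})$ can exceed $\Do(\rho)$ and no additivity statement is available for merely approximately degradable states. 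As written, the proof does not go through.

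The missing idea, which is how the paper closes this gap, is the ``telescope'' identity \eqref{eq:telescope} (from \cite{LS08,SSWR15}). One works directly at the $n$-copy level: fix an arbitrary instrument isometry $V_n$ on $A^n$, apply the Stinespring isometry $W$ of the degrading map copy-by-copy to the $B$ systems, and expand
\begin{align}
S(ME'^n)_\theta - S(ME^n)_\theta = \sum_{t=1}^n \left( S(E'_t|M E'_{<t}E_{>t})_{\theta^t} - S(E_t|M E'_{<t}E_{>t})_{\theta^{t}} \right)\!,
\end{align}
so that each summand compares two hybrid states that differ in a \emph{single} tensor factor only---$\cD(\rho_{AB})^{\ox t}\ox\rho_{AE}^{\ox n-t}$ versus $\cD(\rho_{AB})^{\ox t-1}\ox\rho_{AE}^{\ox n-t+1}$---and is therefore within trace distance $2\delta$ independently of $n$. \Cref{prop:improved-alicki-fannes} then applies per term at the single-system dimension $|E|$, giving total error $n\eps$ with $\eps = 2\delta\log|E| + (1+\delta)h\!\left(\frac{\delta}{1+\delta}\right)$; a second telescoping application bounds $S(E^n)_\theta - S(E'^n)_\theta$ by another $n\eps$ (whence the factor of $2$ in the theorem, consistent with your heuristic count of two continuity invocations). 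The resulting bound $I(A'\rangle MB^n) \leq n(I(A\rangle B)_\rho + 2\eps)$ holds uniformly over $n$-copy instruments, so dividing by $n$ and invoking \eqref{eq:one-way-dist-entanglement} yields the claim. Without this per-copy decomposition of the error, the $n\to\infty$ limit in the definition of $D_\rightarrow$ cannot be controlled, which is exactly where your argument stalls.
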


\begin{proof}
	Let $\cD\colon B\to E$ be the CPTP map such that $\dg(\rho_{AB}) = \onehalf\|\rho_{AE} - \cD(\rho_{AB})\|_1 \leq \delta$, and denote by $W\colon B\to E'G$ its Stinespring isometry with $E'\cong E$.
	Consider the state $\rho_{AB}^{\ox n}$ and let $T\colon A^n\to A'M$ be an instrument with isometry 
	\begin{align}
	V_n\colon A^n\to A'MN,\quad V_n = \sumi_m U_m\ox |m\rangle_M \ox |m\rangle_N.
	\end{align}
	For $t=1,\dots,n$ we define the pure states
	\begin{align}
	|\psi^t\rangle_{A^n B_{t+1}\dots B_n E'_1 \dots E'_{t} G_1 \dots  G_{t} E_1\dots E_n} &= W_1\ox \dots \ox W_{t} |\phi\rangle_{ABE}^{\ox n}\\
	|\theta^t\rangle_{A'MN B_{t+1}\dots B_n E'_1 \dots E'_{t} G_1 \dots G_{t} E_1\dots E_n} &= V_n|\psi^t\rangle,
	\end{align}
	where $W_i = W \colon B_i\to E'_i G_i$ for every $i=1,\dots, t$. 
	Abbreviating $\theta=\theta^n$, we have the following:
	\begin{align}
	I(A'\rangle MB^n)_{V_n\rho^{\ox n}V_n^\dagger} &= I(A'\rangle MG^nE'^n)_\theta\\
	&= S(MG^nE'^n)_\theta - S(A'MG^nE'^n)_\theta\\
	&= S(MG^nE'^n)_\theta - S(NE^n)_\theta \label{eq:apply-conj-invariance-1}\\
	&= S(MG^nE'^n)_\theta - S(ME'^n)_\theta + S(ME'^n)_\theta - S(NE^n)_\theta\\
	&= S(G^n|ME'^n)_\theta + S(ME'^n)_\theta - S(ME^n)_\theta\\
	&= S(G^n|ME'^n)_\theta + \sum_{t=1}^n S(E'_t|M E'_{<t}E_{>t})_{\theta^t} - S(E_t|M E'_{<t}E_{>t})_{\theta^{t}}\label{eq:using-telescope}
	\end{align}
	where we used the symmetry of $\theta$ in $M$ and $N$ in the fifth equality, and the ``telescope'' identity \cite{LS08,SSWR15}
	\begin{align}\label{eq:telescope}
	S(ME'^n)_\theta - S(ME^n)_\theta = \sum_{t=1}^n S(E'_t|M E'_{<t}E_{>t})_{\theta^t} - S(E_t|M E'_{<t}E_{>t})_{\theta^{t}}
	\end{align}
	in the last equality, defining $X_{<t}\coloneqq X_{1}\dots X_{t-1}$, and setting $X_{<1}$ equal to a trivial (one-dimensional) system.
	$X_{>t}$ and $X_{>n}$ are defined analogously.
	The identity \eqref{eq:telescope} can be proved by simply writing out the right-hand side.
	
	For every $t=1,\dots,n$, we have the following bound on the trace distance between the two states $\theta^t_{ME'_1\dots E'_tE_{t+1}\dots E_n}$ and $\theta^t_{ME'_1\dots E'_{t-1}E_{t}\dots E_n}$ on which the coherent information is evaluated in \eqref{eq:using-telescope} resp.~\eqref{eq:telescope}:
	\begin{align}
	&\left\| \theta^t_{ME'_1\dots E'_tE_{t+1}\dots E_n} - \theta^t_{ME'_1\dots E'_{t-1}E_{t}\dots E_n} \right\|_1\\
	& \qqquad {}\leq \left\| \theta^t_{A'MNE'_1\dots E'_tE_{t+1}\dots E_n} - \theta^t_{A'MNE'_1\dots E'_{t-1}E_{t}\dots E_n} \right\|_1\\
	&\qqquad {}= \left\| \psi^t_{A^nE'_1\dots E'_tE_{t+1}\dots E_n} - \psi^t_{A^nE'_1\dots E'_{t-1}E_{t}\dots E_n} \right\|_1\\
	&\qqquad {}= \left\| \cD(\rho_{AB})^{\ox t} \ox \rho_{AE}^{\ox n-t} - \cD(\rho_{AB})^{\ox t-1} \ox \rho_{AE}^{\ox n-t+1} \right\|_1\\
	&\qqquad {} \leq \left \|\cD(\rho_{AB})^{\ox t-1} - \cD(\rho_{AB})^{\ox t-1} \right\|_1 + \left\| \cD(\rho_{AB}) - \rho_{AE}\right\|_1 + \left\|\rho_{AE}^{\ox n-t} - \rho_{AE}^{\ox n-t}\right\|_1\\
	&\qqquad {} \leq 2\delta,
	\end{align}
	where the second inequality follows from the fact that
	\begin{align}
	\|\rho_1\ox\rho_2 - \sigma_1\ox\sigma_2\|_1 \leq \|\rho_1-\rho_2\|_1 + \|\sigma_1 - \sigma_2\|_1
	\end{align}
	holds for any states $\rho_1,\rho_2,\sigma_1,\sigma_2$.
	Hence, by \Cref{prop:improved-alicki-fannes}, for every $t=1,\dots,n$ we have
	\begin{align}\label{eq:using-alicki-fannes}
	S(E'_t|M E'_{<t}E_{>t})_{\theta^t} - S(E_t|M E'_{<t}E_{>t})_{\theta^{t}} \leq 2\delta \log|E| + \left(1 + \delta\right) h\!\left(\frac{\delta}{1+\delta}\right)\eqqcolon \eps\!.
	\end{align}
	Using \eqref{eq:using-alicki-fannes} in \eqref{eq:using-telescope}, we then obtain
	\begin{align}
	I(A'\rangle MB^n)_{V_n\rho^{\ox n}V_n^\dagger} &\leq S(G^n|ME'^n)_\theta +  n \eps\\
	&\leq S(G^n|E'^n)_\theta +  n \eps\\
	&= S(G^nE'^n)_\theta - S(E'^n)_\theta + n\eps \label{eq:apply-conj-invariance-2}\\
	&\leq  S(G^nE'^n)_\theta - S(E^n)_\theta + 2 n\eps\\
	&= S(G^nE'^n)_\theta - S(A'MNG^nE'^n)_\theta + 2n\eps\\
	&= I(A'MN\rangle G^nE'^n)_\theta + 2n\eps\\
	&= I(A^n\rangle B^n)_{\rho^{\ox n}} + 2n\eps\\
	&= n (I(A\rangle B)_\rho + 2\eps),\label{eq:upper-bound-with-eps}
	\end{align}
	where in the third inequality we used a similar rewriting as in \eqref{eq:telescope} to bound the expression $S(E^n)_\theta-S(E'^n)_\theta$ from above by $n\eps$.
	The claim now follows after dividing \eqref{eq:upper-bound-with-eps} by $n$ and taking the limit $n\to\infty$.
\end{proof}

There is a generalized method of finding upper bounds on the one-way distillable entanglement that encompasses both the approximate degradability (AD) bound of this section and the `additive extension' (AE) bound in \Cref{thm:one-way-dist-upper-bound} in \Cref{sec:upper-bounds-one-way}.
As we will see later in \Cref{sec:symmetries-isotropic}, for the quantum capacity of the depolarizing channel the AD bound from \cite{SSWR15} provides the best upper bound for very low noise, while our AE bound does best for higher noise levels (cf.~\Cref{fig:depolarizing-bound}).  
By searching for approximately degradable extensions of quantum states (or channels, for that matter) we can do no worse than either of these two methods. 

The two methods can be combined as follows.
For a given bipartite state $\rho_{AB}$, fix $k\in\mathbb{N}$ and consider an extension $\tilde{\rho}_{ABC}$ with $|C|=k$, such that $\tr_C \tilde{\rho}_{ABC} = \rho_{AB}$.
We assume $C$ to be in Bob's possession, and consider entanglement distillation with respect to the $A|BC$ bipartition in the following.
Computing the degradability parameter $\eps = \dg(\tilde{\rho}_{ABC})$ of this extension and using \Cref{thm:approx-bound}, we obtain an upper bound on the one-way distillable entanglement $\Done(\tilde{\rho}_{ABC})$ of $\tilde{\rho}_{ABC}$, which in turn is an upper bound on $\Done(\rho_{AB})$.
We can then optimize this bound over all extensions $\tilde{\rho}_{ABC}$ with $|C|=k$.
Restricting to trivial extensions of $\rho_{AB}$, this bound reduces to the AD bound (\Cref{thm:approx-bound} in \Cref{sec:approx-degradability}).
Restricting to `flagged' (anti)degradable extensions of the form
\begin{align}
\tilde{\rho}_{ABC} = \sum_{c=1}^k \tilde{\rho}^c_{AB}\ox |c\rangle\langle c|_C,
\end{align}
where the states $\tilde{\rho}^c_{AB}$ are either degradable or antidegradable, the bound reduces to the AE bound (\Cref{thm:one-way-dist-upper-bound} in \Cref{sec:upper-bounds-one-way}).
In this case, we have $\rho_{AB} = \sum_c \tilde{\rho}^c_{AB}$.
It would be interesting to conduct a thorough numerical investigation of this approach.\footnote{We tried to implement this combined method to obtain upper bounds on the quantum capacity of the depolarizing channel (see \Cref{sec:symmetries-isotropic}). 
From prior numerical investigations, we know that the dimension $k$ of the extension register $C$ should be at least $6$.
However, for the choice $k=6$ the memory needed to solve the SDP in the computation of $\dg(\rho_{ABC})$ exceeds 96GB (even when exploiting the sparsity pattern of the Choi state of the depolarizing channel). Hence, such a computation is not tractable with the resources to which we have access.}

\subsection{Extending our method to the quantum capacity}

In this section we show that our method of obtaining an upper bound on the one-way distillable entanglement can be applied to quantum channels as well.
This allows us to easily establish upper bounds on the quantum capacity of a quantum channel of the form first reported by \textcite{Yang}.
We include our own argument for the result here for completeness, as it is a direct extension of the results in \Cref{sec:upper-bounds-one-way}.
Before explaining the main steps in the proof, we define the quantum capacity of a quantum channel in terms of the task of entanglement generation.

Let $\cN\colon A\to B$ be a quantum channel.
In entanglement generation, the goal for Alice (the sender) and Bob (the receiver) is to generate entanglement between them via $n$ uses of the channel $\cN$.
To this end, Alice prepares a pure state $|\phi\rangle_{A'A^n}$ in her laboratory and sends the $A^n$ part through the channel $\cN^{\ox n}$.
Bob then applies a decoding map $\cD\colon B^n\to \tA$ to the channel output state that he received from Alice.
The goal is to obtain a final state $(\id_{A'}\ox \cD\circ\cN^{\ox n})(\phi_{A'A^n})$ that is close to a maximally entangled state $\Phi_{A'\tA}^M$ of Schmidt rank $M$ up to some error $\eps_n$ (with respect to a suitable distance measure).
If there is an entanglement generation protocol for which $\lim_{n\to\infty}\eps_n = 0$, then $\lim_{n\to\infty}\frac{\log M}{n}$ is called an achievable rate for entanglement generation.
The quantum capacity $Q(\cN)$ is defined as the supremum over all achievable rates.

The following formula for the quantum capacity was proved (with increasing rigor) by \textcite{Llo97}, \textcite{Sho02}, and \textcite{Dev05}:
\begin{align}
Q(\cN) = \lim_{n \to\infty} \frac{1}{n} Q^{(1)}(\cN^{\ox n}),\label{eq:quantum-capacity}
\end{align}
where the \emph{channel coherent information} $Q^{(1)}(\cN)$ is defined as
\begin{align}
Q^{(1)}(\cN) \coloneqq \max_{|\phi\rangle_{A'A}} I(A'\rangle B)_{(\id\ox \cN)(\phi)}.
\end{align}
Similarly to the formula \eqref{eq:one-way-dist-entanglement} for the one-way distillable entanglement, the formula \eqref{eq:quantum-capacity} for the quantum capacity involves a regularization and is therefore intractable to compute in most cases.
However, much like their state counterparts for entanglement distillation, the classes of degradable and antidegradable channels that we defined in \Cref{sec:definitions} play a special role:
For a degradable quantum channel $\cN$, the channel coherent information is additive \cite{DS05},
\begin{align}
\Qone(\cN^{\ox n}) = n\Qone(\cN),
\end{align}
and thus the regularized formula \eqref{eq:quantum-capacity} reduces to the single-letter formula $Q(\cN) = \Qone(\cN)$.
Moreover, for antidegradable channels the channel coherent information, and hence the quantum capacity, is zero due to the no-cloning theorem \cite{BDSW96}.

Therefore, by once again using the ``additivity implies convexity'' argument by \textcite{WP07} (this time in its original form for quantum channels), we arrive at an upper bound to the quantum capacity, which is stated in \Cref{thm:quantum-capacity-upper-bound}.
This result is analogous to the upper bound for the one-way distillable entanglement in \Cref{thm:one-way-dist-upper-bound}.
The only missing piece is a channel analogue of \Cref{prop:forget-antidegradable}, which shows that an antidegradable channel does not contribute to the channel coherent information of a degradable channel.
This is a consequence of additivity of the channel coherent information for degradable channels \cite{DS05} and the technique of degradable extensions of a quantum channel \cite{SS08}.
Here, a quantum channel $\hat{\cN}$ is called \emph{extension} of a quantum channel $\cN$, if there is another quantum channel $\cR$ such that $\cN = \cR\circ\hat{\cN}$.

\begin{proposition}[\cite{DS05,SS08}]\label{prop:forget-antidegradable-channels}
	Let $\cN_1\colon A_1\to B_1$ be a degradable channel and $\cN_2\colon A_2\to B_2$ be an antidegradable channel.
	Then,
	\begin{align}
	\Qone(\cN_1\ox\cN_2) = \Qone(\cN_1).
	\end{align}
\end{proposition}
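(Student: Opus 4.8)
The plan is to prove the two inequalities separately. The inequality $\Qone(\cN_1\ox\cN_2)\ge\Qone(\cN_1)$ is elementary: feeding a product input $|\phi\rangle_{RA_1}\ox|\xi\rangle_{A_2}$ through $\cN_1\ox\cN_2$ yields an output that factorizes across $RB_1$ and $B_2$, so $I(R\rangle B_1B_2)=I(R\rangle B_1)$; maximizing over inputs $|\phi\rangle_{RA_1}$ to $\cN_1$ (equivalently, discarding $B_2$ and applying data processing for the coherent information) gives the bound. The content is therefore in the reverse inequality, and I would obtain it by transcribing the proof of \Cref{prop:forget-antidegradable} from states to channels, the high-level structure being exactly the one indicated by the references: additivity of the channel coherent information on degradable channels \cite{DS05} together with the degradable-/symmetric-extension technique of \cite{SS08}.

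Concretely, I would first replace the antidegradable factor $\cN_2$ by a symmetric extension. Writing $\cN_2=\cA\circ\cN_2^c$ with antidegrading map $\cA\colon E_2\to B_2$ of Stinespring form $W\colon E_2\to B_2'F$, and letting $U_2\colon A_2\to B_2E_2$ be a Stinespring isometry of $\cN_2$, I would introduce a flag qubit $C$ and the isometry
\begin{align}
\hat V_2 \coloneqq \tfrac{1}{\sqrt{2}}\left(W U_2\ox|0\rangle_{C} + \bbF_{B_2B_2'}\,W U_2\ox|1\rangle_{C}\right),
\end{align}
where $\bbF_{B_2B_2'}$ swaps $B_2$ and $B_2'$. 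Orthogonality of the flags makes $\hat V_2$ an isometry, so it defines a channel $\hat{\cN}_2$ with output $B_2B_2'C$; a short computation shows that $\hat{\cN}_2$ is invariant under $\bbF_{B_2B_2'}\ox X_{C}$ and that $\cN_2=\tr_{B_2'C}\circ\hat{\cN}_2$. Thus $\cN_1\ox\cN_2$ is a degraded version of $\cN_1\ox\hat{\cN}_2$, and by data processing $\Qone(\cN_1\ox\cN_2)\le\Qone(\cN_1\ox\hat{\cN}_2)$, so it suffices to show $\Qone(\cN_1\ox\hat{\cN}_2)\le\Qone(\cN_1)$.

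To establish the latter I would run, for an arbitrary purified input, the same chain of entropy (in)equalities as in \Cref{prop:forget-antidegradable} -- data processing through the degrading map $\cD$ of $\cN_1$ and through partial traces, the telescoping/duality rewritings, and the symmetry of the construction -- which reduces the coherent information to $I(R\rangle B_1)\le\Qone(\cN_1)$ plus an auxiliary coherent-information term carried by the registers $B_2'$, its complement, and $C$. Note that, in contrast to \Cref{prop:forget-antidegradable}, there is no instrument and hence no pair of registers $M,N$ here, since for the quantum capacity one optimizes over channel inputs rather than over LOCC instruments; this only simplifies the bookkeeping.

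The main obstacle, and the genuine content of the argument, is to show that this auxiliary term is non-positive. As in \Cref{prop:forget-antidegradable}, this is precisely where the $\bbF_{B_2B_2'}\ox X_{C}$ invariance of $\hat V_2$ enters: the symmetry maps the auxiliary coherent information to its own negative, forcing it to vanish, while the relation $\cN_2=\tr_{B_2'C}\circ\hat{\cN}_2$ guarantees that $\hat{\cN}_2$ genuinely extends $\cN_2$. Once this is in place, the bound $\Qone(\cN_1\ox\hat{\cN}_2)\le\Qone(\cN_1)$ follows, which together with the easy direction gives the claimed equality. At the level of the slicker formulation, the same conclusion is packaged by additivity of $\Qone$ on the degradable channel $\cN_1$ \cite{DS05} and the observation that an antidegradable channel behaves as a free symmetric side channel \cite{SS08}. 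The remaining steps are the routine additivity and data-processing manipulations and the verification that the degrading map of $\cN_1$ composes correctly with the extension.
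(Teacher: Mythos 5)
Your easy direction and your closing appeal to \cite{DS05,SS08} are fine, but the concrete construction at the heart of your reverse direction fails: the inequality $\Qone(\cN_1\ox\hat{\cN}_2)\le\Qone(\cN_1)$ to which you reduce is false for your coherently flagged extension. Counterexample: let $\cN_2$ be the $50\%$ erasure channel, which is antidegradable with $\cN_2^c=\cN_2$, and choose $\cA=\id_{E_2\to B_2}$, so that $W$ has a trivial environment $F$. Since the erasure Stinespring satisfies $\bbF_{B_2B_2'}U_2=U_2$ (with $B_2'=E_2$ relabeled), your isometry collapses to $\hat V_2|\psi\rangle = U_2|\psi\rangle\ox|+\rangle_C$, i.e.\ $\hat{\cN}_2$ is an \emph{isometric} channel with $\Qone(\hat{\cN}_2)=\log|A_2|$, and product inputs give $\Qone(\cN_1\ox\hat{\cN}_2)\ge\Qone(\cN_1)+\log|A_2|$. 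The mistranslation of \Cref{prop:forget-antidegradable} is this: there, $B_2'$, $E_2'$ and $C_B$ remain \emph{purification} registers, and the quantity bounded is $I(A'\rangle MB_1B_2)$ with Bob holding only $B_2$; your channel extension instead hands $B_2'C$ to the receiver and keeps only the (possibly trivial) $F$ as environment, which strictly enlarges the optimized coherent information, so no symmetry argument can force the auxiliary term to be non-positive. (Incidentally, the absent $M,N$ registers are indeed harmless; that is not where the problem lies.)

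Two repairs exist. (a) Do not form an extension channel at all: transcribe the state proof literally by bounding $I(R\rangle B_1B_2)$ for $\cN_1\ox\cN_2$ directly, treating $W$ and the coherent flag as part of the purification $|\Psi\rangle_{RB_1E_1B_2B_2'FC}=\tfrac{1}{\sqrt2}\bigl(|\tau\rangle|0\rangle_C+\bbF_{B_2B_2'}|\tau\rangle|1\rangle_C\bigr)$ with $|\tau\rangle=(U_1\ox WU_2)|\phi\rangle$ (antidegradability guarantees $\Psi_{RB_1B_2}$ equals the true output marginal). Purity gives $I(R\rangle B_1B_2)=\bigl[S(B_1B_2)-S(E_1B_2')\bigr]+I(FC\rangle E_1B_2')$; the bracket is at most $S(B_1)-S(E_1)\le\Qone(\cN_1)$ using $I(E_1;B_2')_\Psi=I(E_1;B_2)_\Psi\le I(B_1;B_2)_\Psi$ (flag/swap invariance plus data processing under the degrading map $\cD$) together with $S(B_2)_\Psi=S(B_2')_\Psi$; and the auxiliary term obeys $I(FC\rangle E_1B_2')\le I(FC\rangle B_1B_2')=I(FC\rangle B_1B_2)=-I(FC\rangle RE_1B_2')\le-I(FC\rangle E_1B_2')$, hence is $\le 0$ --- so your symmetry idea does close, precisely because $B_2'$, $F$, $C$ never enter the receiver's side. (b) The paper's actual route is much shorter: it invokes \cite{SS08} for a \emph{degradable} extension $\hat{\cN}_2$ of $\cN_2$ with $Q(\hat{\cN}_2)=0$ --- in essence the \emph{classically} flagged channel $\rho\mapsto\tfrac12\cN_2(\rho)\ox|0\X 0|_C+\tfrac12\cN_2^c(\rho)\ox|1\X 1|_C$ with the flag copied to the environment, which is symmetric between output and environment (hence degradable and antidegradable, with zero capacity) and extends $\cN_2$ exactly because $\cA$ recovers $\cN_2$ from the flag-$1$ branch --- and then sandwiches $\Qone(\cN_1)\le\Qone(\cN_1\ox\cN_2)\le\Qone(\cN_1\ox\hat{\cN}_2)=\Qone(\cN_1)+\Qone(\hat{\cN}_2)=\Qone(\cN_1)$ via additivity of $\Qone$ for degradable channels \cite{DS05}. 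The difference between a classical and a coherent flag is exactly the difference between a useless extension and an isometry.
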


\begin{proof}
	It is proved in \cite{SS08} that for every antidegradable channel $\cA$ there is a degradable extension $\hat{\cA}$ of $\cA$ with vanishing quantum capacity, $Q(\hat{\cA})=0$.
	Let $\hat{\cN}_2$ be such a degradable extension for the antidegradable channel $\cN_2$.
	We then have the following:
	\begin{align}
	\Qone(\cN_1) \leq \Qone(\cN_1\ox\cN_2) \leq \Qone(\cN_1\ox \hat{\cN}_2) = \Qone(\cN_1) + \Qone(\hat{\cN}_2) = \Qone(\cN_1).
	\end{align}
	The first and second inequalities follow since $\cN_1\ox\cN_2$ and $\cN_1$ can be obtained from $\cN_1\ox\hat{\cN}_2$ and $\cN_1\ox\cN_2$ by post-processing, respectively.
	The first equality follows from additivity of $\Qone(\cdot)$ for degradable channels \cite{DS05}, and the second equality follows because $0\leq \Qone(\hat{\cN}_2) \leq Q(\hat{\cN}_2) = 0$.
	Hence, the above chain of inequalities collapses, which proves the claim.
\end{proof}

Finally, to arrive at our main result in this section we note that the proof of \Cref{prop:forget-antidegradable-channels} goes through if $\cN_1$ and $\cN_2$ are completely positive, but not necessarily trace-preserving.
Hence, we arrive at the following result:

\begin{theorem}[\cite{Yang}]\label{thm:quantum-capacity-upper-bound}
	For a quantum channel $\cN\colon A\to B$, 
	\begin{align}
	Q(\cN) \leq \min \sum_{i=1}^k p_i \Qone(\cD_i),
	\end{align}
	where the minimization is over all decompositions of the form
	\begin{align}
	\cN = \sum_{i=1}^k p_i \cD_i + \sum_{i=k+1}^l p_i \cA_i
	\end{align}
	with degradable and antidegradable CP maps $\cD_i$ and $\cA_i$, respectively.
\end{theorem}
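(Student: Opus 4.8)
The plan is to mirror the state-based argument from \Cref{thm:one-way-dist-upper-bound}, transplanting each ingredient into the channel setting. The three pillars are: (i) additivity of the channel coherent information $\Qone(\cdot)$ on degradable channels, quoted from \cite{DS05}; (ii) \Cref{prop:forget-antidegradable-channels}, which lets us discard antidegradable summands; and (iii) the ``additivity implies convexity'' argument of \textcite{WP07} in its original channel form. First I would establish that if $\cN = \sum_i p_i \cD_i + \sum_i p_i \cA_i$ is a convex decomposition into degradable and antidegradable \emph{channels}, then $Q(\cN) \leq \sum_{i=1}^k p_i \Qone(\cD_i)$, by the same weighted-binomial computation used in \Cref{prop:convexity-one-way}: one writes $\cN^{\ox n}$ as a convex mixture indexed by binary strings $w^n$, where $w_i$ selects whether the $i$-th tensor factor is a degradable or antidegradable summand, applies convexity of the channel coherent information (again via joint convexity of the relative entropy), invokes additivity on the degradable-only factors together with \Cref{prop:forget-antidegradable-channels} to kill the antidegradable ones, and then takes the limit $n\to\infty$ after simplifying the binomial sums.

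The one genuinely new wrinkle — flagged explicitly in the sentence preceding the theorem — is that the decomposition in the statement is into completely positive (CP) \emph{maps} $\cD_i$ and $\cA_i$ that are not required to be trace-preserving; only their sum $\cN$ is a channel. This is essential because the convex-combination coefficients $p_i$ have already been absorbed, so each summand is a subnormalized CP map rather than a genuine channel. The key step I would therefore check is that \Cref{prop:forget-antidegradable-channels} and the additivity statement \cite{DS05} survive the relaxation from CPTP maps to general CP maps. The proof of \Cref{prop:forget-antidegradable-channels} only uses degradable extensions with vanishing quantum capacity, post-processing monotonicity, and additivity — all of which are statements about CP maps that make no essential use of trace preservation beyond normalization bookkeeping. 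I expect this to go through verbatim, which is exactly what the remark ``the proof of \Cref{prop:forget-antidegradable-channels} goes through if $\cN_1$ and $\cN_2$ are completely positive, but not necessarily trace-preserving'' is asserting.

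With these pieces in place, the theorem follows by optimizing the bound over all admissible decompositions: every decomposition $\cN = \sum_{i=1}^k p_i \cD_i + \sum_{i=k+1}^l p_i \cA_i$ into degradable and antidegradable CP maps yields the upper bound $\sum_{i=1}^k p_i \Qone(\cD_i)$ on $Q(\cN)$, and taking the minimum over all such decompositions gives the stated inequality. The main obstacle I anticipate is purely technical rather than conceptual: verifying that the notions of degradability, antidegradability, and the channel coherent information $\Qone$ extend sensibly to subnormalized CP maps, and that the convexity computation of \Cref{prop:convexity-one-way} does not secretly rely on each factor being a bona fide channel. Since the paper has already set up the channel analogues of all the state results and explicitly signals the CP-map relaxation, I expect the proof to be a short assembly of the cited facts rather than a from-scratch argument — essentially identical in structure to the proof of \Cref{thm:one-way-dist-upper-bound}, with \Cref{prop:dist-ent-single-letter} replaced by additivity on degradable channels and \Cref{prop:forget-antidegradable} replaced by \Cref{prop:forget-antidegradable-channels}.
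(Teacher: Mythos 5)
Your proposal is correct and follows the paper's own route essentially verbatim: the paper proves \Cref{thm:quantum-capacity-upper-bound} by exactly the assembly you describe, namely the Wolf--P\'erez-Garc\'ia ``additivity implies convexity'' argument in its original channel form (the channel analogue of \Cref{prop:convexity-one-way}), additivity of $\Qone$ on degradable channels from \cite{DS05}, and \Cref{prop:forget-antidegradable-channels} together with the remark that its proof survives the relaxation from CPTP maps to general CP maps. You also correctly identify the CP-map relaxation as the only genuinely new ingredient beyond \Cref{thm:one-way-dist-upper-bound}, which is precisely the point the paper itself flags before stating the theorem.
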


\section{Two-way entanglement distillation}\label{sec:two-way}

\subsection{Operational setting}
In this section, we consider the task of entanglement distillation under two-way LOCC.
In contrast to the one-way setting, we do not concern ourselves with the structure of two-way LOCC operations.
Instead, we consider the larger class of $\ppt$-preserving operations, that is, the class of operations  $\Lambda\colon AB\to A'B'$ for which $\Lambda(\rho_{AB})^{\Gamma_{B'}} \geq 0$ whenever $\rho_{AB}^{\Gamma_{B}} \geq 0$.
Here, $\Gamma_B$ denotes transposition on the $B$ system.
We define the $\ppt$-distillable entanglement $D_\Gamma(\rho_{AB})$ in the same way as $\Done(\rho_{AB})$ or $\Dtwo(\rho_{AB})$, only this time with respect to $\ppt$-preserving operations.
Since every LOCC operation is also $\ppt$-preserving, we have 
\begin{align}
\Dtwo(\rho_{AB}) \leq D_\Gamma(\rho_{AB}). \label{eq:locc-ppt-dist-ent}
\end{align}

In the same vein as Rains' seminal work \cite{Rai99,Rai01}, we primarily derive upper bounds on the $\ppt$-distillable entanglement $D_\Gamma(\rho_{AB})$.
Subsequently, any such bound is also an upper bound on $\Dtwo(\rho_{AB})$ by \eqref{eq:locc-ppt-dist-ent}.

\subsection{Maximally correlated and PPT states}\label{sec:mcs}
Following the method outlined in \Cref{sec:main-results}, we first identify the classes of useful and useless states in the two-way LOCC and $\ppt$ setting.

\begin{definition}[\cite{Rai99}]\label{def:mcs}
	A bipartite state $\rho_{AB}$ on $\mathbb{C}^d\times\mathbb{C}^d$ is said to be \emph{maximally correlated} (MC), if there exist bases $\lbrace |i\rangle_A\rbrace_{i=0}^{d-1}$ and $\lbrace |i\rangle_B\rbrace_{i=0}^{d-1}$ such that
	\begin{align}
	\rho_{AB} = \sum_{i,j=0}^{d-1} \alpha_{ij} |i \rangle \langle j|_A \ox |i \rangle \langle j|_B,
	\end{align}
	where $(\alpha_{ij})$ is a positive semidefinite matrix with trace 1.
\end{definition}

Any pure state $|\psi\rangle_{AB}$ is MC, which can be seen by considering a Schmidt decomposition 
\begin{align}
|\psi\rangle_{AB} = \sumi_{i} \lambda_i |i\rangle_A\ox |i\rangle_B.
\end{align}
It then follows that $\psi_{AB}$ is MC with respect to the bases $\lbrace |i\rangle_A\rbrace_i$ and $\lbrace |i\rangle_B\rbrace_i$ and the matrix $(\alpha_{ij}) = \lambda_i \lambda_j$.

\begin{lemma}[\cite{Rai01}]
	For maximally correlated states $\rho_{AB}$,
	\begin{align}
	\Dtwo(\rho_{AB}) = I(A\rangle B)_\rho = I(B\rangle A)_\rho.
	\end{align}
	In particular, both $I(A\rangle B)_\rho$ and $I(B\rangle A)_\rho$ are non-negative for MC states.
\end{lemma}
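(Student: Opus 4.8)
The plan is to prove the two matching inequalities $\Dtwo(\rho_{AB})\geq I(A\rangle B)_\rho$ and $\Dtwo(\rho_{AB})\leq I(A\rangle B)_\rho$, and to dispose of the identity $I(A\rangle B)_\rho = I(B\rangle A)_\rho$ by inspecting the marginals. Tracing out either system of the MC state leaves $\rho_A = \sum_i \alpha_{ii}\,|i\rangle\langle i|_A$ and $\rho_B = \sum_i \alpha_{ii}\,|i\rangle\langle i|_B$, so the two reduced states share the spectrum $\lbrace\alpha_{ii}\rbrace_i$; hence $S(A)_\rho = S(B)_\rho = H(\lbrace\alpha_{ii}\rbrace_i)$, which immediately gives $I(A\rangle B)_\rho = I(B\rangle A)_\rho$ and shows it suffices to pin down a single coherent information. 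The lower bound is then essentially free: combining the hashing bound \eqref{eq:hashing-bound} with \eqref{eq:one-way-two-way} yields $\Dtwo(\rho_{AB}) \geq \Done(\rho_{AB}) \geq I(A\rangle B)_\rho$. (Since every MC state is degradable, $\mcs\subseteq\dgb$, one could alternatively invoke \Cref{prop:dist-ent-single-letter} to conclude $\Done(\rho_{AB}) = I(A\rangle B)_\rho$.)

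The substance of the proof is the matching upper bound, which I would obtain from a relative-entropy (Rains-type) bound evaluated at a carefully chosen reference state. The natural choice is the fully dephased state $\sigma_{AB} = \sum_i \alpha_{ii}\,|ii\rangle\langle ii|$, with $|ii\rangle \coloneqq |i\rangle_A|i\rangle_B$, which is diagonal and therefore separable, in particular $\ppt$. Positivity of $(\alpha_{ij})$ guarantees that $\alpha_{ii}=0$ forces the entire $i$-th row and column to vanish, so $\supp\rho_{AB}\subseteq\supp\sigma_{AB}$ and the relative entropy is finite. Since only the diagonal of $\rho_{AB}$ contributes to $\tr(\rho_{AB}\log\sigma_{AB})$, a direct computation gives
\begin{align}
D(\rho_{AB}\|\sigma_{AB}) = -S(AB)_\rho - \tr(\rho_{AB}\log\sigma_{AB}) = H(\lbrace\alpha_{ii}\rbrace_i) - S(AB)_\rho = I(A\rangle B)_\rho.
\end{align}
Because $D_\Gamma$ is bounded above by the Rains bound \cite{Rai01}, which for any $\ppt$ state is at most $D(\rho_{AB}\|\sigma_{AB})$, the inequality \eqref{eq:locc-ppt-dist-ent} then delivers $\Dtwo(\rho_{AB}) \leq D_\Gamma(\rho_{AB}) \leq D(\rho_{AB}\|\sigma_{AB}) = I(A\rangle B)_\rho$.

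Together the two bounds give $\Dtwo(\rho_{AB}) = I(A\rangle B)_\rho = I(B\rangle A)_\rho$, and non-negativity is automatic, either because distillable entanglement is non-negative or because $I(A\rangle B)_\rho = D(\rho_{AB}\|\sigma_{AB})\geq 0$ directly. I expect the upper bound to be the only genuine obstacle: the lower bound and the symmetry of the coherent informations are routine, whereas the upper bound hinges on recognizing that the dephased reference state renders the Rains/relative-entropy bound exactly computable and that it collapses precisely to the single coherent information.
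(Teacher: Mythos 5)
Your proposal is correct, and it differs from the paper's proof mainly in how the upper bound is handled. The paper simply cites Rains \cite{Rai01} for the fact that $D_\Gamma(\rho_{AB}) = I(A\rangle B)_\rho = I(B\rangle A)_\rho$ on $\mcs$ states, combines this with the hashing bound \eqref{eq:hashing-bound} for the converse direction, and then gives a direct argument for non-negativity and the symmetry $I(A\rangle B)_\rho = I(B\rangle A)_\rho$ by dephasing with the projective measurement $P_k = |k\rangle\langle k|_A \ox \one_B$ and invoking the fact that projective measurements cannot decrease von Neumann entropy. You instead reconstruct Rains' upper bound explicitly: the chain $\Dtwo \leq D_\Gamma \leq E_R^{\ppt} \leq D(\rho_{AB}\|\sigma_{AB})$ with $\sigma_{AB}$ the dephased state, together with the computation $D(\rho_{AB}\|\sigma_{AB}) = H(\lbrace\alpha_{ii}\rbrace_i) - S(AB)_\rho = I(A\rangle B)_\rho$. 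This is in fact exactly the computation the paper deploys later in the proof of \Cref{prop:convexity-two-way}, so your route makes the lemma self-contained (modulo the standard fact $D_\Gamma \leq E_R^{\ppt}$, which the paper attributes to \cite{Rai99} rather than \cite{Rai01} --- a minor citation detail). Your support-condition check via positivity of $(\alpha_{ij})$ is a genuine point of care that the paper's citation-based proof never needs to make explicit, and your marginal-spectrum derivation of $I(A\rangle B)_\rho = I(B\rangle A)_\rho$ from $\rho_A$ and $\rho_B$ both having spectrum $\lbrace \alpha_{ii}\rbrace_i$ is slightly more direct than the paper's measurement argument, though both rest on the same identity $S(A)_\rho = S(B)_\rho = H(\lbrace\alpha_{ii}\rbrace_i)$; note also that your relative-entropy non-negativity argument and the paper's entropy-monotonicity argument are the same inequality $S(AB)_\rho \leq H(\lbrace\alpha_{ii}\rbrace_i)$ in two guises.
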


\begin{proof}
	\textcite{Rai01} proved that for MC states $\rho_{AB}$ the $\ppt$-distillable entanglement $D_\Gamma(\rho_{AB})$ is equal to either one of the coherent informations, and thus
	\begin{align}
	\Dtwo(\rho_{AB}) \leq D_\Gamma(\rho_{AB}) = I(A\rangle B)_\rho = I(B\rangle A)_\rho.
	\end{align}
	On the other hand, by the hashing inequality \eqref{eq:hashing-bound} we have
	\begin{align}
	\Dtwo(\rho_{AB}) \geq \max\lbrace I(A\rangle B)_\rho, I(B\rangle A)_\rho\rbrace.
	\end{align}
	The non-negativity of the coherent informations of $\rho_{AB}$ now follows since they are equal to the operational quantity $\Dtwo(\rho_{AB})$.
	However, this can also be proved directly.	
	To this end, let $\rho_{AB} = \sum_{i,j=0}^{d-1} \alpha_{ij} |i \rangle \langle j|_A \ox |i \rangle \langle j|_B$ for suitable bases $\lbrace |i\rangle_A\rbrace_{i=0}^{d-1}$ and $\lbrace |i\rangle_B\rbrace_{i=0}^{d-1}$, and consider the projective measurement with measurement operators $P_k\coloneqq |k\rangle\langle k|_A\ox \one_B$.
	We have
	\begin{align}
	\omega_{AB} \coloneqq \sum_k P_k \rho_{AB} P_k = \sum_{i} \alpha_{ii} |ii\rangle\langle ii|_{AB},
	\end{align}
	and hence, $S(AB)_{\omega} = H(\lbrace \alpha_{11}, \dots, \alpha_{d-1,d-1}\rbrace) = S(B)_\rho$.
	Moreover, $S(AB)_\rho \leq S(AB)_\omega$, since projective measurements cannot decrease the von Neumann entropy. 
	It follows that $I(A\rangle B)_\rho = S(B)_\rho - S(AB)_\rho \geq 0$, and furthermore $I(A\rangle B)_\rho = I(B\rangle A)_\rho$.
\end{proof}

\begin{lemma}
	If $\rho_{AB}$ is $\ppt$, then $I(A\rangle B)_\rho \leq 0$.
\end{lemma}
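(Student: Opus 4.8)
The plan is to show that the coherent information $I(A\rangle B)_\rho = S(B)_\rho - S(AB)_\rho$ is non-positive by establishing the stronger quantitative estimate that it never exceeds the logarithmic negativity $\log\bigl\|\rho_{AB}^{\Gamma_B}\bigr\|_1$, and then noting that this quantity vanishes precisely on the $\ppt$ set. Concretely, the target $I(A\rangle B)_\rho\le 0$ is equivalent to $S(AB)_\rho\ge S(B)_\rho$, i.e.\ to $S(A|B)_\rho\ge 0$; so the content of the lemma is a genuine \emph{lower} bound on the bipartite entropy, and such a bound must make essential use of the positivity of the partial transpose rather than of generic entropy inequalities (subadditivity, Araki--Lieb, etc., all point the wrong way).

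The first step I would carry out is to reduce the claim to the negativity bound
\begin{align}
I(A\rangle B)_\rho \le \log\bigl\|\rho_{AB}^{\Gamma_B}\bigr\|_1.
\end{align}
Granting this, the lemma follows immediately: if $\rho_{AB}$ is $\ppt$ then $\rho_{AB}^{\Gamma_B}\ge 0$, so $\bigl\|\rho_{AB}^{\Gamma_B}\bigr\|_1 = \tr\rho_{AB}^{\Gamma_B} = \tr\rho_{AB} = 1$, since the partial transpose is trace-preserving. Hence $\log\bigl\|\rho_{AB}^{\Gamma_B}\bigr\|_1 = 0$ and $I(A\rangle B)_\rho\le 0$, as desired. (One sanity check I would keep in mind is the maximally entangled state $\Phi_+$ on $\mathbb{C}^d\times\mathbb{C}^d$, where $I(A\rangle B)=\log d=\log\|\Phi_+^{\Gamma_B}\|_1$, so the inequality is tight and correctly oriented.)

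The main obstacle is the negativity bound itself, and here I would flag a structural difficulty: any argument that treats $\rho_{AB}$ and $\rho_{AB}^{\Gamma_B}$ symmetrically is doomed. For instance, monotonicity of the relative entropy under the positive, trace-preserving (but not completely positive) map $\Gamma_B$, applied to the identity $I(A\rangle B)_\rho = D(\rho_{AB}\,\|\,\one_A\ox\rho_B)$ and its reverse, only yields $I(A\rangle B)_\rho = I(A\rangle B)_{\rho^{\Gamma_B}}$ and cannot pin down the sign. The bound therefore has to come from an \emph{asymmetric} entropic inequality; the route I would pursue is the standard one of controlling the von Neumann quantity by its sandwiched R\'enyi-$\alpha$ counterpart for some $\alpha>1$ (which upper-bounds $D$) and relating the resulting trace expression to $\bigl\|\rho_{AB}^{\Gamma_B}\bigr\|_1$ via a Golden--Thompson/Gibbs-variational estimate, taking the limit $\alpha\to 1^+$ at the end.

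Finally, I would record a short operational alternative that sidesteps the entropic computation, at the cost of invoking external results. Since $\rho_{AB}^{\ox n}$ is $\ppt$ for every $n$, Rains' theorem \cite{Rai99,Rai01} gives $D_\Gamma(\rho_{AB})=0$ for $\ppt$ states; combining this with the hashing bound \eqref{eq:hashing-bound} and the orderings \eqref{eq:one-way-two-way} and \eqref{eq:locc-ppt-dist-ent} yields
\begin{align}
I(A\rangle B)_\rho \le \Done(\rho_{AB}) \le \Dtwo(\rho_{AB}) \le D_\Gamma(\rho_{AB}) = 0.
\end{align}
This is cleaner but leans on the (Rains/Vidal--Werner) non-distillability of $\ppt$ states, which is itself proved via the negativity bound; if the lemma is meant to be an ingredient toward those distillability statements in the two-way section, the self-contained entropic proof above is preferable to avoid circularity.
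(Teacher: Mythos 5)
Your proposal does contain a correct proof, but only via the route you relegate to a final aside: the chain $I(A\rangle B)_\rho \le \Done(\rho_{AB}) \le \Dtwo(\rho_{AB}) \le D_\Gamma(\rho_{AB}) = 0$ is, almost verbatim, the paper's own two-line proof, which reads $0 = D_\Gamma(\rho_{AB}) \ge \Dtwo(\rho_{AB}) \ge I(A\rangle B)_\rho$, the last inequality being the hashing bound \eqref{eq:hashing-bound} together with \eqref{eq:one-way-two-way} and \eqref{eq:locc-ppt-dist-ent}. Your circularity worry about this route does not apply to the present paper: the lemma is never used as an ingredient in establishing non-distillability of $\ppt$ states. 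Rather, $D_\Gamma(\rho_{AB}) = 0$ for $\ppt$ states is taken as immediate (a $\ppt$-preserving protocol keeps the state $\ppt$, and a $\ppt$ state has overlap at most $1/M$ with a rank-$M$ maximally entangled state \cite{Rai99}, a fact the paper quotes separately in \Cref{sec:symmetries-isotropic}), and Rains' relative-entropy bound $D_\Gamma \le E_R^\ppt$ is invoked independently in \Cref{prop:convexity-two-way}. So the operational argument is exactly what the authors intend, and it is internally consistent.

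By contrast, your preferred ``self-contained'' entropic route is not a proof as it stands. The reduction to the negativity bound $I(A\rangle B)_\rho \le \log\bigl\|\rho_{AB}^{\Gamma_B}\bigr\|_1$ is clean, and your structural observations are accurate (in particular that any argument treating $\rho_{AB}$ and $\rho_{AB}^{\Gamma_B}$ symmetrically can at best yield $I(A\rangle B)_\rho = I(A\rangle B)_{\rho^{\Gamma_B}}$). But the negativity bound itself, which carries the entire content of the lemma, is only gestured at: the plan of upper-bounding $D$ by a sandwiched R\'enyi divergence of order $\alpha>1$ and invoking Golden--Thompson before taking $\alpha\to 1^+$ is never executed, and it is far from clear how the partial transpose would enter such an estimate, since $\widetilde{D}_\alpha(\rho_{AB}\,\|\,\one_A\ox\rho_B)$ contains no $\Gamma_B$ anywhere. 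Moreover, the standard proofs of $I(A\rangle B)_\rho \le E_N(\rho_{AB})$ in the literature proceed exactly through the operational chain you wished to avoid (hashing plus the Rains/Vidal--Werner bounds on $\ppt$-assisted distillation \cite{Rai99,Rai01,VW02}), so without a genuinely new inequality this branch of your proposal smuggles back the very dependence you flagged. In short: your deprecated alternative is the paper's proof and is complete; your primary route has a gap at its central step.
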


\begin{proof}
	Clearly, $D_\Gamma(\rho_{AB}) = 0$ for all $\ppt$ states $\rho_{AB}$.
	Hence, 
	\begin{align} 
	0 = D_\Gamma(\rho_{AB}) \geq \Dtwo(\rho_{AB})\geq I(A\rangle B)_\rho,
	\end{align}
	where the last inequality follows from the hashing bound \eqref{eq:hashing-bound}.
\end{proof}

We now turn to the question of how to construct MC states.
We say that a collection of vectors $\lbrace |\psi_{\alpha}\rangle_{AB} \rbrace_{\alpha=1}^{l}$ of a bipartite quantum system with Hilbert space $\cH_A\ox\cH_B \cong \mathbb{C}^d\ox \mathbb{C}^d$ is \emph{simultaneously Schmidt decomposable} (SSD) \cite{HH04}, if there exist bases $\lbrace |i\rangle_A\rbrace_{i=0}^{d-1}$ and $\lbrace |i\rangle_B \rbrace_{i=0}^{d-1}$ of $\cH_A$ and $\cH_B$, respectively, such that
\begin{align}
|\psi_{\alpha} \rangle_{AB} = \sum_{i=0}^{d-1} \lambda_i^{(\alpha)} |i\rangle_A \ox |i\rangle_B\quad\text{for $\alpha=1,\dots,l$.}\label{eq:ssd}
\end{align}
In contrast to the usual Schmidt decomposition for a single bipartite pure quantum state, the coefficients $\lambda_i^{(\alpha)}$ are complex numbers in general.
It is clear by inspection of \eqref{eq:ssd} and \Cref{def:mcs} that, given a probability distribution $\lbrace p_\alpha\rbrace_{\alpha=1}^l$, the (mixed) state $\sum_{\alpha=1}^l p_\alpha |\psi_\alpha\rangle\langle \psi_\alpha|_{AB}$ is MC if the states $\lbrace |\psi_{\alpha}\rangle_{AB} \rbrace_{\alpha=1}^l$ are SSD.

We are therefore interested in necessary and sufficient conditions for a collection of vectors to be SSD.
By considering the associated operators $\lbrace \op(\psi_{\alpha})\rbrace_{\alpha=1}^{l}$ defined through \eqref{eq:vec-op}, this is equivalent to the existence of a \emph{weak singular value decomposition} for $\lbrace \op(\psi_{\alpha})\rbrace_{\alpha=1}^{l}$, by which we mean that there are unitary matrices $U$ and $V$ such that the matrices $U \op(\psi_\alpha) V$ are (complex) diagonal for all $\alpha=1,\dots,l$.
Necessary and sufficient conditions for the existence of such weak singular value decompositions for a set $\lbrace A_i\rbrace_i$ of matrices were found by \textcite{Wie48} and further refined by \textcite{Gib74}.
In our context, their results can be phrased as follows:
\begin{theorem}[\cite{Wie48,Gib74}] \label{thm:ssd}
	For quantum systems $A$ and $B$, let $\lbrace |\psi_{\alpha}\rangle_{AB} \rbrace_{\alpha=1}^{l}$ be a collection of vectors, and let $\mathcal{S} = \lbrace \op(\psi_\alpha)\rbrace_{\alpha=1}^{l}$ be the set of associated operators.
	Then $\lbrace |\psi_{\alpha}\rangle_{AB} \rbrace_{\alpha=1}^{l}$ is SSD if and only if 
	\begin{align}
	X Y^\dagger Z = Z Y^\dagger X\quad\text{for all $X,Y,Z\in\mathcal{S}$.} \label{eq:ssd-condition}
	\end{align}
\end{theorem}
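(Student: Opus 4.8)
The plan is to prove both directions, with necessity immediate and essentially all the work in the converse. Write $A_\alpha \coloneqq \op(\psi_\alpha)$, so $\mathcal{S} = \{A_1,\dots,A_l\}$ and SSD is exactly the existence of fixed unitaries $U,V$ with $U A_\alpha V$ diagonal for every $\alpha$. For necessity, if $U A_\alpha V = D_\alpha$ is diagonal, then $A_\alpha A_\beta^\dagger A_\gamma = U^\dagger D_\alpha D_\beta^* D_\gamma V^\dagger$; since diagonal matrices commute, $D_\alpha D_\beta^* D_\gamma = D_\gamma D_\beta^* D_\alpha$ and \eqref{eq:ssd-condition} follows at once. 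I would also record the adjoint form of \eqref{eq:ssd-condition}, namely $X^\dagger Y Z^\dagger = Z^\dagger Y X^\dagger$ for all $X,Y,Z\in\mathcal{S}$, obtained by taking $\dagger$; this dual relation on the $B$-side is used repeatedly below.

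For sufficiency the key idea is to manufacture two commuting families of \emph{normal} operators out of $\mathcal{S}$. Define $M_{\alpha\beta}\coloneqq A_\alpha A_\beta^\dagger$ on $\cH_A$ and $N_{\alpha\beta}\coloneqq A_\alpha^\dagger A_\beta$ on $\cH_B$. Rewriting $A_\alpha A_\beta^\dagger A_\gamma = A_\gamma A_\beta^\dagger A_\alpha$ by \eqref{eq:ssd-condition} and then $A_\beta^\dagger A_\alpha A_\delta^\dagger = A_\delta^\dagger A_\alpha A_\beta^\dagger$ by the dual relation, I would show $M_{\alpha\beta}M_{\gamma\delta}=M_{\gamma\delta}M_{\alpha\beta}$; since $M_{\alpha\beta}^\dagger = M_{\beta\alpha}$ is again in the family, every $M_{\alpha\beta}$ is normal and the whole family commutes. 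The spectral theorem then furnishes a single unitary $U$ decomposing $\cH_A=\bigoplus_s V_s$ into joint eigenspaces on which each $M_{\alpha\beta}$ acts as a scalar $\mu^s_{\alpha\beta}$. The identical argument for $\{N_{\alpha\beta}\}$ yields $V$ and $\cH_B=\bigoplus_t W_t$ with scalars $\nu^t_{\alpha\beta}$.

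The two decompositions are linked by reading \eqref{eq:ssd-condition} as the intertwining relation $M_{\alpha\beta}A_\gamma = A_\gamma N_{\beta\alpha}$. For $|w\rangle\in W_t$ this gives $M_{\alpha\beta}A_\gamma|w\rangle = \nu^t_{\beta\alpha}A_\gamma|w\rangle$, so $A_\gamma$ maps $W_t$ into the unique $M$-eigenspace $V_{\sigma(t)}$ with $\mu^{\sigma(t)}_{\alpha\beta}=\nu^t_{\beta\alpha}$ (or annihilates it), with $\sigma(t)$ independent of $\gamma$ and $\sigma$ injective since distinct joint eigenspaces carry distinct eigenvalue tuples. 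The crucial step — where the full cubic hypothesis does more than the commuting-normal structure alone — is the behaviour inside one block: evaluating \eqref{eq:ssd-condition} on $|w\rangle\in W_t$ and using $A_\beta^\dagger A_\gamma|w\rangle = \nu^t_{\beta\gamma}|w\rangle$ collapses it to the scalar identity $\nu^t_{\beta\gamma}\,A_\alpha|w\rangle = \nu^t_{\beta\alpha}\,A_\gamma|w\rangle$. Choosing an index $\gamma_0$ with $\nu^t_{\gamma_0\gamma_0}>0$ (if none exists, all $A_\alpha$ vanish on $W_t$) forces $A_\alpha|_{W_t} = (\nu^t_{\gamma_0\alpha}/\nu^t_{\gamma_0\gamma_0})\,A_{\gamma_0}|_{W_t}$ for every $\alpha$; since $A_{\gamma_0}^\dagger A_{\gamma_0}=\nu^t_{\gamma_0\gamma_0}\one$ on $W_t$, the restriction $A_{\gamma_0}|_{W_t}$ is a scalar multiple of an isometry $U_0\colon W_t\to V_{\sigma(t)}$, so \emph{all} block operators $A_\alpha|_{W_t}$ are scalar multiples of this single $U_0$.

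Assembling the global bases then finishes the argument. On each nontrivial block I would fix an orthonormal basis $\{|v^t_i\rangle\}$ of $W_t$ and set $|u^t_i\rangle\coloneqq U_0|v^t_i\rangle$; by injectivity of $\sigma$ these lie in orthogonal subspaces of distinct $V_s$, and every $A_\alpha$ is diagonal on the block with no cross-block matrix elements. I would then extend to full orthonormal bases of $\cH_B$ and $\cH_A$ by choosing arbitrary bases of the orthogonal complements — which consist precisely of the common kernel of all $A_\alpha$ and the common cokernel — and pairing the leftover vectors arbitrarily with zero coefficients. In the resulting bases $U A_\alpha V$ is diagonal for every $\alpha$, i.e. the $|\psi_\alpha\rangle$ are SSD. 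I expect the main obstacle to be exactly this block step together with the assembly bookkeeping: one must check that it is genuinely the cubic relation (and not merely the commuting-normal property, which a direct sum of orthogonal isometries would also satisfy) that pins the block operators to a single $U_0$, and that the per-block bases glue into honest orthonormal bases of the whole spaces with the kernel/cokernel padding matching in dimension.
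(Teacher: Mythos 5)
Your proof is correct, but note that the paper itself gives no proof of this theorem at all: it is imported verbatim from Wiegmann (1948) and Gibson (1974), so your argument is a self-contained substitute rather than a variant of anything in the text (and it is in the spirit of Wiegmann's original normality argument). The necessity direction and the dual relation $X^\dagger Y Z^\dagger = Z^\dagger Y X^\dagger$ are fine, and the sufficiency machinery checks out at every point: commutation of the family $M_{\alpha\beta}=A_\alpha A_\beta^\dagger$ uses the cubic relation once on the left triple and the dual relation once on the right triple; closure under adjoint, $M_{\alpha\beta}^\dagger=M_{\beta\alpha}$, gives normality; and the intertwining identity $M_{\alpha\beta}A_\gamma = A_\gamma N_{\beta\alpha}$ is exactly the cubic relation re-read, which makes $\sigma(t)$ well-defined, independent of $\gamma$, and injective on nontrivial blocks. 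The two obstacles you flagged do go through. First, the block step: on $W_t$ the cubic relation collapses to $\nu^t_{\beta\gamma}A_\alpha|w\rangle = \nu^t_{\beta\alpha}A_\gamma|w\rangle$, and setting $\beta=\gamma=\gamma_0$ pins every $A_\alpha|_{W_t}$ to a multiple of the single scaled isometry $A_{\gamma_0}|_{W_t}$; your worry that commuting normality alone would not suffice is well-founded — the pair $A_1=|0\rangle\langle 0|$, $A_2=|1\rangle\langle 0|$ is the kind of configuration one must exclude, and it indeed violates the cubic hypothesis since $A_1A_1^\dagger A_2 = 0 \neq |1\rangle\langle 0| = A_2A_1^\dagger A_1$, so the hypothesis is genuinely used where you use it. Second, the bookkeeping: since each nontrivial block map is a scaled isometry, $\sum_t \dim W_t$ over nontrivial $t$ equals $\dim \sum_\alpha \im A_\alpha$, and because the paper's definition of SSD fixes $|A|=|B|=d$, the leftover dimensions on the two sides agree, with the common kernel contributing zero columns and the common cokernel zero rows; hence $UA_\alpha V$ is diagonal for all $\alpha$, which — as the paper notes immediately before the theorem — is precisely equivalent to SSD.
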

The concept of simultaneous Schmidt decomposition was introduced in quantum information theory  by \textcite{HH04}, who proved an alternative version of \Cref{thm:ssd}.

An easy consequence of \Cref{thm:ssd} is the following
\begin{lemma}
	Let $U$ be a unitary on $\mathbb{C}^d$.
	Then $(\one_d \ox U)|\Phi_+\rangle$ and $(\one_d \ox U^\dagger)|\Phi_+\rangle$ are SSD.
	Moreover, the set $\lbrace (\one_d \ox U^i)|\Phi_+\rangle\rbrace_{i=0}^{d-1}$ is SSD.
\end{lemma}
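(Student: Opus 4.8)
The plan is to apply the Wiegmann--Gibson criterion \Cref{thm:ssd}, which reduces SSD to checking the identity $XY^\dagger Z = ZY^\dagger X$ among the associated operators. So the first step is to compute these operators. Using the ``ricochet'' identity $(\one_d\ox M)|\Phi_+\rangle = (M^T\ox\one_d)|\Phi_+\rangle$, or directly from the definition \eqref{eq:vec-op}, one finds
\begin{align}
\op\!\left((\one_d\ox M)|\Phi_+\rangle\right) = \tfrac{1}{\sqrt d}\,M^T
\end{align}
for any $M\in\cB(\mathbb{C}^d)$. In particular, writing $W\coloneqq U^T$, the operator associated to $(\one_d\ox U^i)|\Phi_+\rangle$ is $\tfrac{1}{\sqrt d}(U^i)^T = \tfrac{1}{\sqrt d}W^i$.

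The key observation is that $W$ is again unitary (since $U$ is), so the entire set of associated operators consists of scalar multiples of powers of the \emph{single} unitary $W$, all of which commute with one another, and for which $W^\dagger = W^{-1}$. I would record this as the crux of the argument: any family $\lbrace c\,W^a\rbrace$ of scalar multiples of powers of one fixed unitary $W$ satisfies the Wiegmann--Gibson identity, because for $X = cW^a$, $Y = cW^b$, $Z = cW^e$ we have $XY^\dagger Z = c^3 W^{a-b+e} = ZY^\dagger X$, the common scalar $c^3 = d^{-3/2}$ and the commuting powers making the two sides literally identical.

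With this in hand both claims are immediate. For the second statement, the set $\lbrace \tfrac{1}{\sqrt d}W^i\rbrace_{i=0}^{d-1}$ has exactly this form, so \Cref{thm:ssd} yields that $\lbrace (\one_d\ox U^i)|\Phi_+\rangle\rbrace_{i=0}^{d-1}$ is SSD. The first statement is then the special case of exponents $\lbrace 1,-1\rbrace$: the operators associated to $(\one_d\ox U)|\Phi_+\rangle$ and $(\one_d\ox U^\dagger)|\Phi_+\rangle$ are $\tfrac{1}{\sqrt d}W$ and $\tfrac{1}{\sqrt d}\,\overline{U} = \tfrac{1}{\sqrt d}W^{-1}$, using $(U^\dagger)^T = \overline{U} = (U^T)^{-1}$, so again these are powers of $W$.

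There is no real obstacle here; the only points requiring care are the bookkeeping of the transpose when passing from the vector to its associated operator via \eqref{eq:vec-op}, and the verification that $U^T$ is unitary (which follows by transposing $U^\dagger U = \one_d$ to get $\overline{U}\,U^T = \one_d$). Once the operators are seen to lie in the commutative algebra generated by a single unitary, the SSD condition holds for the trivial reason that everything in that algebra commutes and $W^\dagger$ stays inside it.
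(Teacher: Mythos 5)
Your proof is correct and follows essentially the same route as the paper's: compute the associated operators via \eqref{eq:vec-op}, observe that they are all scalar multiples of powers of a single unitary, and verify the Wiegmann--Gibson condition of \Cref{thm:ssd}, which holds trivially because everything commutes and adjoints stay in the family. If anything, your transpose bookkeeping is slightly more careful than the paper's (which writes $\op(\psi_1)=\tfrac{1}{\sqrt d}U^\dagger$ where $\tfrac{1}{\sqrt d}U^T$ is the exact value), but this makes no difference to the argument.
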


\begin{proof}
	Setting $|\psi_1\rangle = (\one_d \ox U)|\Phi_+\rangle$ and $|\psi_2\rangle = (\one_d \ox U^\dagger)|\Phi_+\rangle$, we have $\op(\psi_1) = \frac{1}{\sqrt{d}} U^\dagger$ and $\op(\psi_2) = \frac{1}{\sqrt{d}} U$.
	Since $[U,U^\dagger] = 0$ holds for any unitary, the condition of \Cref{thm:ssd} is satisfied for all choices of $X,Y,Z$. 
	The same argument shows that the set $\lbrace (\one_d \ox U^i)|\Phi_+\rangle\rbrace_{i=0}^{d-1}$ is also SSD. 
\end{proof}

\subsection{Upper bounds on the two-way distillable entanglement}\label{sec:upper-bounds-two-way}

To prove the main result of this section, we define the relative entropy of entanglement $E_R^X(\rho_{AB})$ for $X\in\lbrace \ppt,\sep \rbrace$ as
\begin{align}
E_R^X(\rho_{AB}) \coloneqq \min_{\sigma_{AB}\in X} D(\rho_{AB}\|\sigma_{AB}),\label{eq:relative-entropy-of-entanglement}
\end{align}
where $\sep$ and $\ppt$ denote the sets of bipartite separable and PPT states on $\cH_A\ox\cH_B$, respectively.

\begin{proposition}\label{prop:convexity-two-way}
	The two-way distillable entanglement is convex on convex combinations of $\mcs$ and $\ppt$ states.
\end{proposition}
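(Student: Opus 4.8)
The plan is to mirror the one-way argument as closely as possible, replacing the role of \Cref{prop:forget-antidegradable} with a two-way analogue and then invoking an ``additivity implies convexity'' argument in the spirit of \Cref{prop:convexity-one-way}, but adapted to the $\ppt$-distillable entanglement. Since we only prove upper bounds on $\Dtwo$ via $D_\Gamma$ (using \eqref{eq:locc-ppt-dist-ent}), I would actually establish convexity for $D_\Gamma$ on mixtures of $\mcs$ and $\ppt$ states, which by \eqref{eq:locc-ppt-dist-ent} transfers to $\Dtwo$; on $\mcs$ states the two coincide by the earlier lemma ($\Dtwo = D_\Gamma = I(A\rangle B)$), and on $\ppt$ states both vanish, so the endpoints agree and no information is lost.

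The key structural ingredients are the following. First, I would want a single-letterization: for an $\mcs$ state $\rho_{AB}$, Rains' result gives $D_\Gamma(\rho_{AB}) = I(A\rangle B)_\rho$ and additivity on tensor powers, so a one-shot quantity $D^{(1)}_\Gamma$ satisfies the subadditivity hypothesis analogous to \eqref{eq:additivity-assumption}. Second, and most important, I would prove a $\ppt$-analogue of \Cref{prop:forget-antidegradable}: if $\rho$ is $\mcs$ and $\sigma$ is $\ppt$, then adjoining $\sigma$ does not increase the relevant one-shot $\ppt$-distillation quantity, i.e.\ $D^{(1)}_\Gamma(\rho\ox\sigma) = D^{(1)}_\Gamma(\rho)$. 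The clean way to get this is through Rains' bound: $D_\Gamma(\rho_{AB}) \le E_R^{\ppt}(\rho_{AB})$ (in fact Rains' relative entropy), the relative entropy of entanglement is subadditive under tensor products, $E_R^{\ppt}$ vanishes on $\ppt$ states, and it coincides with $I(A\rangle B)_\rho$ on $\mcs$ states. Then for an $\mcs$ state $\rho$ and $\ppt$ state $\sigma$,
\begin{align}
D_\Gamma(\rho\ox\sigma) \le E_R^{\ppt}(\rho\ox\sigma) \le E_R^{\ppt}(\rho) + E_R^{\ppt}(\sigma) = I(A\rangle B)_\rho = D_\Gamma(\rho),
\end{align}
while the reverse inequality $D_\Gamma(\rho\ox\sigma)\ge D_\Gamma(\rho)$ is immediate from discarding $\sigma$. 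This is the step I expect to be the main obstacle, since it requires knowing that $E_R^{\ppt}$ is additive (or at least matches $I(A\rangle B)$) precisely on the $\mcs$ part; the cleanest route is to lean on Rains' computation of $D_\Gamma$ for $\mcs$ states rather than re-deriving it.

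With these two facts in hand, I would run the binomial-expansion convexity argument verbatim from \Cref{prop:convexity-one-way}. Writing $(p\rho_0+(1-p)\rho_1)^{\ox n}$ as a sum over binary strings $w^n$ of terms $\rho_{w^n}$, I would use convexity of the coherent information (equivalently, joint convexity of relative entropy) together with subadditivity $D^{(1)}_\Gamma(\rho_{w^n}) \le \sum_i D^{(1)}_\Gamma(\rho_{w_i})$—which holds because each $\rho_{w^n}$ is a tensor product of $\mcs$ and $\ppt$ factors and the mixed factors are handled by the forgetting lemma above. The same $j\binom{n}{j}=n\binom{n-1}{j-1}$ identity collapses the weighted sums to $p$ and $1-p$, and taking $n\to\infty$ in the regularized formula \eqref{eq:two-way-dist-entanglement} (with $D_\Gamma$ in place of $\Dt$) yields
\begin{align}
D_\Gamma(p\rho_0+(1-p)\rho_1) \le p\,D_\Gamma(\rho_0) + (1-p)\,D_\Gamma(\rho_1).
\end{align}
Since the endpoints of $D_\Gamma$ and $\Dtwo$ agree on $\mcs$ and $\ppt$ states, this gives the claimed convexity of $\Dtwo$ on such mixtures. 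I would note that one subtlety is whether the regularization/convexity argument should be phrased for $D_\Gamma$ and then pushed to $\Dtwo$ at the very end, or phrased directly for $\Dtwo$; routing everything through $D_\Gamma$ and Rains' bound is the more robust choice, since $\ppt$-preserving operations form a convex, tensor-stable class for which the needed single-letter facts are already available.
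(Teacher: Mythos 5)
Your proposal is correct in substance but takes a genuinely different, and considerably heavier, route than the paper. The paper's proof is a purely one-shot argument with no regularization at all: given a decomposition $\rho_{AB}=\sum_{i\leq k}p_i\omega_i+\sum_{i>k}p_i\tau_i$ into MC states $\omega_i$ and $\ppt$ states $\tau_i$, it dephases each $\omega_i$ in its correlated basis to get a $\ppt$ state $\omega_i'$ with $D(\omega_i\|\omega_i')=I(A\rangle B)_{\omega_i}$ (Rains), observes that $\sigma_{AB}=\sum_{i\leq k}p_i\omega_i'+\sum_{i>k}p_i\tau_i$ is $\ppt$ by convexity of the $\ppt$ set, and concludes via $\Dtwo(\rho_{AB})\leq D_\Gamma(\rho_{AB})\leq E_R^{\ppt}(\rho_{AB})\leq D(\rho_{AB}\|\sigma_{AB})\leq\sum_{i\leq k}p_i I(A\rangle B)_{\omega_i}$, using joint convexity of the relative entropy in the last step. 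You instead transplant the one-way machinery: a forgetting lemma plus the binomial-expansion ``additivity implies convexity'' argument of \Cref{prop:convexity-one-way}. This can be made to work, but note that the three facts you assemble about $E_R^{\ppt}$ --- subadditivity on tensor products, vanishing on $\ppt$ states, and equality with $I(A\rangle B)$ on MC states --- already imply convexity of the bound directly, by exactly the joint-convexity computation above; so the entire regularization apparatus is redundant here. Unlike in the one-way setting, the useful states' distillable entanglement is matched from above by a quantity ($E_R^{\ppt}$) that is itself convex, which is precisely why the paper needs no analogue of \Cref{prop:forget-antidegradable}. Two smaller points: (i) your appeal to ``the regularized formula \eqref{eq:two-way-dist-entanglement} with $D_\Gamma$ in place of $\Dt$'' is not available in the paper --- the Devetak--Winter formula is proved for LOCC; the $\ppt$ analogue is true but requires its own continuity-based converse. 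You can sidestep this by running your argument on $\Dt$ itself and bounding $\Dt(\rho_{w^n})\leq E_R^{\ppt}(\rho_{w^n})$, since the one-shot coherent information after an LOCC operation is at most $E_R^{\ppt}$ by hashing, Rains' bound, and monotonicity of $E_R^{\ppt}$ under LOCC. (ii) Your identity $E_R^{\ppt}(\omega)=I(A\rangle B)_\omega$ for MC $\omega$ is correct, but its proof is exactly Rains' dephasing construction that the paper deploys directly --- a hint that the detour through the forgetting lemma and the $w^n$-expansion can be skipped entirely.
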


\begin{proof}
	First, recall that $D_\Gamma(\rho_{AB}) \leq E_R^\ppt(\rho_{AB})$ \cite{Rai99}.
	Consider now the following decomposition of a state $\rho_{AB}$,
	\begin{align}
	\rho_{AB} = \sum_{i=1}^k p_i \omega_i + \sum_{i=k+1}^l p_i \tau_i,\label{eq:MCS-PPT-decomposition}
	\end{align}
	where the $\omega_i$ are MC states and the $\tau_i$ are PPT.
	Since $\omega_i$ is MC, there are bases $\lbrace |i\rangle_{A}\rbrace_i$ and $\lbrace |i\rangle_{B}\rbrace_i$ of $\cH_A$ and $\cH_B$, respectively, such that $\omega_i = \sum_{k,l} \alpha_{kl} |kk\rangle\langle ll|_{AB}$.
	The dephased state $\omega_i' = \sum_k \alpha_{kk} |kk\rangle \langle kk|_{AB}$ is manifestly PPT, and satisfies  \cite{Rai99}
	\begin{align}
	D(\omega_i\|\omega_i') = I(A\rangle B)_{\omega_i} = \Dtwo(\omega_i).
	\end{align}
	Since the set of PPT states is convex, the state $\sigma_{AB}\coloneqq \sum_{i=1}^k p_i \omega_i' + \sum_{i=k+1}^l p_i \tau_i$ is also PPT, and we obtain the following chain of inequalities:
	\begin{align}
	\Dtwo(\rho_{AB}) &\leq D_\Gamma(\rho_{AB})\\
	&\leq E_R^{\ppt}(\rho_{AB})\\
	&\leq D(\rho_{AB}\|\sigma_{AB})\\
	&\leq \sum_{i=1}^k p_i D(\omega_i\| \omega_i') + \sum_{i=k+1}^l p_i D(\tau_i \| \tau_i)\\
	&= \sum_{i=1}^k p_i \Dtwo(\omega_i) + \sum_{i=k+1}^l p_i \Dtwo(\tau_i)\\
	&= \sum_{i=1}^k p_i I(A\rangle B)_{\omega_i},\label{eq:thm-rhs}
	\end{align}
	where we used joint convexity of $D(\cdot\|\cdot)$ in the last inequality.
\end{proof}

We can now formulate our main result:

\begin{theorem}\label{thm:two-way-dist-upper-bound}
	For a bipartite state $\rho_{AB}$, we have the following upper bound on the two-way distillable entanglement:
	\begin{align}
	\Dtwo(\rho_{AB}) \leq \Emp(\rho_{AB})\coloneqq \min \sum_{i=1}^k p_i I(A\rangle B)_{\omega_i} \leq E_F(\rho_{AB}),\label{eq:E_MP}
	\end{align}
	where the minimization is over all decompositions of $\rho_{AB}$ of the form
	\begin{align}
	\rho_{AB} = \sum_{i=1}^k p_i \omega_i + \sum_{i=k+1}^l p_i \tau_i,
	\end{align}
	and the $\omega_i$ and $\tau_i$ are MC and PPT states, respectively.
\end{theorem}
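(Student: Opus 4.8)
The plan is to follow the same two-step template as the proof of \Cref{thm:one-way-dist-upper-bound}, since the analytic work has already been carried out in \Cref{prop:convexity-two-way}. For the first inequality, I would fix an arbitrary decomposition $\rho_{AB} = \sum_{i=1}^k p_i \omega_i + \sum_{i=k+1}^l p_i \tau_i$ with MC states $\omega_i$ and PPT states $\tau_i$, and invoke \Cref{prop:convexity-two-way}, whose proof already establishes the chain of inequalities terminating in
\begin{align}
\Dtwo(\rho_{AB}) \leq \sum_{i=1}^k p_i I(A\rangle B)_{\omega_i}.
\end{align}
Taking the minimum of the right-hand side over all such MC/PPT decompositions then yields $\Dtwo(\rho_{AB}) \leq \Emp(\rho_{AB})$ directly from the definition of $\Emp$.

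For the second inequality, the key observation is that every pure state $|\psi\rangle_{AB}$ is maximally correlated, as recorded just after \Cref{def:mcs} via its Schmidt decomposition. Hence any pure-state ensemble $\lbrace p_i, \psi^i_{AB}\rbrace_i$ of $\rho_{AB}$ is itself a legitimate MC/PPT decomposition, with $l = k$ and no PPT terms present. For a pure state the coherent information collapses to the local entropy, $I(A\rangle B)_{\psi^i} = S(B)_{\psi^i} - S(AB)_{\psi^i} = S(\psi^i_A)$, since $S(AB)_{\psi^i} = 0$ and $S(A)_{\psi^i} = S(B)_{\psi^i}$ by the Schmidt decomposition. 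Therefore restricting the minimization defining $\Emp$ to pure-state decompositions reproduces exactly the objective $\sum_i p_i S(\psi^i_A)$ minimized in the definition \eqref{eq:eof} of $E_F$; as this is a restriction of the larger feasible set, $\Emp(\rho_{AB}) \leq E_F(\rho_{AB})$ follows.

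I do not expect a genuine obstacle here, as the theorem is essentially a corollary of \Cref{prop:convexity-two-way} together with the pure-state observation. The only points requiring care are bookkeeping ones: confirming that the feasible set of MC/PPT decompositions is nonempty (guaranteed since pure-state ensembles always exist) and that the infimum is well-defined as an upper bound. If one wanted to argue attainment of the minimum, one would note that $I(A\rangle B)_{\cdot}$ is continuous and the set of decompositions into a bounded number of MC and PPT states is compact, so a minimizer exists; but for the stated inequalities this refinement is not strictly needed.
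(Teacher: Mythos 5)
Your proposal is correct and follows essentially the same route as the paper: the first inequality is read off from \Cref{prop:convexity-two-way} by minimizing over MC/PPT decompositions, and the second from the fact that every pure state is MC, so pure-state ensembles (with $I(A\rangle B)_{\psi^i}=S(\psi^i_A)$) form a subset of the feasible decompositions, yielding $\Emp(\rho_{AB})\leq E_F(\rho_{AB})$. Your additional remarks on nonemptiness of the feasible set and attainment of the minimum are harmless bookkeeping the paper leaves implicit.
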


\begin{proof}
	The first inequality immediately follows from \Cref{prop:convexity-two-way}, minimizing over all decompositions of $\rho_{AB}$ of the form \eqref{eq:MCS-PPT-decomposition}.	
	The second inequality follows from the fact that every pure state $\psi_{AB}$ is MC.
\end{proof}

We also define 
\begin{align}
\Em(\rho_{AB}) \coloneqq \min \sum_{i=1}^k p_i I(A\rangle B)_{\omega_i},
\end{align}
where the minimization is now restricted to decompositions of $\rho_{AB}$ into MC states alone, that is, decompositions of the form
\begin{align}
\rho_{AB} = \sum_{i=1}^k p_i \omega_i,
\end{align}
where the $\omega_i$ are MC.
Clearly, $\Emp(\rho_{AB}) \leq \Em(\rho_{AB})$ for all $\rho_{AB}$.

\begin{lemma}~\label{lem:rel-ent}
	\begin{enumerate}[\normalfont (i)]
		\item\label{item:ppt-ree} For all $\rho_{AB}$, $E_R^{\ppt}(\rho_{AB}) \leq \Emp(\rho_{AB})$.
		
		\item\label{item:sep-ree} For all $\rho_{AB}$, $E_R^{\sep}(\rho_{AB}) \leq \Em(\rho_{AB})$.
		
		\item\label{item:mcs-em} If $\rho_{AB}$ is MC, then $\Em(\rho_{AB}) = I(A\rangle B)_\rho = I(B\rangle A)_\rho = E_R^\sep(\rho_{AB})$.
		
		\item\label{item:emp-sep-gap} There are states $\rho_{AB}$ for which $\Emp(\rho_{AB}) < E_R^\sep(\rho_{AB})$.
	\end{enumerate}
\end{lemma}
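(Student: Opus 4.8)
The plan is to treat (i)--(iii) as bookkeeping consequences of the dephasing construction already used in the proof of \Cref{prop:convexity-two-way} together with Rains' results, and to settle (iv) by exhibiting a single well-chosen state. The guiding observation is that dephasing an MC state in its correlated basis produces a state that is not merely PPT but \emph{separable}, which is exactly what upgrades the PPT bound to the separable bound.

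For (i), I would fix any decomposition $\rho_{AB} = \sum_{i=1}^k p_i\omega_i + \sum_{i=k+1}^l p_i\tau_i$ into MC states $\omega_i$ and PPT states $\tau_i$, dephase each $\omega_i$ to the PPT state $\omega_i'$ with $D(\omega_i\|\omega_i') = I(A\rangle B)_{\omega_i}$, and set $\sigma_{AB} = \sum_{i=1}^k p_i\omega_i' + \sum_{i=k+1}^l p_i\tau_i$, which is PPT by convexity of the PPT set. Joint convexity of the relative entropy then gives $E_R^{\ppt}(\rho_{AB}) \le D(\rho_{AB}\|\sigma_{AB}) \le \sum_{i=1}^k p_i I(A\rangle B)_{\omega_i}$, and minimizing over decompositions yields $E_R^{\ppt}(\rho_{AB}) \le \Emp(\rho_{AB})$. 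For (ii) I would run the identical argument on a decomposition $\rho_{AB} = \sum_i p_i\omega_i$ into MC states alone, using the extra point that each dephased state $\omega_i' = \sum_k \alpha_{kk}^{(i)}|kk\rangle\langle kk|$ is diagonal in a product basis and hence separable; thus $\sigma_{AB} = \sum_i p_i\omega_i'$ is separable and the same chain gives $E_R^{\sep}(\rho_{AB}) \le \Em(\rho_{AB})$.

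For (iii), suppose $\rho_{AB}$ is MC. The trivial one-term decomposition gives $\Em(\rho_{AB}) \le I(A\rangle B)_\rho$, and (ii) gives $E_R^{\sep}(\rho_{AB}) \le \Em(\rho_{AB})$. For the matching lower bound I would invoke Rains' identity $D_\Gamma(\rho_{AB}) = I(A\rangle B)_\rho$ for MC states \cite{Rai01} together with $D_\Gamma(\rho_{AB}) \le E_R^{\ppt}(\rho_{AB}) \le E_R^{\sep}(\rho_{AB})$, where the last inequality holds because $\sep \subseteq \ppt$. This produces the squeeze $I(A\rangle B)_\rho \le E_R^{\sep}(\rho_{AB}) \le \Em(\rho_{AB}) \le I(A\rangle B)_\rho$, forcing all quantities to coincide, and $I(A\rangle B)_\rho = I(B\rangle A)_\rho$ follows from the earlier MC lemma.

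The only genuinely separate statement is (iv), which I would prove by example. I would take any PPT entangled (bound entangled) state $\rho_{AB}$, whose existence is classical. Since $\rho_{AB}$ is itself PPT, the admissible decomposition with no MC part and $\tau_1 = \rho_{AB}$ makes the objective an empty sum, so $\Emp(\rho_{AB}) = 0$. On the other hand $E_R^{\sep}$ is faithful: $D(\rho\|\sigma) = 0$ only when $\rho = \sigma$, and no separable $\sigma$ equals the entangled $\rho_{AB}$, so $E_R^{\sep}(\rho_{AB}) > 0$ and the strict inequality follows. None of the four parts presents a serious obstacle; the point that requires the most care is precisely this faithfulness of the separable relative entropy of entanglement, i.e.\ that its minimum over the separable set is strictly positive on an entangled input, and the invocation of the existence of bound entangled states.
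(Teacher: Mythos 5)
Your proposal is correct, and for parts (i), (ii) and (iv) it is essentially the paper's own argument, so let me focus on where the two diverge. For (i) the paper simply declares the claim ``clear from the proof of \Cref{thm:two-way-dist-upper-bound}'', which is precisely your dephasing chain. For (ii) your write-up is actually more careful than the paper's: you make explicit that each dephased state $\omega_i' = \sum_k \alpha_{kk}^{(i)}|kk\rangle\langle kk|$ is diagonal in a product basis and hence \emph{separable}, so that $\sigma_{AB}=\sum_i p_i\omega_i'$ lies in $\sep$ and the chain runs with $E_R^\sep$ in place of $E_R^\ppt$; the paper instead appeals to ``the same line of arguments together with $E_R^\ppt(\rho_{AB})\leq E_R^\sep(\rho_{AB})$'', an inequality that points in the unhelpful direction and only makes sense if one silently supplies exactly the separability observation you spelled out. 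For (iv) both proofs use a bound entangled state; your added remark that one needs faithfulness of $E_R^\sep$ (the minimum in \eqref{eq:relative-entropy-of-entanglement} is attained on the compact set $\sep$, and $D(\rho\|\sigma)=0$ forces $\rho=\sigma$) fills in a step the paper asserts without comment. The genuine difference is (iii): the paper first shows $\Em(\rho_{AB})=I(A\rangle B)_\rho$ directly, via convexity of the coherent information (so the trivial decomposition is optimal among MC decompositions), and then closes the loop with the entropic lower bound $E_R^\sep(\rho_{AB})\geq\max\lbrace I(A\rangle B)_\rho, I(B\rangle A)_\rho, 0\rbrace$ from \cite{PVP00}; you instead use the squeeze
\begin{align}
I(A\rangle B)_\rho = D_\Gamma(\rho_{AB}) \leq E_R^\ppt(\rho_{AB}) \leq E_R^\sep(\rho_{AB}) \leq \Em(\rho_{AB}) \leq I(A\rangle B)_\rho,
\end{align}
resting on Rains' operational identity $D_\Gamma(\rho_{AB})=I(A\rangle B)_\rho$ for MC states (already imported in \Cref{sec:mcs} from \cite{Rai01}), on $D_\Gamma\leq E_R^\ppt$ from \cite{Rai99}, and on $\sep\subseteq\ppt$. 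Both routes are valid and use only results the paper already quotes; yours has the minor elegance that the squeeze pins down $\Em(\rho_{AB})=I(A\rangle B)_\rho$ as a by-product, so you never need convexity of the coherent information, while the paper's version is purely entropic and does not route the equality through the operational quantity $D_\Gamma$.
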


\begin{proof}
	\Cref{item:ppt-ree} is clear from the proof of \Cref{thm:two-way-dist-upper-bound}.
	The same line of arguments for a decomposition $\rho_{AB} = \sum_{i=1}^k p_i \omega_i$ into MC states alone, together with the fact that $E_R^\ppt(\rho_{AB})\leq E_R^\sep(\rho_{AB})$ for all $\rho_{AB}$, shows \Cref{item:sep-ree}.
	
	\Cref{item:mcs-em} Let $\rho_{AB}=\sum p_i \omega_i$ be a further decomposition of the MC state $\rho_{AB}$ into MC states $\omega_i$.
	Then
	\begin{align}
	I(A\rangle B)_\rho \leq \sumi_i p_i I(A\rangle B)_{\omega_i}
	\end{align}
	by the convexity of the coherent information.
	Hence, the trivial decomposition of $\rho_{AB}$ into MC states achieves a minimum among all such decompositions, and hence $\Em(\rho_{AB})= I(A \rangle B)_\rho$.
	
	To prove that for an MC state $\rho_{AB}$ also $E_R^\sep(\rho_{AB}) = I(A\rangle B)_\rho$, we note that for general states $\rho_{AB}$ we have \cite{PVP00}
	\begin{align}
	E_R^\sep(\rho_{AB}) \geq \max \lbrace I(A\rangle B)_\rho, I(B\rangle A)_\rho,0\rbrace.
	\end{align}
	Together with \Cref{item:sep-ree}, this implies for an MC state $\rho_{AB}$ that
	\begin{align}
	I(A\rangle B)_\rho = \Em(\rho_{AB}) \geq E_R^\sep(\rho_{AB}) \geq I(A\rangle B)_\rho.
	\end{align}
	Hence, this chain of inequalities collapses, and the same holds for the one with $I(B\rangle A)_\rho$.
	
	To prove \Cref{item:emp-sep-gap}, note that any entangled PPT (and hence bound entangled) state $\rho_{AB}$  satisfies $0 = \Emp(\rho_{AB}) < E_R^\sep(\rho_{AB})$.
\end{proof}

In view of \Cref{lem:rel-ent}, an interesting question is whether
\begin{align}
\Emp(\rho_{AB}) \overset{?}{\leq} E_R^\sep(\rho_{AB})
\end{align}
holds for all $\rho_{AB}$.

\subsection{Block-diagonal states in the generalized Bell basis}\label{sec:blocking}

In this section, we investigate our upper bound $\Emp(\cdot)$ on quantum states that are block-diagonal in the generalized Bell basis $\lbrace |\Phi_{n,m}\rangle \rbrace_{n,m=0,\dots,d-1}$, where $|\Phi_{n,m}\rangle$ is defined by \eqref{eq:bell-states}.
To this end, we first note that the SSD criterion for pure states given in \Cref{thm:ssd} reduces to a simple algebraic relation for the generalized Bell basis (see also \cite{HH04} for a similar relation):
\begin{corollary}\label{cor:algebraic-relation}
	A subset $\lbrace |\Phi_{n_\alpha,m_\alpha}\rangle \rbrace_{\alpha=1,\dots,l}$ of the generalized Bell states with $l\leq d$ is SSD if and only if the following equation is satisfied for all $\alpha,\beta,\gamma\in [l]$:
	\begin{align}
	m_\alpha(n_\gamma - n_\beta) - n_\gamma m_\beta = n_\alpha(m_\gamma - m_\beta) - m_\gamma n_\beta\mod d
	\label{eq:congruence-relations}
	\end{align}
\end{corollary}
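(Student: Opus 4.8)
The plan is to translate the operator identity of Theorem~\ref{thm:ssd} into the Weyl algebra generated by the generalized Pauli operators $X,Z$, where the SSD criterion collapses to a statement about commutation phases.

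First I would identify the associated operators. From $\op(\Phi_+) = \tfrac{1}{\sqrt d}\one_d$ (immediate from \eqref{eq:vec-op}) and the elementary identity $\op\big((\one_d\ox M)|\Phi_+\rangle\big) = \tfrac{1}{\sqrt d} M^T$, together with $(X^mZ^n)^T = Z^n X^{-m}$ (using $Z^T=Z$ and $(X^m)^T = X^{-m}$), definition \eqref{eq:bell-states} gives
\[
A_\alpha \coloneqq \op(\Phi_{n_\alpha,m_\alpha}) = \tfrac{1}{\sqrt d}\, Z^{n_\alpha} X^{-m_\alpha},
\qquad
A_\beta^\dagger = \tfrac{1}{\sqrt d}\, X^{m_\beta} Z^{-n_\beta}.
\]
By Theorem~\ref{thm:ssd}, the subset $\lbrace |\Phi_{n_\alpha,m_\alpha}\rangle\rbrace_\alpha$ is SSD if and only if $A_\alpha A_\beta^\dagger A_\gamma = A_\gamma A_\beta^\dagger A_\alpha$ for all $\alpha,\beta,\gamma\in[l]$.

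Next I would bring each triple product to a normal form with all $Z$'s to the left of all $X$'s, using the Weyl relation that follows from $XZ=\omega ZX$ in the form $X^aZ^b = \omega^{ab}Z^bX^a$. Reordering $Z^{n_\alpha}X^{m_\beta - m_\alpha}Z^{n_\gamma-n_\beta}X^{-m_\gamma}$ by moving $X^{m_\beta-m_\alpha}$ past $Z^{n_\gamma-n_\beta}$ yields
\[
A_\alpha A_\beta^\dagger A_\gamma = \tfrac{1}{d^{3/2}}\,
\omega^{(m_\beta-m_\alpha)(n_\gamma-n_\beta)}\,
Z^{\,n_\alpha+n_\gamma-n_\beta}\, X^{-(m_\alpha+m_\gamma-m_\beta)}.
\]
Crucially, both the $Z$- and $X$-exponents are symmetric under $\alpha\leftrightarrow\gamma$, so the operator factors of $A_\alpha A_\beta^\dagger A_\gamma$ and $A_\gamma A_\beta^\dagger A_\alpha$ coincide; the two products are therefore equal precisely when their scalar prefactors agree. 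Since $\omega$ is a primitive $d$-th root of unity, this happens if and only if $(m_\beta-m_\alpha)(n_\gamma-n_\beta) \equiv (m_\beta-m_\gamma)(n_\alpha-n_\beta)\pmod d$.

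Finally I would expand both sides, cancel the common term $m_\beta n_\beta$, and rearrange; the resulting congruence differs from \eqref{eq:congruence-relations} only by an overall sign, which is immaterial modulo $d$, so the two are equivalent. The argument is essentially routine once the operators are identified, and I expect no conceptual difficulty; the only real pitfall is bookkeeping — keeping the transpose convention for $\op$ consistent and tracking the commutation phase $\omega^{ab}$ correctly through the reordering — so I would double-check the normal-form computation rather than anything else. (The hypothesis $l\le d$ plays no role in the algebraic equivalence itself; it merely reflects that more than $d$ mutually orthogonal Bell states can never share a common Schmidt basis.)
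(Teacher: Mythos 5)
Your proof is correct and follows exactly the route the paper intends (the corollary is stated as an immediate consequence of \Cref{thm:ssd}): compute $\op(\Phi_{n,m}) = \tfrac{1}{\sqrt d}(X^mZ^n)^T = \tfrac{1}{\sqrt d}Z^nX^{-m}$, normal-order the triple products $A_\alpha A_\beta^\dagger A_\gamma$ via the Weyl relation, observe that the operator parts are symmetric in $\alpha\leftrightarrow\gamma$, and match the phases modulo $d$. Your bookkeeping checks out — the resulting congruence is indeed the negation of \eqref{eq:congruence-relations} and hence equivalent, and your side remark that $l\le d$ is automatic (since more than $d$ orthonormal SSD vectors would force more than $d$ linearly independent diagonal $d\times d$ matrices) is also correct.
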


Note that the rank of an MC state $\rho_{AB} = \sum_{i,j} \alpha_{ij} |i\rangle\langle j|_A\ox |i\rangle\langle j|_B$ is equal to the rank of the $|A|\times |B|$ matrix $(\alpha_{ij})$, and hence at most $\min\lbrace|A|,|B|\rbrace$.
In this section, $|A|=|B|=d$, and we focus on MC states with maximal rank $d$ that lie in the span of a collection of $d$ distinct Bell states $\lbrace |\Phi_{n_\alpha,m_\alpha}\rangle \rbrace_{\alpha=1,\dots,d}$.
Let us first introduce a different numbering $k\equiv k(n,m) = nd + m + 1$ for the generalized Bell states, and take $d=3$ and $B=\lbrace 1,6, 8\rbrace$ as an example.
Then \Cref{cor:algebraic-relation} implies that any state of the form
\begin{align}
\omega_{AB} = \sum_{i,\,j\in B} \alpha_{ij} |\Phi_i\rangle \langle \Phi_j|_{AB} \label{eq:d3-mc-state}
\end{align}
with $(\alpha)_{ij}\geq 0$ and $\tr\alpha = 1$ is MC.
In fact, we can use \Cref{cor:algebraic-relation} to search for all blocks $B$ of size $d$ such that a state in the span of $\lbrace |\Phi_{i}\rangle\rbrace_{i\in B}$ is MC.
\Cref{tab:partitions} lists all these blocks of size $d$ for $d\in \lbrace 2,3,4,5\rbrace$.

\begin{table}[ht]
	\centering
	\begin{tabular}{r l}
		\toprule
		$d$ & blocks\\
		\midrule
		$2$ & $\lbrace 1, 2 \rbrace$, $\lbrace 1, 3\rbrace$, $\lbrace 1, 4\rbrace$, $\lbrace 2, 3\rbrace$, $\lbrace 2,4\rbrace$, $\lbrace 3, 4\rbrace$\\
		\midrule
		$3$ & $\lbrace 1  ,   2   ,  3 \rbrace$, 
		$\lbrace 1  ,   4  ,   7\rbrace$, 
		$\lbrace 1  ,   5  ,   9\rbrace$, 
		$\lbrace 1  ,   6  ,   8\rbrace$, 
		$\lbrace 2  ,   4  ,   9\rbrace$, 
		$\lbrace 2  ,   5  ,   8\rbrace$, 
		$\lbrace 	2  ,   6 ,    7\rbrace$,\\
		& $\lbrace 	3   ,  4   ,  8\rbrace$,	
		$\lbrace 3   ,  5  ,   7\rbrace$, 
		$\lbrace 	3  ,   6   ,  9\rbrace$, 
		$\lbrace 	4  ,   5   ,  6\rbrace$, 
		$\lbrace 	7  ,   8   ,  9\rbrace$\\
		\midrule
		$4$ &  $\lbrace  1  ,   2  ,   3   ,  4\rbrace$, 
		$\lbrace 1  ,   3  ,   9  ,  11\rbrace$, 
		$\lbrace 1  ,   3  ,  10  ,  12\rbrace$,
		$\lbrace 1  ,   5  ,   9  ,  13\rbrace$,
		$\lbrace 1  ,   6  ,  11  ,  16\rbrace$,\\
		& $\lbrace 1  ,   7  ,   9  ,  15\rbrace$,
		$\lbrace 1  ,   8  ,  11  ,  14\rbrace$, 
		$\lbrace 2  ,   4  ,   9  ,  11\rbrace$,
		$\lbrace 2  ,   4  ,  10  ,  12\rbrace$,
		$\lbrace 2  ,   5  , 12  ,  15\rbrace$,\\
		& $\lbrace 2  ,   6  ,  10  ,  14\rbrace$,
		$\lbrace 2  ,   7  ,  12  ,  13\rbrace$,
		$\lbrace 2  ,   8  ,  10  ,  16\rbrace$, 
		$\lbrace 3  ,   5  ,  11  ,  13\rbrace$,
		$\lbrace 3  ,   6  ,   9  ,  16\rbrace$,\\
		& $\lbrace 3  ,   7  ,  11  ,  15\rbrace$,
		$\lbrace 3  ,   8  ,   9  ,  14\rbrace$, 
		$\lbrace 4  ,   5  ,  10  ,  15\rbrace$,
		$\lbrace 4  ,   6  ,  12  ,  14\rbrace$,
		$\lbrace 4  ,   7  ,  10  ,  13\rbrace$,\\
		& $\lbrace 4  ,   8  ,  12  ,  16\rbrace$, 
		$\lbrace 5   ,  6  ,   7   ,  8\rbrace$,
		$\lbrace 5  ,   7  ,  13  ,  15\rbrace$,
		$\lbrace 5   ,  7  ,  14  ,  16\rbrace$, 
		$\lbrace 6  ,   8  ,  13  ,  15\rbrace$,\\
		& $\lbrace 6   ,  8  ,  14  ,  16\rbrace$, 
		$\lbrace 9  ,  10  ,  11  ,  12\rbrace$, 
		$\lbrace 13  ,  14  ,  15 , 16\rbrace$ \\
		\midrule
		$5$ & $\lbrace 1  ,   2  ,   3   ,  4   ,  5\rbrace$,
		$\lbrace 1  ,   6  ,  11  ,  16  ,  21\rbrace$,
		$\lbrace 1  ,   7  ,  13  ,  19  ,  25\rbrace$,
		$\lbrace 1  ,   8  ,  15  ,  17  ,  24\rbrace$,\\
		& $\lbrace 1  ,   9  ,  12  ,  20  ,  23\rbrace$, 
		$\lbrace 1  ,  10  ,  14  ,  18  ,  22\rbrace$, 
		$\lbrace 2  ,   6   , 15  ,  19  ,  23\rbrace$,
		$\lbrace 2   ,  7  ,  12  ,  17  ,  22\rbrace$,\\
		& $\lbrace 2  ,   8  ,  14  ,  20  ,  21\rbrace$,
		$\lbrace 2  ,   9  ,  11  ,  18  ,  25\rbrace$,
		$\lbrace 2  ,  10  ,  13  ,  16  ,  24\rbrace$, 
		$\lbrace 3  ,   6  ,  14  ,  17  ,  25\rbrace$,\\
		& $\lbrace 3  ,   7  ,  11 ,   20  ,  24\rbrace$,
		$\lbrace 3  ,   8  ,  13  ,  18  ,  23\rbrace$,
		$\lbrace 3  ,   9  ,  15  ,  16  ,  22\rbrace$,
		$\lbrace 3  ,  10  ,  12  ,  19  ,  21\rbrace$, \\
	& 	$\lbrace 4  ,   6  ,  13  ,  20  ,  22\rbrace$,
		$\lbrace 4  ,   7  ,  15  ,  18  ,  21\rbrace$,
		$\lbrace 4   ,  8  ,  12  ,  16  ,  25\rbrace$,
		$\lbrace 4  ,   9  ,  14  ,  19  ,  24\rbrace$,\\
		& $\lbrace 4  ,  10  ,  11  ,  17  ,  23\rbrace$, 
		$\lbrace 5  ,   6  ,  12  ,  18  ,  24\rbrace$,
		$\lbrace 5  ,   7  ,  14  ,  16 ,   23\rbrace$,
		$\lbrace 5  ,   8  ,  11  ,  19 ,   22\rbrace$,\\
		& 	$\lbrace 5  ,   9  ,  13  ,  17  ,  21\rbrace$,
		$\lbrace 5  ,  10  ,  15  ,  20  ,  25\rbrace$, 
		$\lbrace 6  ,   7   ,  8  ,   9  ,  10\rbrace$, 
		$\lbrace 11  ,  12  ,  13   , 14  ,  15\rbrace$, \\
		& $\lbrace 16  ,  17  ,  18  ,  19  ,  20\rbrace$, 
		$\lbrace 21  ,  22  ,  23  ,  24  ,  25\rbrace$\\
		\bottomrule
	\end{tabular}
	\caption{Blocks of size $d$ for $d\in \lbrace 2,3,4,5\rbrace$ giving rise to MC states according to \Cref{cor:algebraic-relation}.}
	\label{tab:partitions}
\end{table}

\Cref{cor:algebraic-relation} and \Cref{tab:partitions} provide a method of constructing MC states that are block-diagonal in the generalized Bell basis, allowing us to test the quality of our upper bound $\Emp(\cdot)$ on $\Dtwo(\cdot)$.
As a benchmark we use the following SDP bound on $\Dtwo(\cdot)$ recently derived by \textcite{WD16}:
\begin{align}
\Ewd(\rho_{AB}) &\coloneqq \log \left[ \max \left\lbrace \tr(\rho_{AB} R_{AB}) \colon R_{AB}\geq 0, -\one_{AB} \leq R_{AB}^{\Gamma_B} \geq \one_{AB} \right\rbrace \right]\!.  \label{eq:wang-duan}
\end{align}
In \cite{WD16} the authors proved that $\Ewd(\rho_{AB}) \leq E_N(\rho_{AB})$ for all states $\rho_{AB}$, where 
\begin{align}
E_N(\rho_{AB}) = \log \|\rho_{AB}^{\Gamma_B}\|_1
\end{align} 
is the logarithmic negativity \cite{VW02,Ple05}.
We set $d=3$ and consider states of the form
\begin{align}
\rho_{AB} = (1-p)\, \omega_{AB} + p\, \tau_{AB},\label{eq:approx-mc-state}
\end{align}
where $p\in[0,1]$, the state $\omega_{AB}$ is defined as in \eqref{eq:d3-mc-state} for a valid block $B$ of size $3$ from \Cref{tab:partitions}, and $\tau_{AB}$ is the following PPT entangled state with $a=\frac{1}{2}$ \cite{Hor97}:
\begin{align}
\tau_{AB} = \frac{1}{8a+1} \begin{pmatrix}
a    &     0    &     0    &    0  &  a    &     0    &     0    &     0  &  a \\
0  &  a     &    0    &     0    &     0   &      0   &      0    &     0    &     0 \\
0   &      0  &  a    &     0    &     0    &     0    &     0    &     0    &     0 \\
0    &     0   &      0  &  a    &     0    &     0   &      0    &     0    &     0 \\
a    &     0    &     0    &     0  &  a    &     0    &     0    &     0  &  a \\
0    &     0    &     0   &      0   &      0  &  a    &     0    &     0    &     0 \\
0    &     0    &     0    &     0    &     0   &      0  &  \frac{1+a}{2}   &      0  &  \frac{\sqrt{1-a^2}}{2} \\
0    &     0    &     0    &     0    &     0    &     0    &     0  &  a    &     0 \\
a    &     0    &     0    &     0  &  a    &     0  &  \frac{\sqrt{1-a^2}}{2}    &     0  &  \frac{1+a}{2}
\end{pmatrix}
\end{align}

In \Cref{fig:d3-block}, we compare the bounds $\Emp(\cdot)$ and $\Ewd(\cdot)$ for $1000$ random states of the form given in \eqref{eq:approx-mc-state} (selecting both the block $B$ as well as the matrix $\alpha$ uniformly at random) for the values $p\in\lbrace 0.1,0.25,0.5,0.75\rbrace$.
For a state $\rho_{AB}$ of the form given in \eqref{eq:approx-mc-state}, our bound evaluates to
\begin{align}
\Emp(\rho_{AB}) = (1-p) I(A\rangle B)_\omega
\end{align} 
due to \Cref{thm:two-way-dist-upper-bound}.
Evidently, it performs particularly well for low values of $p$, for which the state $\rho_{AB}$ is almost MC.

\begin{figure}
	\centering
	\begin{tikzpicture}
	\tikzset{mark size=0.5}
	\begin{axis}[name=plot1,title={$p=0.1$},height=7.5cm,width=7.5cm,clip mode=individual,xlabel = $\Emp$,ylabel=$\Ewd$,legend style = {at = {(axis cs:1.3,0)},anchor = south east},xtick = {0,0.25,0.5,0.75,1,1.25},ytick = {0,0.25,0.5,0.75,1,1.25,1.5}]
	\addplot[only marks,color=black] table[x=u,y=w] {p010.dat};
	\addplot[color=red,thick] table[x=u,y=u] {p010.dat};
	\node[anchor = north west] at (axis cs:0,1.4) {$\Emp < \Ewd$};
	\node[anchor = south east] at (axis cs:1.3,0.3) {$\Emp > \Ewd$};
	\legend{,$\Emp = \Ewd$};
	\end{axis}
	\begin{axis}[name=plot2,title={$p=0.25$},at={($(plot1.east)+(2cm,0)$)},anchor=west,height=7.5cm,width=7.5cm,clip mode=individual,xlabel = $\Emp$,ylabel=$\Ewd$,ytick = {0,0.25,0.5,0.75,1,1.25}]
	\addplot[only marks,color=black] table[x=u,y=w] {p025.dat};
	\addplot[color=red,thick] table[x=u,y=u] {p025.dat};
	\end{axis}
	\begin{axis}[name=plot3,title={$p=0.5$},at={($(plot1.south)-(0,2cm)$)},anchor=north,height=7.5cm,width=7.5cm,clip mode=individual,xlabel = $\Emp$,ylabel=$\Ewd$]
	\addplot[only marks,color=black] table[x=u,y=w] {p050.dat};
	\addplot[color=red,thick] table[x=u,y=u] {p050.dat};
	\end{axis}
	\begin{axis}[name=plot4,title={$p=0.75$},at={($(plot2.south)-(0,2cm)$)},anchor=north,height=7.5cm,width=7.5cm,clip mode=individual,xlabel = $\Emp$,ylabel=$\Ewd$,ytick = {0,0.1,0.2,0.3,0.4,0.5}]
	\addplot[only marks,color=black] table[x=u,y=w] {p075.dat};
	\addplot[color=red,thick] table[x=u,y=u] {p075.dat};
	\end{axis}
	\end{tikzpicture}
	\caption{Plot of $\Ewd(\rho_{AB})$ \cite{WD16} given in \eqref{eq:wang-duan} against $\Emp(\rho_{AB})$ from \Cref{thm:two-way-dist-upper-bound}, with each dot corresponding to one of 1000 randomly generated (according to the Haar measure) states $\rho_{AB}$ as defined in \eqref{eq:approx-mc-state} with the indicated value of $p\in \lbrace 0.1,0.25,0.5,0.75\rbrace$ from top left to bottom right, respectively. The red line indicates that $\Emp(\cdot) = \Ewd(\cdot)$, and a dot above (resp.~below) the red line indicates a state $\rho_{AB}$ for which $\Emp(\rho_{AB}) < \Ewd(\rho_{AB})$ (resp.~$\Emp(\rho_{AB}) > \Ewd(\rho_{AB})$).}
	\label{fig:d3-block}
\end{figure}

Particular examples of states of the form as in \eqref{eq:approx-mc-state} are
\begin{align}
\theta_{AB}^{(k)} = (1-p) \sum_{i,\,j\in \lbrace 1, 6, 8\rbrace} \alpha_{ij}^{(k)} |\Phi_i\rangle \langle \Phi_j|_{AB} + p\,\tau_{AB} \label{eq:theta}
\end{align}
for $k=1,2$, where
\begin{align}
\alpha^{(1)} &= \frac{1}{2} \left( |0\rangle\langle 0| + |\psi\rangle\langle\psi| \right) &
\alpha^{(2)} &= |\psi\rangle\langle\psi| &
|\psi\rangle &= \frac{1}{\sqrt{3}}\left( |0\rangle + |1\rangle + |2\rangle \right)\!,
\end{align}
and where $\lbrace |0\rangle, |1\rangle, |2\rangle\rbrace$ is the computational basis of $\mathbb{C}^3$.
In \Cref{fig:d3-example}, we plot $\Emp(\theta_{AB}^{(k)})$ and $\Ewd(\theta_{AB}^{(k)})$ for $k=1,2$ as a function of $p$.

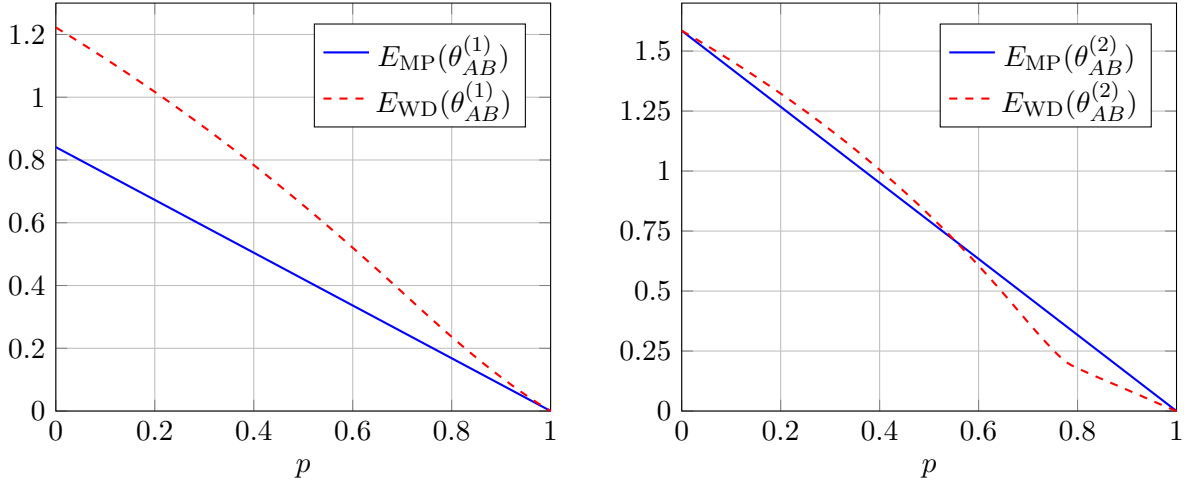
\begin{figure}[t]
	\centering
	\begin{tikzpicture}[baseline]
	\begin{axis}[name=plot1,scale=0.95,xlabel=$p$, xmin=0,xmax=1,grid = major,ymin=0,ymax=1.3,legend style = {at = {(0.95,0.95)},anchor = north east},ytick={0,0.2,0.4,0.6,0.8,1,1.2}]
	\addplot[color=blue,thick] table[x=p,y=u] {d3-example.dat};
	\addplot[color=red,dashed,thick] table[x=p,y=w] {d3-example.dat};
	\legend{$\Emp(\theta_{AB}^{(1)})$,$\Ewd(\theta_{AB}^{(1)})$};
	\end{axis}
	\begin{axis}[name=plot2,scale=0.95,anchor = south west, at=(plot1.right of south east),xlabel=$p$,xmin=0,xmax=1,grid = major,ymin=0,ymax=1.7,legend style = {at = {(0.95,0.95)},anchor = north east},xshift = 1.5cm, ytick={0,0.25,0.5,0.75,1,1.25,1.5}]
	\addplot[color=blue,thick] table[x=p,y=u] {d3-example2.dat};
	\addplot[color=red,dashed,thick] table[x=p,y=w] {d3-example2.dat};
	\legend{$\Emp(\theta_{AB}^{(2)})$,$\Ewd(\theta_{AB}^{(2)})$};
	\end{axis}
	\end{tikzpicture}
	\caption{Plot of $\Emp(\theta_{AB}^{(k)})$ from \Cref{thm:two-way-dist-upper-bound} (blue, solid) and $\Ewd(\theta_{AB}^{(k)})$ \cite{WD16} given in \eqref{eq:wang-duan} (red, dashed) as a function of $p$ for $k=1$ (left) and $k=2$ (right), where $\theta_{AB}^{(k)}$ for $k=1,2$ are defined in \eqref{eq:theta}. Both quantities are upper bounds on the two-way distillable entanglement $\Dtwo(\theta_{AB})$.}
	\label{fig:d3-example}
\end{figure}

\section{Exploiting symmetries}\label{sec:symmetries}

In this section, we derive special forms of the upper bound $\Eda(\cdot)$ on the one-way distillable entanglement (\Cref{thm:one-way-dist-upper-bound}), and of the upper bound $\Emp(\cdot)$ on the two-way distillable entanglement (\Cref{thm:two-way-dist-upper-bound}), respectively, when evaluated on states with symmetries.
In particular, we focus on the classes of isotropic and Werner states \cite{Wer89}.
To this end, we first demonstrate how both $\Eda(\cdot)$ and $\Emp(\cdot)$ can be understood as convex roof extensions.
We then exploit a theorem by \textcite{VW01} that simplifies the calculation of such convex roof extensions under a given symmetry, and apply these results to isotropic and Werner states.

\subsection{Bounds on distillable entanglement as convex roof extensions}\label{sec:convex-roof-with-symmetries}
We first review convex roof extensions of a function.
Let $K$ be a compact convex set, $M\subset K$ an arbitrary subset, and $\vphi\colon M \rightarrow \barR \coloneqq \mathbb{R}\cup\lbrace\infty\rbrace$ a function.
The \emph{convex roof} $\hat{\vphi}$ of $\vphi$ on $K$ is defined as
\begin{align}
\hat{\vphi}\colon K &\longrightarrow \barR\\
x &\longmapsto \inf \left\lbrace \sumi_i \lambda_i \vphi(m_i)\colon m_i\in M, \lambda_i \geq 0 \text{ for all $i$}, \sumi_i \lambda_i=1, \sumi_i\lambda_i m_i = x \right\rbrace\!.
\end{align}

Denoting the sets of degradable, antidegradable, maximally correlated, and positive partial transpose states by $\dgb, \adg, \mcs$, and $\ppt$, respectively, we have
\begin{align}
\Eda(\rho_{AB}) & = \inf \left\lbrace \sum_{i\colon \omega_i\in\dgb} p_i I(A\rangle B)_{\omega_i}\colon \rho_{AB} = \sumi_i p_i \omega_i \text{ with } \omega_i\in\dgb\cup\adg \text{ for all $i$.} \right\rbrace\\
\Emp(\rho_{AB}) & = \inf \left\lbrace \sum_{i\colon \omega_i\in\mcs} p_i I(A\rangle B)_{\omega_i}\colon \rho_{AB} = \sumi_i p_i \omega_i \text{ with } \omega_i\in\mcs\cup\ppt \text{ for all $i$.} \right\rbrace\!.
\end{align}
Choosing $K$ as the set of bipartite quantum states (which is convex and compact) and $\vphi(\rho_{AB})=\max\lbrace I(A\rangle B)_\rho, 0\rbrace$, it follows that both quantities can be regarded as the convex roof extension of $\vphi$ for different choices of the subset $M$:
\begin{align}
\hat{\vphi}(\rho_{AB}) = \begin{cases}
\Eda(\rho_{AB}) & \text{for } M = \dgb\cup\adg\\
\Emp(\rho_{AB}) & \text{for } M = \mcs\cup\ppt.
\end{cases}
\end{align}

We now consider states that are invariant under a given symmetry group.
First, we introduce some notation.
Let $G$ be a compact group, and let $K$ be a set with a $G$-action\footnote{For a group $G$ and a set $K$, a $G$-action on $K$ is a map $G\times K \to K$, $(g,k) \mapsto g\cdot k$ satisfying $(gh)\cdot k = g\cdot(h\cdot k)$ for all $g,h\in G$ and $k\in K$, and $e\cdot k = k$ for all $k\in K$ and the identity element $e$ of $G$.}
\begin{align}
G\times K\ni (g,k)\longmapsto g\cdot k \in K
\end{align} 
that preserves convex combinations, i.e., $g\cdot (\lambda x + (1-\lambda)y) = \lambda g\cdot x + (1-\lambda) g\cdot y$ for $x,y\in K$ and $\lambda\in[0,1]$.
Denoting the Haar measure on $G$ by $d g$, we define the $G$-twirl
\begin{align}
\cT_G(x) \coloneqq \int_G dg\, g\cdot x,
\end{align}
and we denote by $\cT_G(K)\coloneqq \lbrace k\in K\colon \cT_G(k) = k\rbrace$ the set of all $G$-invariant elements in $K$.
For any function $\vphi\colon M\to \barR$, we define the following function on $G$-invariant elements:
\begin{align}
\begin{aligned}
\vphi_G \colon \cT_G &\longrightarrow \barR\\
x &\longmapsto \inf\left\lbrace \vphi(y)\colon y\in M, \cT_G(y) = x\right\rbrace\!.
\end{aligned}\label{eq:h_G}
\end{align}
The main result we employ is the following theorem by \textcite{VW01} (see also \cite{TV00}).
\begin{theorem}[\cite{VW01}]\label{thm:symmetries}
	Let $G$ be a compact group with an action on a compact convex set $K$ that preserves convex combinations, and let $\vphi\colon M\to\barR$ be a function defined on an arbitrary subset $M$ of $K$.
	Furthermore, assume that $G\cdot M\subset M$ and $\vphi(g\cdot x) = \vphi(x)$ for all $g\in G$ and $x\in M$.
	Then for all $x\in\cT_G(K)$, 
	\begin{align}
	\hat{\vphi}(x) = \hat{\vphi}_G(x).
	\end{align}
	In particular, if $\vphi_G$ is itself convex on $\cT_G(K)$, then $\hat{\vphi}(x) = \vphi_G(x)$.
\end{theorem}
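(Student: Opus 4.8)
The plan is to prove the two inequalities $\hat\vphi(x)\le\hat\vphi_G(x)$ and $\hat\vphi(x)\ge\hat\vphi_G(x)$ separately for $x\in\cT_G(K)$, exploiting that the $G$-twirl $\cT_G(y)=\int_G dg\,(g\cdot y)$ represents a $G$-invariant point as a (continuous) convex combination of its orbit, along which $\vphi$ is constant by the invariance hypothesis. Throughout I will use two standing facts: the convex roof $\hat\vphi$ is by construction a convex function on $K$; and $\cT_G$ maps $K$ into $\cT_G(K)\subset K$, since the twirl of any point is the barycentre of a probability measure supported on $K$, hence lies in $K$ by convexity and compactness, and is manifestly $G$-invariant.

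For the inequality $\hat\vphi\le\hat\vphi_G$, I would first show $\hat\vphi(y)\le\vphi_G(y)$ for every $y\in\cT_G(K)$. Fix $z\in M$ with $\cT_G(z)=y$; since $G\cdot M\subset M$, the entire orbit $\lbrace g\cdot z\rbrace_{g\in G}$ lies in $M$, and $y=\int_G dg\,(g\cdot z)$ exhibits $y$ as a barycentre of orbit points. In our finite-dimensional setting this barycentre lies in the convex hull of the orbit, so by Carathéodory it equals a finite convex combination $y=\sum_k\nu_k (g_k\cdot z)$ with $g_k\cdot z\in M$; since $\vphi(g_k\cdot z)=\vphi(z)$, this decomposition of $y$ has $\vphi$-cost $\sum_k\nu_k\vphi(z)=\vphi(z)$, whence $\hat\vphi(y)\le\vphi(z)$. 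Taking the infimum over all such $z$ gives $\hat\vphi(y)\le\vphi_G(y)$. Then for any decomposition $x=\sum_j\mu_j y_j$ with $y_j\in\cT_G(K)$, convexity of $\hat\vphi$ yields $\hat\vphi(x)\le\sum_j\mu_j\hat\vphi(y_j)\le\sum_j\mu_j\vphi_G(y_j)$, and minimizing over such decompositions gives $\hat\vphi(x)\le\hat\vphi_G(x)$.

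For the reverse inequality I would twirl an arbitrary decomposition. Let $x=\sum_i\lambda_i m_i$ with $m_i\in M$. Because $x\in\cT_G(K)$ is invariant, $x=g\cdot x=\sum_i\lambda_i(g\cdot m_i)$ for every $g$; integrating against the Haar measure and using that $\cT_G$ preserves convex combinations gives $x=\sum_i\lambda_i\cT_G(m_i)$, a decomposition of $x$ into elements $\cT_G(m_i)\in\cT_G(K)$. Since $m_i$ is an admissible competitor in the infimum defining $\vphi_G(\cT_G(m_i))$, we have $\vphi_G(\cT_G(m_i))\le\vphi(m_i)$, so this decomposition has $\vphi_G$-cost at most $\sum_i\lambda_i\vphi(m_i)$; therefore $\hat\vphi_G(x)\le\sum_i\lambda_i\vphi(m_i)$, and taking the infimum over decompositions of $x$ into $M$ yields $\hat\vphi_G(x)\le\hat\vphi(x)$. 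Combining the two bounds gives $\hat\vphi(x)=\hat\vphi_G(x)$, and the ``in particular'' statement is then immediate: if $\vphi_G$ is convex on $\cT_G(K)$, its convex roof equals itself, so $\hat\vphi(x)=\vphi_G(x)$.

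I expect the main obstacle to be the measure-theoretic bookkeeping in the first inequality---namely, justifying that the twirl, which is \emph{a priori} an integral, may be replaced by a genuine finite convex combination eligible for the convex-roof infimum (handled by Carathéodory in finite dimensions, or alternatively by extending the convex roof to permit continuous decompositions), and that all twirled points indeed remain in $K$ (which rests on compactness and closedness of $K$). The algebraic heart of the argument---that invariance of $\vphi$ along orbits makes twirling $\vphi$-cost-neutral---is straightforward once these technical points are in place.
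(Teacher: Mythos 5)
Your proof is correct. The paper itself does not prove this theorem---it is imported verbatim from \textcite{VW01}---and your argument is essentially the original twirling proof from that reference: twirling a decomposition $x=\sum_i\lambda_i m_i$ into $x=\sum_i\lambda_i\cT_G(m_i)$ gives $\hat{\vphi}_G\leq\hat{\vphi}$, while the Carath\'eodory step (replacing the Haar integral $\int_G dg\,(g\cdot z)$ by a finite convex combination of orbit points, valid in finite dimensions since the orbit is compact and its convex hull therefore closed) combined with convexity of the roof gives $\hat{\vphi}\leq\hat{\vphi}_G$; the only point you pass over quickly is that $\cT_G(x)=x$ implies $g\cdot x=x$ for all $g$, but this follows in one line from affineness of each $g\cdot(\cdot)$ and Haar invariance, so nothing is missing.
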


\subsection{Isotropic states and depolarizing channels}\label{sec:symmetries-isotropic}
We choose $G=\cU(d)$, the unitary group on $\mathbb{C}^d$, and consider the following action on $K=\cD(\cH_A\ox\cH_B)$, where $|A|=|B|=d$:
\begin{align}
\cU(d) \times \cD(\cH_A\ox\cH_B) \ni (U,\rho_{AB})\longmapsto U\cdot \rho_{AB} \coloneqq (U\ox \bar{U})\rho_{AB} (U\ox \bar{U})^\dagger.
\end{align}
This action is linear and thus preserves convex combinations.
The set of $G$-invariant states, $\cT_G(K)$, is the one-parameter family  $\lbrace I_d(f)\colon f\in[0,1]\rbrace$ of isotropic states:
\begin{align}
I_d(f) \coloneqq f \Phi_+ + \frac{1-f}{d^2-1} (\one_{d^2}-\Phi_+).
\end{align}
We have $\cT_G(\rho_{AB}) = I_d(f)$ with $f = \langle\Phi^+ |\rho_{AB}|\Phi^+\rangle$ for all $\rho_{AB}$.
Hence, setting 
\begin{align}
\vphi(\rho_{AB}) &= \max\lbrace I(A\rangle B)_\rho, 0\rbrace\\
M &=\dgb\cup\adg,
\end{align} 
we can compute $\Eda(I_d(f))$ by first computing $\vphi_G(I_d(f))$ and then taking the convex hull of this function, which coincides with the convex roof $\hat{\vphi}_G$.

The $d$-dimensional isotropic state $I_d(f)$ is the Choi state of the qudit depolarizing channel
\begin{align}
\cD_{p}(\rho) = (1-p) \rho + \frac{p}{d^2-1} \sum_{\substack{0\leq i,j\leq d-1 \\ (i,j)\neq (0,0)}} X^i Z^j \rho (X^i Z^j)^\dagger,
\end{align} 
where $p=1-f$ and $X,Z$ are the generalized Pauli operators defined in \eqref{eq:gen-pauli-operators}.
Since the depolarizing channel is teleportation-simulable,\footnote{
	Here, we call a channel \emph{teleportation-simulable}, if the action of the channel can be simulated by a teleportation protocol between Alice and Bob using the Choi state of the channel as an entanglement resource. 
	More precisely, a channel $\cN\colon A\to B$ with Choi state $\tau_{A'B}$ is called teleportation-simulable, if for any given input state $\rho_A$ the channel output $\cN(\rho_A)$ can be obtained by Alice and Bob performing a teleportation protocol on the joint state $\rho_A\ox \tau_{A'B}$, where $A$ and $A'$ are with Alice, and $B$ is with Bob.
	Note that Alice and Bob can establish the Choi state $\tau_{A'B}$ between them by Alice sending one half of a maximally entangled state through the channel.

	Teleportation-simulable channels in the above sense were called \emph{Choi-stretchable} in \cite{PLOB15}.
	There, the authors also consider more general simulation protocols of quantum channels, consisting of trace-preserving LOCC operations acting on the input state to the channel and a resource state shared between the two parties.}
its quantum capacity is equal to the one-way distillable entanglement of its Choi state \cite{BDSW96},
\begin{align}
Q(\cD_{1-f}) = \Done(I_d(f)).
\label{eq:teleportation-simulation}
\end{align}
Note that the quantum capacity does not increase under the assistance by forward classical communication \cite{BDSW96,BKN00}.
Hence, evaluating our upper bound $\Eda(I_d(f))$ directly yields an upper bound on $Q(\cD_{1-f})$.

\textcite{JV13} proved that $I_d(f)$ is symmetrically extendible (and hence antidegradable by \Cref{lem:antideg-symm-ext}) for $f\leq \frac{1+d}{2d}$, and hence, $\Eda(I_d(f)) = 0$ for $f\leq \frac{1+d}{2d}$.
Note that this was first proved for $d=2$ by \textcite{BDEFMS98}.
Moreover, \Cref{lem:antideg-overlap} in \Cref{sec:antidegradable-states} shows that the maximal overlap of any antidegradable state with the maximally entangled state is at most $\frac{1+d}{2d}$.
Consequently, for $f\geq \frac{1+d}{2d}$ we can restrict $M$ in the definition of $\vphi_G$ in \eqref{eq:h_G} to $\dgb$, the set of degradable states.

Numerics for low dimensions ($d=2,3$) suggest that $\vphi_G$ is convex on $\cT_G(K) = \lbrace I_d(f)\colon f\in[0,1]\rbrace$ as a function of $f$, indicating that taking the convex roof in the following theorem is not necessary (we interpret $I_d(f)$ as the Choi state of the qudit depolarizing channel $\cD_p$ with $p=1-f$):
\begin{theorem}\label{thm:one-way-depol}
	For $d\in\mathbb{N}, d\geq 2$ and $0\leq p\leq \frac{d-1}{2d}$, we have the following upper bound on the quantum capacity of the qudit depolarizing channel $\cD_p$: 
	\begin{align}
	Q(\cD_p) \leq \hat{\vphi}_G(p),
	\end{align} 
	where the function $\vphi_G(p)$ is defined as
	\begin{align} 
	\vphi_G(p)\coloneqq \inf\left\lbrace I(A\rangle B)_\rho\colon \rho_{AB}\in\dgb, \langle\Phi^+|\rho_{AB}|\Phi^+\rangle = 1-p\right\rbrace\!.
	\end{align}
\end{theorem}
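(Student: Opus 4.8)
The plan is to chain the teleportation-simulability identity \eqref{eq:teleportation-simulation} with the main upper bound of \Cref{thm:one-way-dist-upper-bound} and the symmetry reduction of \Cref{thm:symmetries}. First I would translate the capacity into a distillation quantity: by \eqref{eq:teleportation-simulation} we have $Q(\cD_p) = \Done(I_d(1-p))$, and \Cref{thm:one-way-dist-upper-bound} gives $\Done(I_d(1-p)) \leq \Eda(I_d(1-p))$. Recalling the convex-roof reformulation from \Cref{sec:convex-roof-with-symmetries}, $\Eda = \hat{\vphi}$ is the convex roof of $\vphi(\rho_{AB}) = \max\{I(A\rangle B)_\rho, 0\}$ over $M = \dgb \cup \adg$, so the task reduces to evaluating this convex roof on the isotropic state $I_d(1-p)$.

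Next I would invoke \Cref{thm:symmetries} with $G = \cU(d)$ acting by $\rho_{AB} \mapsto (U \ox \bar U)\rho_{AB}(U \ox \bar U)^\dagger$, whose invariant states are exactly the isotropic states. Three hypotheses must be verified: the action preserves convex combinations because it is linear; $\vphi(g\cdot x) = \vphi(x)$ because the coherent information is invariant under local unitaries; and $G\cdot M \subset M$ because $(U \ox \bar U)$-conjugation maps degradable states to degradable states and antidegradable states to antidegradable states, the (anti)degrading isometries being carried along up to a relabelling of the degrading map. Granting these, \Cref{thm:symmetries} yields $\hat{\vphi}(I_d(1-p)) = \hat{\vphi}_G(I_d(1-p))$, where $\hat{\vphi}_G$ is the convex roof of $\vphi_G$ taken within $\cT_G(K) = \{I_d(f)\colon f\in[0,1]\}$.

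It then remains to identify $\vphi_G$ with the function in the statement and to reduce the convex roof to an ordinary convex hull. Since $\cT_G(\rho) = I_d(f)$ precisely when $\langle\Phi^+|\rho|\Phi^+\rangle = f$, the definition \eqref{eq:h_G} reads $\vphi_G(I_d(f)) = \inf\{\max\{I(A\rangle B)_\rho, 0\}\colon \rho\in \dgb\cup\adg,\ \langle\Phi^+|\rho|\Phi^+\rangle = f\}$. For $p \leq \frac{d-1}{2d}$, i.e.\ $f \geq \frac{1+d}{2d}$, \Cref{lem:antideg-overlap} shows that no antidegradable state attains overlap $f$ with $\Phi^+$, so the feasible set collapses to $\dgb$ (at the boundary only symmetric states of vanishing coherent information enter both classes, contributing $0$). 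On degradable states $I(A\rangle B)_\rho \geq 0$, hence $\max\{I(A\rangle B)_\rho, 0\} = I(A\rangle B)_\rho$, which reproduces exactly the $\vphi_G(p)$ of the theorem. Finally, since $f \mapsto I_d(f)$ is affine, every decomposition of $I_d(f)$ into $G$-invariant states is a convex combination with matching overlap parameter, so $\hat{\vphi}_G$ coincides with the lower convex envelope of $p \mapsto \vphi_G(p)$; chaining the (in)equalities gives $Q(\cD_p) \leq \hat{\vphi}_G(p)$.

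I expect the main obstacle to be the verification that $G\cdot M \subset M$, namely that the $(U \ox \bar U)$-action preserves both the degradable and the antidegradable classes, since this is precisely the structural hypothesis that licenses \Cref{thm:symmetries}. By contrast, the overlap restriction at $f = \frac{1+d}{2d}$ and the collapse of the convex roof to a convex hull over the one-parameter isotropic family are comparatively routine.
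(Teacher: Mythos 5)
Your proposal is correct and follows essentially the same route as the paper: teleportation-simulability \eqref{eq:teleportation-simulation}, the bound $\Done \leq \Eda$ from \Cref{thm:one-way-dist-upper-bound}, the convex-roof reading of $\Eda$ from \Cref{sec:convex-roof-with-symmetries}, the symmetry reduction of \Cref{thm:symmetries}, and \Cref{lem:antideg-overlap} to drop the antidegradable branch for $f \geq \frac{1+d}{2d}$. Your explicit verification of $G\cdot M\subset M$ and your handling of the boundary point $f=\frac{1+d}{2d}$ (where admitting antidegradable states could only lower $\vphi_G$, so restricting to $\dgb$ still yields a valid, merely possibly looser, upper bound) go slightly beyond what the paper spells out, but the argument is the same.
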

Note that computing $\vphi_G(p)$ is a non-convex optimization problem, since the set $\dgb$ is not convex. 
However, for low dimensions we can still solve this problem numerically.
For $d=2$ we use the normal form of degradable quantum channels derived by \textcite{WP07} to efficiently carry out the optimization in \Cref{thm:one-way-depol}.
We apply the result from \cite{WP07} to states by interpreting a bipartite state as the Choi state of a CP, but not necessarily TP map.
Hence, we consider states of the form
\begin{align}
\begin{aligned}
\rho_{AB} &= \frac{2}{r_1^2+r_2^2} \left( (\one_A\ox K_1)\Phi^+ (\one_A\ox K_1)^\dagger + (\one_A\ox K_2)\Phi^+ (\one_A\ox K_2)^\dagger \right) \\
\text{where} \quad K_1 &= \begin{pmatrix} r_1 \cos \alpha & 0\\ 0 & r_2 \cos\beta \end{pmatrix}\!, \quad 
K_2 = \begin{pmatrix} 0 & r_2 \sin\beta \\ r_1 \sin\alpha & 0 \end{pmatrix}
\end{aligned}
\label{eq:qubit-qubit-deg-antideg}
\end{align}
for $\alpha,\beta\in\mathbb{R}$ and $0\leq r_1,r_2\leq 1$.
By an extension of the result (about channels) in \cite{WP07} to states, any quantum state $\rho_{AB}$ of the form in \eqref{eq:qubit-qubit-deg-antideg} is degradable or antidegradable.
Moreover, for these states, the condition $\langle\Phi^+|\rho_{AB}|\Phi^+\rangle = 1-p$ in \Cref{thm:one-way-depol} is equivalent to 
\begin{align}
\frac{ (r_1\cos\alpha + r_2 \cos\beta)^2}{2(r_1^2 + r_2^2)} = 1-p.\label{eq:qubit-qubit-overlap-condition}
\end{align}
The minimization of $I(A\rangle B)_{\rho}$ over states $\rho_{AB}$ of the form \eqref{eq:qubit-qubit-deg-antideg} satisfying condition \eqref{eq:qubit-qubit-overlap-condition} can be carried out using MatLab's \texttt{fmincon} function.
The resulting bound is plotted in \Cref{fig:depolarizing-bound}, together with the previously known upper bound on $Q(\cD_p)$ derived in \cite{SSWR15,SS08}.
We note that the upper bound derived in \cite{SSWR15}, which is based on approximate degradability of channels, is identical to the one obtained from \Cref{thm:approx-bound} based on approximate degradability of states.

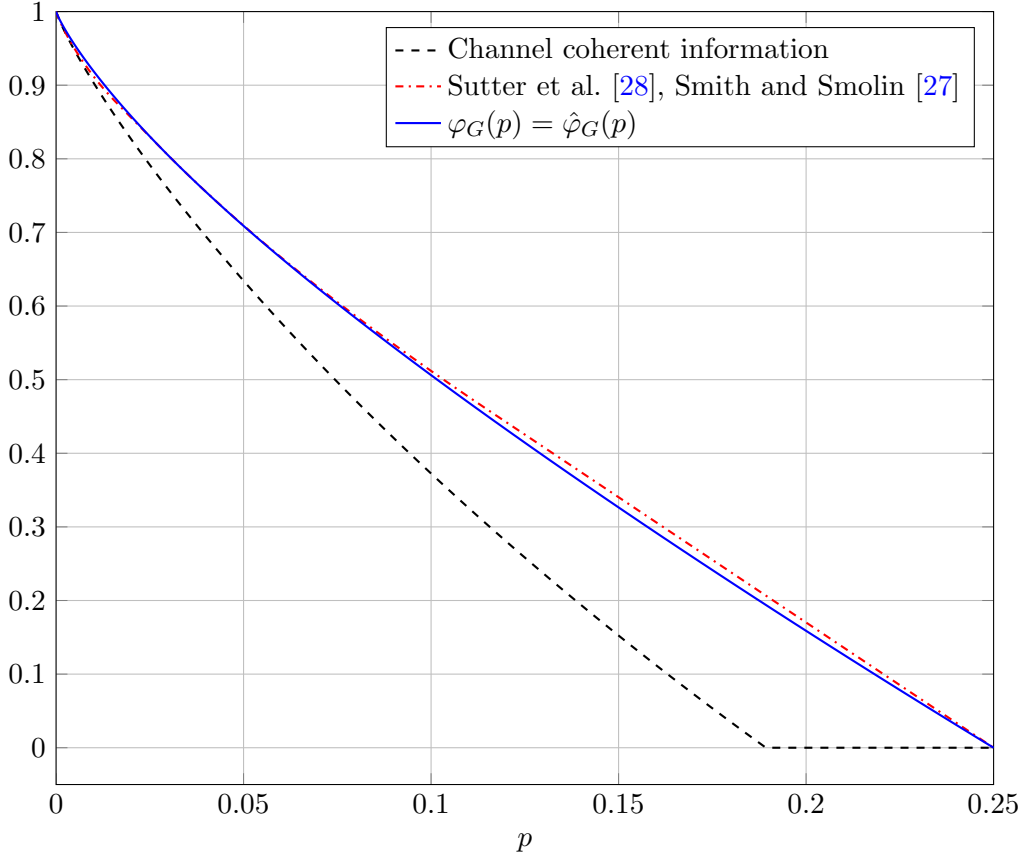
\begin{figure}[ht]
\centering
\begin{tikzpicture}
\begin{axis}[
xlabel=$p$, 
xmin=0,
xmax=0.25,
ymin=-0.05,
ymax=1, 
xtick = {0,0.05,0.1,0.15,0.2,0.25}, 
tick label style={/pgf/number format/fixed}, 
scale=1.8,
grid = major,
legend cell align = left,
]
\addplot[color=black, dashed, thick] table[x=p,y=d] {coh.dat};
\addplot[color=red, thick, dashdotted] table[x=p,y=d] {conv-hull.dat};
\addplot[color=blue,thick] table[x=p,y=d] {dist-optimized.dat};
\legend{Channel coherent information, {\textcite{SSWR15}, \textcite{SS08}}, $\vphi_G(p)=\hat{\vphi}_G(p)$};
\end{axis}
\end{tikzpicture}
\caption{Upper and lower bounds on the quantum capacity $Q(\cD_p)$ of the qubit depolarizing channel ($d=2$) for the interval $p\in[0,0.25]$. The hashing bound \eqref{eq:hashing-bound} yields the channel coherent information (black, dashed) as a lower bound on $Q(\cD_p)$. Our upper bound $\vphi_G(p)$ (blue, solid) obtained via \Cref{thm:one-way-depol} (which is convex itself and thus equal to $\hat{\vphi}_G(p)$) is compared to the upper bound obtained in \cite{SSWR15,SS08} (red, dash-dotted).
Note that the latter is identical to the upper bound on $\Done(\cJ(\cD_p))$ obtained from \Cref{thm:approx-bound}.}
\label{fig:depolarizing-bound}
\end{figure}

For $d=3$, we use another idea from \cite{WP07} to numerically optimize over degradable states.
Given a bipartite state $\rho_{AB}$ together with its complementary state $\rho_{AE}$, the degradability condition reads
\begin{align}
\rho_{AE} = (\id_A\ox \cD)(\rho_{AB}).\label{eq:degradability-condition}
\end{align}
We regard $\rho_{AB}$ and $\rho_{AE}$ as (unnormalized) Choi states of trace-non-preserving CP maps, and assign to $\rho_{AB}$, $\rho_{AE}$, and the Choi state of the degrading map $\cD\colon B\to E$ their respective transfer matrices $T(\cdot)$.\footnote{For a CP map $\cN$ with Choi state $\tau_\cN$, the \emph{transfer matrix} $T(\cN)$ is defined to be the matrix with elements $\langle ij|T(\cN)|kl\rangle = \langle lj|\tau_\cN|ki\rangle$.
	The map $T$ is a linear involution, i.e., $T^2 = \id$.}
The condition \eqref{eq:degradability-condition} then becomes
\begin{align}
T(\cD) = T(\rho_{AE}) T(\rho_{AB})^{-1} , \label{eq:degradability-transfer}
\end{align}
where we used the fact that composition and inversion of channels translate to matrix multiplication and inversion for their transfer matrices, respectively (see e.g.~\cite{Wol12}).
It follows that $\rho_{AB}$ is degradable if and only if the linear map $\cD$ defined through \eqref{eq:degradability-transfer} is CP.

To numerically carry out the optimization in \Cref{thm:one-way-depol}, we use MatLab's \texttt{fmincon} to optimize over states $\rho_{AB}$ satisfying $\langle\Phi^+|\rho_{AB}|\Phi^+\rangle = 1-p$, and for which $T(\rho_{AE}) T(\rho_{AB})^{-1}$ defines the transfer matrix of a completely positive, trace-preserving degrading map $\cD\colon B\to E$.
The resulting upper bound on $Q(\cD_p)$ is depicted in \Cref{fig:depolarizing-bound-d3}.

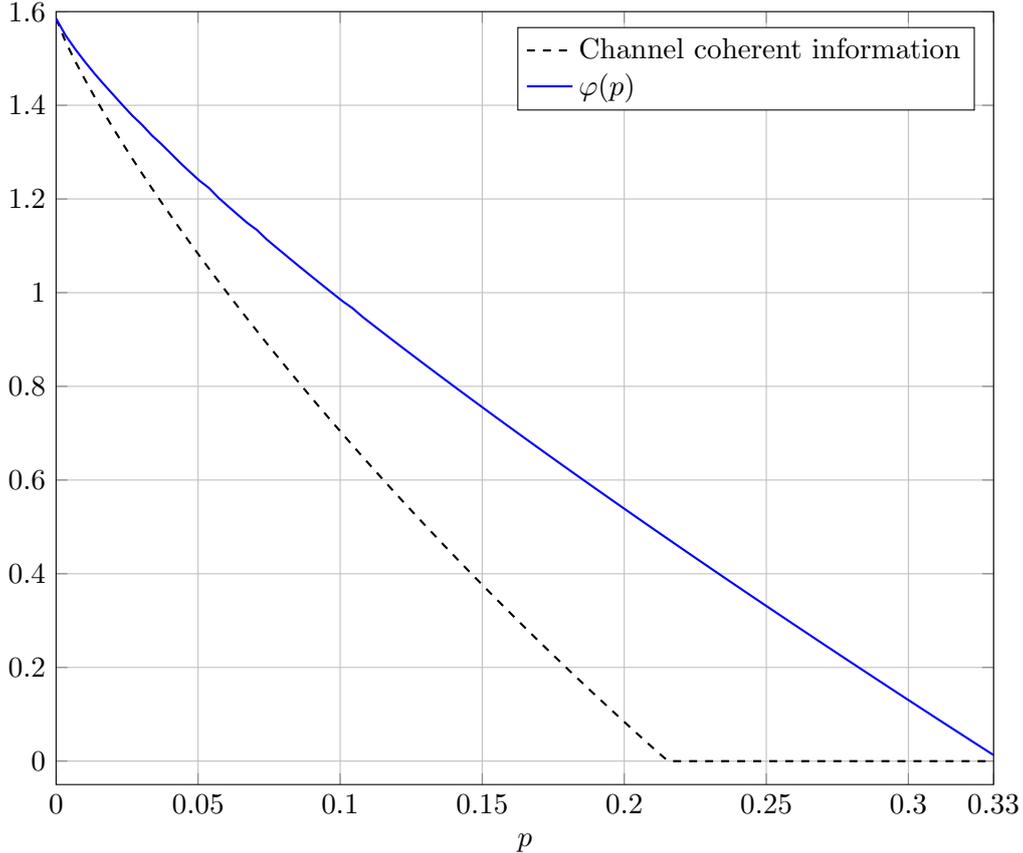
\begin{figure}[ht]
	\centering
	\begin{tikzpicture}
	\begin{axis}[
	xlabel=$p$, 
	xmin=0,
	xmax=0.33,
	ymin=-0.05,
	ymax=1.6, 
	xtick = {0,0.05,0.1,0.15,0.2,0.25,0.3,0.33}, 
	tick label style={/pgf/number format/fixed}, 
	scale=1.8,
	grid = major,
	legend cell align = left,
	]
	\addplot[color=black, dashed, thick] table[x=p,y=c] {coh3.dat};
	\addplot[color=blue,thick] table[x=p,y=u] {depol3nc.dat};

	\legend{Channel coherent information, $\varphi(p)$};
	\end{axis}
	\end{tikzpicture}
	\caption{Upper and lower bounds on the quantum capacity $Q(\cD_p)$ of the qutrit depolarizing channel ($d=3$) for the interval $p\in[0,0.33]$. The hashing bound \eqref{eq:hashing-bound} yields the channel coherent information (black, dashed) as a lower bound on $Q(\cD_p)$. The function $\varphi(p)$ obtained via \Cref{thm:one-way-depol}, which is convex up to numerical noise, is depicted in solid blue.}
	\label{fig:depolarizing-bound-d3}
\end{figure}

We now consider the two-way setting.
The best known bound on $\Dtwo(I_d(f))$ is given by the $\ppt$-relative entropy of entanglement, which for isotropic states is equal to the $\sep$-relative entropy of entanglement, and admits a particularly simple formula \cite{Rai99}:
\begin{align}
E_R^\ppt(I_d(f)) = E_R^\sep(I_d(f)) = \log d - (1-f) \log(d-1) - h(f).\label{eq:ppt-ree-isotropic}
\end{align}
In the following, we use the results from \Cref{sec:convex-roof-with-symmetries} to arrive at a different expression for this bound.

\textcite{Rai99} proved that $\langle \Phi^+|\rho_{AB}|\Phi^+\rangle \leq \frac{1}{d}$ holds for any PPT state $\rho_{AB}$.
Hence, for $f\geq \frac{1}{d}$ we can restrict to the set of maximally correlated states, $M=\mcs$.
Setting once again $\vphi(\rho_{AB}) = \max\lbrace I(A\rangle B)_\rho, 0\rbrace$, we first show that the function $\vphi_G$ defined in \eqref{eq:h_G} achieves the $\ppt$-relative entropy of entanglement \eqref{eq:ppt-ree-isotropic}, and is thus convex as a function of $f$:
\begin{lemma}\label{lem:iso}
	For all $d\geq 2$ and $f\in[0,1]$, we have 
	\begin{align}
	\vphi_G(f)\coloneqq \inf\left\lbrace I(A\rangle B)_\rho\colon \rho_{AB}\in\mcs, \langle\Phi^+|\rho_{AB}|\Phi^+\rangle = f\right\rbrace = E_R^\ppt(I_d(f)).
	\end{align}
	In particular, $\vphi_{G}(f)$ is convex in $f$.
\end{lemma}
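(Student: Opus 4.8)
The plan is to establish the two inequalities $\vphi_G(f)\le E_R^{\ppt}(I_d(f))$ and $\vphi_G(f)\ge E_R^{\ppt}(I_d(f))$ separately, using the explicit value $E_R^{\ppt}(I_d(f))=\log d-(1-f)\log(d-1)-h(f)$ in the regime $f\ge \frac1d$ and $E_R^{\ppt}(I_d(f))=0$ for $f\le\frac1d$. Two facts do the work. First, for an MC state $\rho_{AB}=\sum_{ij}\alpha_{ij}\,|ii\rangle\langle jj|$ one has $I(A\rangle B)_\rho=H(\{\alpha_{ii}\})-S(\alpha)$, and dephasing $\rho$ to its diagonal $\rho'$ gives $I(A\rangle B)_\rho=D(\rho\|\rho')$ exactly as in the proof of \Cref{prop:convexity-two-way}. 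Second, writing $|e\rangle\coloneqq\frac1{\sqrt d}\sum_{i=0}^{d-1}|i\rangle$, the overlap is $\langle\Phi^+|\rho|\Phi^+\rangle=\langle e|\alpha|e\rangle$.

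For the upper bound I would exhibit one MC state saturating the value. Take $\alpha=f\,|e\rangle\langle e|+\frac{1-f}{d-1}\bigl(\one_d-|e\rangle\langle e|\bigr)$, which is positive with unit trace for $f\in[0,1]$, and set $\rho_{AB}=\sum_{ij}\alpha_{ij}\,|ii\rangle\langle jj|\in\mcs$. Then $\langle e|\alpha|e\rangle=f$ satisfies the overlap constraint, while $\alpha_{ii}=\langle i|\alpha|i\rangle=\frac1d$ for every $i$, so the marginal is maximally mixed and $H(\{\alpha_{ii}\})=\log d$. Since the eigenvalues of $\alpha$ are $f$ (once) and $\frac{1-f}{d-1}$ ($d-1$ times), $S(\alpha)=h(f)+(1-f)\log(d-1)$, giving $I(A\rangle B)_\rho=\log d-h(f)-(1-f)\log(d-1)$, which equals $E_R^{\ppt}(I_d(f))$ for $f\ge\frac1d$.

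The lower bound is the crux, and the obstacle is that a general MC state can be maximally correlated in an arbitrary pair of local bases, making a direct optimization unwieldy. I would eliminate this freedom via the isotropic twirl $\cT_G$. Let $\rho\in\mcs$ be arbitrary with $\langle\Phi^+|\rho|\Phi^+\rangle=f$ and dephasing $\rho'$, so $I(A\rangle B)_\rho=D(\rho\|\rho')$. Because $\cT_G$ is a mixture of local unitaries, hence a quantum channel, data processing gives $D(\rho\|\rho')\ge D(\cT_G(\rho)\|\cT_G(\rho'))$. Here $\cT_G(\rho)=I_d(f)$, while $\rho'$ is diagonal, hence separable and PPT, so $\cT_G(\rho')=I_d(f')$ with $f'=\langle\Phi^+|\rho'|\Phi^+\rangle\le\frac1d$ by Rains' bound. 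As both arguments are isotropic, the $(d^2-1)$-fold multiplicities cancel and the relative entropy collapses to the binary classical divergence
\begin{align}
D\bigl(I_d(f)\,\|\,I_d(f')\bigr)=f\log\frac{f}{f'}+(1-f)\log\frac{1-f}{1-f'}.
\end{align}
For $f\ge\frac1d$ this is strictly decreasing in $f'$ on $[0,\tfrac1d]$, so it is minimized at $f'=\frac1d$, where it equals exactly $\log d-(1-f)\log(d-1)-h(f)=E_R^{\ppt}(I_d(f))$. Taking the infimum over $\rho$ yields $\vphi_G(f)\ge E_R^{\ppt}(I_d(f))$, which together with the construction settles the case $f\ge\frac1d$.

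Finally, for $f\le\frac1d$ we have $E_R^{\ppt}(I_d(f))=0$; here $I(A\rangle B)_\rho\ge0$ for MC states, and the value $0$ is attained by \emph{classical} MC states: choosing product bases $\{|u_i\rangle_A\},\{|v_i\rangle_B\}$ with $\sum_i p_i\,|\langle\Phi^+|u_i v_i\rangle|^2=f$ produces a diagonal MC state (zero coherent information) with the prescribed overlap, which is achievable for every $f\in[0,\tfrac1d]$. For the convexity claim, $\vphi_G$ equals $0$ on $[0,\tfrac1d]$ and the strictly convex function $\log d-(1-f)\log(d-1)-h(f)$ on $[\tfrac1d,1]$; since this function has value $0$ and derivative $0$ at $f=\frac1d$, the two pieces join with matching value and slope, so $\vphi_G$ is convex on $[0,1]$. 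This convexity is precisely what lets \Cref{thm:symmetries} identify the convex roof with $\vphi_G$ itself.
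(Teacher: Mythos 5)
Your proof is correct, and on two points it is more complete than the paper's own argument. The achievability half is essentially the paper's: your state $\sum_{ij}\alpha_{ij}|ii\rangle\langle jj|$ with $\alpha=f|e\rangle\langle e|+\frac{1-f}{d-1}\bigl(\one_d-|e\rangle\langle e|\bigr)$ is, up to a local Fourier rotation, the paper's Bell mixture $f\Phi_{0,0}+\frac{1-f}{d-1}\sum_{i=1}^{d-1}\Phi_{0,i}$; writing it directly in the form of \Cref{def:mcs} makes membership in $\mcs$ evident by inspection, so you bypass the SSD criterion of \Cref{cor:algebraic-relation} that the paper invokes. The genuine difference is the converse inequality $\vphi_G(f)\geq E_R^\ppt(I_d(f))$: the paper's proof never spells this out (it is left implicit, following from \Cref{lem:rel-ent} via $I(A\rangle B)_\rho=E_R^\sep(\rho)\geq E_R^\ppt(\rho)$ for MC states together with monotonicity of $E_R^\ppt$ under the LOCC twirl), whereas your argument --- dephase, use $I(A\rangle B)_\rho=D(\rho\|\rho')$, apply data processing under $\cT_G$ to both arguments, and minimize the resulting binary divergence $f\log\frac{f}{f'}+(1-f)\log\frac{1-f}{1-f'}$ over the PPT-allowed range $f'\in[0,\frac{1}{d}]$ --- is self-contained and uses only ingredients already cited in the paper (the Rains identity for MC states and the PPT overlap bound). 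You also treat the regime $f\leq\frac{1}{d}$ explicitly, which the paper glosses over: there the paper's construction yields only the strictly positive value $\log d-(1-f)\log(d-1)-h(f)$, while the infimum is in fact $0=E_R^\ppt(I_d(f))$, so your piecewise reading of the lemma is the correct one. The one small elision is your unproved assertion that classical MC states realize every overlap $f\in[0,\frac{1}{d}]$; this is true and easily patched: take $\rho=\sum_i p_i\, |i\rangle\langle i|\ox U|i\rangle\langle i|U^\dagger$ with $U$ circulant, so that $\langle\Phi^+|\rho|\Phi^+\rangle=\frac{1}{d}|U_{00}|^2$, which sweeps all of $[0,\frac{1}{d}]$ as the eigenphases of $U$ vary. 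Finally, your gluing argument for convexity (value and slope both vanish at $f=\frac{1}{d}$, and the right-hand piece is convex) is a valid replacement for the paper's joint-convexity argument, which only addresses the regime where the closed formula for $E_R^\ppt(I_d(f))$ applies.
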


\begin{proof}
	Consider the state
	\begin{align}
	\rho_{AB} = f \Phi_{0,0} + \frac{1-f}{d-1} \sum_{i=1}^{d-1}\Phi_{0,i},\label{eq:choice-of-rho}
	\end{align}
	where $\Phi_{0,0} = \Phi^+$.
	Since the generalized Bell states are orthogonal to each other, the state $\rho_{AB}$ satisfies $\langle\Phi^+|\rho_{AB}|\Phi^+\rangle = f$, and a simple calculation shows that
	\begin{align}
	I(A\rangle B)_\rho = \log d - (1-f)\log(d-1) - h(f).
	\end{align}
	It remains to be shown that $\rho_{AB}\in \mcs$.
	By \Cref{cor:algebraic-relation}, a mixture of the Bell states $\lbrace \Phi_{n_\alpha,m_\alpha}\rbrace_{\alpha=1,\dots,l}$ with $l\leq d$ is MC if and only if \eqref{eq:congruence-relations} holds for all $\alpha,\beta,\gamma\in [l]$.
	Since $(n_1,m_1)=(0,0)$ in our situation, \eqref{eq:congruence-relations} reduces to $n_\alpha m_\beta = m_\alpha n_\beta \mod d$ for all $\alpha,\beta\in [l]$, and these conditions are easily verified for the state $\rho_{AB}$ defined in \eqref{eq:choice-of-rho} with $\lbrace (n_\alpha,m_\alpha)\rbrace_{\alpha=1,\dots d} = \lbrace (0,0), \dots, (0,d-1)\rbrace$.
	
	Convexity of $\vphi_G(f)$ now follows from convexity of $E_R^\ppt(I_d(f))$ in $f$, which can be seen as follows. 
	First, note that for any $\lambda\in [0,1]$, we have
	\begin{align}
	\lambda I_d(f_1) + (1-\lambda) I_d(f_2) = I_d(\lambda f_1 + (1-\lambda) f_2).
	\end{align}
	Consider then the following:
	\begin{align}
	E_R^\ppt(I_d(\lambda f_1 + (1-\lambda) f_2)) &= \inf_{\sigma\in\ppt} D(\lambda I_d(f_1) + (1-\lambda) I_d(f_2) \| \sigma)\\
	&\leq D(\lambda I_d(f_1) + (1-\lambda) I_d(f_2) \| \lambda\sigma_1 + (1-\lambda)\sigma_2)\\
	&\leq \lambda D(I_d(f_1)\|\sigma_1) + (1-\lambda) D(I_d(f_2)\|\sigma_2)\\
	&= \lambda E_R^\ppt(I_d(f_1)) + (1-\lambda) E_R^\ppt(I_d(f_2)),
	\end{align}
	where in the first inequality we considered $\ppt$-states $\sigma_i$ optimizing the $\ppt$-relative entropy of $I_d(f_i)$ for $i=1,2$, respectively, and in the second inequality we used joint convexity of the quantum relative entropy.
\end{proof}

Hence, we arrive at the following result:

\begin{theorem}\label{thm:two-way-iso}
	For $d\in\mathbb{N}, d\geq 2$ and $f\geq \frac{1}{d}$, 
	\begin{align}
	\Dtwo(I_d(f)) \leq \inf\left\lbrace I(A\rangle B)_\rho\colon \rho_{AB}\in\mcs, \langle\Phi^+|\rho_{AB}|\Phi^+\rangle = f\right\rbrace\!.
	\end{align}
\end{theorem}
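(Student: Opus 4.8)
The plan is to assemble the pieces already in place: the convex-roof reformulation of $\Emp(\cdot)$, the symmetry reduction of \Cref{thm:symmetries}, and the explicit evaluation of $\vphi_G$ carried out in \Cref{lem:iso}. First I would apply \Cref{thm:two-way-dist-upper-bound} to obtain $\Dtwo(I_d(f)) \leq \Emp(I_d(f))$, and then recall from \Cref{sec:convex-roof-with-symmetries} that $\Emp(I_d(f)) = \hat\vphi(I_d(f))$, the convex roof over $M = \mcs\cup\ppt$ of the function $\vphi(\rho_{AB}) = \max\lbrace I(A\rangle B)_\rho, 0\rbrace$.

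Next I would verify that the hypotheses of \Cref{thm:symmetries} hold for $G = \cU(d)$ acting by $U\cdot\rho_{AB} = (U\ox\bar U)\rho_{AB}(U\ox\bar U)^\dagger$. The action is linear and hence preserves convex combinations, and $\cT_G(K)$ is exactly the family of isotropic states with $\cT_G(\rho_{AB}) = I_d(f)$ for $f = \langle\Phi^+|\rho_{AB}|\Phi^+\rangle$. The condition $G\cdot M\subset M$ holds because the class $\mcs$ is invariant under arbitrary local unitaries (an MC state remains MC with respect to the rotated product bases) and the PPT property is likewise preserved, since the partial transpose of a locally rotated state is a unitary conjugate of the original partial transpose. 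Finally, $\vphi(g\cdot x) = \vphi(x)$ is immediate from invariance of the coherent information under local unitaries. \Cref{thm:symmetries} then gives $\hat\vphi(I_d(f)) = \hat\vphi_G(I_d(f))$.

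The remaining step is to identify $\vphi_G(I_d(f))$ with the right-hand side. By \eqref{eq:h_G}, $\vphi_G(I_d(f))$ is the infimum of $\vphi(y)$ over $y\in\mcs\cup\ppt$ with $\langle\Phi^+|y|\Phi^+\rangle = f$. For $f\geq\frac{1}{d}$, Rains' bound $\langle\Phi^+|\tau|\Phi^+\rangle\leq\frac{1}{d}$ for PPT states $\tau$ eliminates the PPT contributions, so the infimum is over MC states alone; since MC states have non-negative coherent information, $\vphi$ there equals $I(A\rangle B)$, and $\vphi_G(I_d(f))$ coincides precisely with the infimum in the theorem statement. By \Cref{lem:iso} this quantity equals $E_R^\ppt(I_d(f))$ and is convex in $f$. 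Invoking the final clause of \Cref{thm:symmetries}, convexity of $\vphi_G$ yields $\hat\vphi(I_d(f)) = \vphi_G(I_d(f))$, and stringing the chain together gives $\Dtwo(I_d(f)) \leq \vphi_G(I_d(f))$, which is the claimed bound.

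The only genuine obstacle is bookkeeping rather than a new idea: one must confirm carefully that the $\cU(d)$-twirl really maps $\mcs\cup\ppt$ into itself (the MC case relies on the basis-dependence in \Cref{def:mcs}) and that the constraint $\cT_G(y) = I_d(f)$ is equivalent to fixing the overlap $\langle\Phi^+|y|\Phi^+\rangle = f$, so that the restricted infimum genuinely realizes $\vphi_G$. Once these identifications are in hand, the theorem follows by combining \Cref{thm:two-way-dist-upper-bound}, \Cref{thm:symmetries}, and \Cref{lem:iso} with no additional computation.
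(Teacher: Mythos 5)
Your proposal is correct and follows essentially the same route as the paper: \Cref{thm:two-way-dist-upper-bound} together with the convex-roof reading of $\Emp(\cdot)$, the symmetry reduction of \Cref{thm:symmetries} for the $U\ox\bar{U}$ twirl, Rains' overlap bound $\langle\Phi^+|\tau|\Phi^+\rangle\leq\frac{1}{d}$ for PPT states to drop the $\ppt$ branch when $f\geq\frac{1}{d}$, and \Cref{lem:iso} to conclude that $\vphi_G$ is convex so that $\hat{\vphi}_G=\vphi_G$ and no convex-hull step remains. Your additional checks---invariance of $\mcs$ and $\ppt$ under the local-unitary action (with the rotated bases $U|i\rangle_A$, $\bar{U}|i\rangle_B$ witnessing the MC form) and the equivalence of $\cT_G(y)=I_d(f)$ with the overlap constraint $\langle\Phi^+|y|\Phi^+\rangle=f$---are precisely the bookkeeping the paper leaves implicit, and they go through as you describe.
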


\subsection{Werner states}
We again set $G=\cU(d)$ and consider the following action of $G$ on $K = \cD(\cH_A\ox\cH_B)$:
\begin{align}
\cU(d)\times \cD(\cH_A\ox\cH_B) \ni (U,\rho_{AB})\longmapsto U\cdot \rho_{AB} \coloneqq (U\ox U)\rho_{AB} (U\ox U)^\dagger.
\end{align}
The set of $G$-invariant states is the one-parameter family of Werner states $\lbrace W_d(p)\colon p\in [0,1]\rbrace$ \cite{Wer89} with
\begin{align}
W_d(p) \coloneqq \frac{1-p}{d^2+d} (\one_{d^2} + \F_d) + \frac{p}{d^2-d} (\one_{d^2} - \F_d), \label{eq:werner}
\end{align}
where $\F_d \coloneqq \sum_{i,j=0}^{d-1} |i\rangle \langle j| \ox |j \rangle \langle i|$ is the swap operator on $\mathbb{C}^d\ox\mathbb{C}^d$.
We have for all $\rho_{AB}$ that $\cT_G(\rho_{AB}) = W_d(p)$ with $p = \frac{1}{2}(1-\tr(\F_d \rho_{AB}))$.

For Werner states, the best known upper bound on the two-way distillable entanglement $\Dtwo(W_d(p))$ is Rains' bound on $D_\Gamma(W_d(p))$ \cite{Rai01}:
\begin{align}
D_\Gamma(W_d(p)) \leq \begin{dcases}
0 & 0 \leq p \leq \frac{1}{2}\\
1-h(p) & \frac{1}{2} \leq p \leq \frac{1}{2} + \frac{1}{d}\\
\log\left(\frac{d-2}{d}\right) + p \log\left(\frac{d+2}{d-2}\right) & \frac{1}{2} + \frac{1}{d} \leq p \leq 1.
\end{dcases} \label{eq:rains-bound}
\end{align}
Note that for $p\geq 1/2$ we have 
\begin{align}
E_R^\ppt(W_d(p)) = \min_{\sigma\in\ppt} D(W_d(p)\|\sigma) = D(W_d(p)\|W_d(1/2)) = 1-h(p),\label{eq:werner-ppt-rel-ent}
\end{align}
since the Werner states are $\ppt$ if and only if $p\in[0,1/2]$, and for $p\geq 1/2$ the $\ppt$-Werner state closest (in relative entropy distance) to $W_d(p)$ is $W_d(1/2)$.
The expression for $E_R^\ppt(W_d(p))$ in \eqref{eq:werner-ppt-rel-ent} was proved by \textcite{VPRK97} for the case $d=2$.
Moreover, \textcite{VW01} derived an expression for the $\sep$-relative entropy of entanglement of Werner states in arbitrary dimensions that is identical to \eqref{eq:werner-ppt-rel-ent} and implies it by the argument above.
We see that Rains' bound on $D_\Gamma(W_d(p))$ is equal to $E_R^\ppt(W_d(p))$ for $\frac{1}{2} \leq p \leq \frac{1}{2} + \frac{1}{d}$, and strictly tighter for $p> \frac{1}{2} + \frac{1}{d}$.

We have for any PPT state $\sigma_{AB}$ that $\tr(\F_d \sigma_{AB}) \geq 0$ and hence $\frac{1}{2}(1-\tr(\F_d \sigma_{AB})) \leq \frac{1}{2}$, since $\F_d = d (\Phi^+)^{\Gamma_B}$ and $\sigma_{AB}^{\Gamma_B}\geq 0$ by assumption.
Therefore, we set $M=\mcs$ for $p\in(\frac{1}{2},1)$, and $\vphi(\rho_{AB}) = \max\lbrace I(A\rangle B)_\rho, 0\rbrace$ as before.
As for the isotropic states, we first show that the function $\vphi_{G}$ in \eqref{eq:h_G} achieves the $\ppt$-relative entropy of entanglement, and is thus convex:
\begin{lemma}\label{lem:werner}
	For all $d\geq 2$ and $\frac{1}{2}\leq p \leq 1$, we have
	\begin{align}
	\vphi_{G}(p)\coloneqq \inf\left\lbrace I(A\rangle B)_\rho\colon \rho_{AB}\in\mcs, \tr(\F_d \rho_{AB}) = 1-2p \right\rbrace = E_R^\ppt(W_d(p)).
	\end{align}
	In particular, $\vphi_{G}(p)$ is convex in $p$.
\end{lemma}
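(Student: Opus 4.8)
The plan is to follow the template of \Cref{lem:iso} for isotropic states: produce one explicit maximally correlated state that satisfies the swap constraint and whose coherent information equals $E_R^\ppt(W_d(p)) = 1-h(p)$ (yielding the ``$\leq$'' direction), and then obtain the matching lower bound and convexity from the relative-entropy-of-entanglement machinery.

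First I would construct the candidate inside a two-dimensional entangled subspace of each factor. Writing $|\phi_\pm\rangle \coloneqq \tfrac{1}{\sqrt2}(|01\rangle \pm |10\rangle)$, the swap acts as $\F_d|\phi_\pm\rangle = \pm|\phi_\pm\rangle$, so setting
\[
\omega_{AB} \coloneqq (1-p)\,|\phi_+\rangle\langle\phi_+| + p\,|\phi_-\rangle\langle\phi_-|
\]
gives $\tr(\F_d\,\omega) = (1-p)-p = 1-2p$; hence $\omega$ is feasible and $\cT_G(\omega) = W_d(p)$. The pair $\{|\phi_+\rangle,|\phi_-\rangle\}$ is simultaneously Schmidt decomposable — either one checks the condition of \Cref{thm:ssd}, or one notes the common Schmidt bases $\{|0\rangle,|1\rangle\}_A$ and $\{|1\rangle,|0\rangle\}_B$ — so by the remark following \eqref{eq:ssd} the mixture $\omega$ is maximally correlated. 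In those common Schmidt bases $\omega$ takes the MC form with $2\times2$ coefficient matrix
\[
\alpha = \frac{1}{2}\begin{pmatrix} 1 & 1-2p \\ 1-2p & 1 \end{pmatrix},
\]
and, using the identity $I(A\rangle B)_\omega = H(\{\alpha_{ii}\}_i) - S(\alpha)$ valid for any MC state (exactly as in the non-negativity argument for the coherent information of MC states), the diagonal has Shannon entropy $1$ while the eigenvalues $p,1-p$ give $S(\alpha)=h(p)$. Thus $I(A\rangle B)_\omega = 1-h(p) = E_R^\ppt(W_d(p))$ by \eqref{eq:werner-ppt-rel-ent}, establishing $\vphi_G(p)\leq E_R^\ppt(W_d(p))$.

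For the reverse inequality I would argue that every feasible MC state has coherent information at least $E_R^\ppt(W_d(p))$. By \Cref{lem:rel-ent} the coherent information of an MC state equals its separable relative entropy of entanglement, $I(A\rangle B)_\omega = E_R^\sep(\omega)$; since the $U\ox U$-twirl is a mixture of local unitaries, hence an $\sep$-preserving operation, joint convexity and unitary invariance of the relative entropy give $E_R^\sep(\omega)\geq E_R^\sep(\cT_G(\omega)) = E_R^\sep(W_d(p))$, and $E_R^\sep(W_d(p)) = E_R^\ppt(W_d(p))$ by \eqref{eq:werner-ppt-rel-ent} and the accompanying discussion. Convexity of $\vphi_G(p) = 1-h(p)$ in $p$ is then immediate from concavity of the binary entropy, or may alternatively be deduced exactly as for the convexity of $E_R^\ppt(I_d(f))$ in \Cref{lem:iso}.

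The one genuinely creative step — and the main obstacle relative to the isotropic case — is finding the explicit MC state: an MC state written in the standard correlated basis $\{|ii\rangle\}$ always has $\tr(\F_d\rho)=1$, so to reach $\tr(\F_d\rho)=1-2p<1$ one must place the state in a rank-two entangled subspace and exploit that the symmetric and antisymmetric Bell vectors are SSD yet carry opposite swap eigenvalues $\pm1$. Once this state is in hand, the entropy computation and the twirl-monotonicity argument are routine.
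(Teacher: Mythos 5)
Your proposal is correct and takes essentially the same route as the paper: your $\omega_{AB}$ is exactly the paper's achieving state $(1-p)\Psi_+ + p\Psi_-$, with the same computation $I(A\rangle B)_\omega = 1-h(p)$ and the same convexity argument as in \Cref{lem:iso}. The only differences are matters of explicitness — you verify maximal correlation via the common Schmidt bases directly where the paper cites \cite{HH04}, and you spell out the matching lower bound ($I(A\rangle B)_\omega = E_R^\sep(\omega) \geq E_R^\sep(W_d(p)) = E_R^\ppt(W_d(p))$ by \Cref{lem:rel-ent} and twirl monotonicity), which the paper leaves implicit in the discussion surrounding \eqref{eq:werner-ppt-rel-ent}.
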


\begin{proof}
	We consider the state
	\begin{align}
	\rho_{AB} &= (1-p) \Psi_+ + p \Psi_-,
	\end{align}
	where $|\Psi_\pm\rangle = \frac{1}{\sqrt{2}}(|01\rangle \pm |10\rangle)$ satisfying $\F_d|\Psi_\pm\rangle = \pm |\Psi_\pm\rangle$.
	Hence, $\tr(\F_d \rho_{AB}) = 1-2p$, and furthermore $I(A\rangle B)_\rho = 1-h(p)$.
	Moreover, $\rho_{AB}$ is MC as a mixture of two Bell states \cite{HH04}, which concludes the proof of the equality.
	The convexity of $\vphi_{G}(p)$ in $p$ follows in the same way as in \Cref{lem:iso}.
\end{proof}

We thus arrive at the following result:
\begin{theorem}\label{thm:two-way-werner}
	For $d \geq 2$ and $p\geq \frac{1}{2}$, we have the following upper bound on the two-way distillable entanglement of Werner states:
	\begin{align}
	\Dtwo(W_d(p)) \leq \inf\left\lbrace I(A\rangle B)_\rho\colon \rho_{AB}\in\mcs, \tr(\F_d \rho_{AB}) = 1-2p \right\rbrace.
	\end{align}
\end{theorem}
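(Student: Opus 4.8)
The plan is to assemble the general two-way bound of \Cref{thm:two-way-dist-upper-bound} with the symmetrization result of \Cref{thm:symmetries} and the explicit convexity established in \Cref{lem:werner}, exactly mirroring the argument already used for isotropic states in \Cref{thm:two-way-iso}. First I would recall that \Cref{thm:two-way-dist-upper-bound} gives $\Dtwo(W_d(p)) \leq \Emp(W_d(p))$, and that, as observed in \Cref{sec:convex-roof-with-symmetries}, the quantity $\Emp(\cdot) = \hat{\vphi}(\cdot)$ is precisely the convex roof of $\vphi(\rho_{AB}) = \max\lbrace I(A\rangle B)_\rho, 0\rbrace$ taken over the subset $M = \mcs\cup\ppt$ of the compact convex set $K = \cD(\cH_A\ox\cH_B)$. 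It therefore suffices to evaluate this convex roof on the $\cU(d)$-invariant state $W_d(p)$.

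Next I would invoke \Cref{thm:symmetries} with $G = \cU(d)$ acting by $\rho_{AB}\mapsto (U\ox U)\rho_{AB}(U\ox U)^\dagger$, whose invariant states are exactly the Werner states $W_d(p)$ with $p = \tfrac{1}{2}(1-\tr(\F_d\rho_{AB}))$. The two hypotheses of \Cref{thm:symmetries} must be checked: that $G\cdot M\subset M$ and that $\vphi(g\cdot x)=\vphi(x)$. The latter is immediate, since the coherent information is invariant under the local unitary $U\ox U$. For the former, $\ppt$ is preserved by any local unitary, and $\mcs$ is preserved as well: conjugating a decomposition $\sum_{i,j}\alpha_{ij}|i\rangle\langle j|_A\ox|i\rangle\langle j|_B$ by $U\ox U$ yields the same matrix $(\alpha_{ij})$ expressed in the rotated bases $\lbrace U|i\rangle\rbrace$, which is again maximally correlated. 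Thus \Cref{thm:symmetries} applies and gives $\hat{\vphi}(W_d(p)) = \hat{\vphi}_G(W_d(p))$, where $\vphi_G$ is the reduced function of \eqref{eq:h_G}. At this stage I would use the restriction to $\mcs$: for $p>\tfrac{1}{2}$, every $\ppt$ state $\sigma_{AB}$ satisfies $\tr(\F_d\sigma_{AB})\geq 0$, hence $\cT_G(\sigma_{AB}) = W_d(p')$ with $p'\leq\tfrac{1}{2}<p$, so no $\ppt$ state twirls to $W_d(p)$ and only maximally correlated states contribute to $\vphi_G(p)$; moreover $\vphi = I(A\rangle B)$ on $\mcs$ because such states have non-negative coherent information. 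Rewriting the twirl constraint $\cT_G(\rho)=W_d(p)$ as $\tr(\F_d\rho)=1-2p$ then identifies $\vphi_G(p)$ with the infimum in the theorem statement.

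Finally, \Cref{lem:werner} shows that $\vphi_G(p)$ is convex in $p$ for $p\geq\tfrac{1}{2}$, so by the last assertion of \Cref{thm:symmetries} the convex roof collapses, $\hat{\vphi}_G(W_d(p)) = \vphi_G(p)$, and no further convex hull is needed. Chaining these identities yields
\begin{align}
\Dtwo(W_d(p)) \leq \Emp(W_d(p)) = \hat{\vphi}(W_d(p)) = \hat{\vphi}_G(W_d(p)) = \vphi_G(p),
\end{align}
which is exactly the claimed bound (the boundary case $p=\tfrac{1}{2}$, where $\vphi_G(p)=0$, follows by continuity or trivially). The proof is thus a direct assembly of established results; I expect the only point requiring genuine care to be the verification of the symmetry hypotheses of \Cref{thm:symmetries} --- in particular the stability $G\cdot\mcs\subset\mcs$ --- and the reduction from $M=\mcs\cup\ppt$ to $M=\mcs$ via the $\ppt$ overlap bound $\tr(\F_d\sigma_{AB})\geq 0$, since the substantive analytic content (the value and convexity of $\vphi_G$) is already supplied by \Cref{lem:werner}.
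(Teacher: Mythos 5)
Your proposal is correct and follows essentially the same route as the paper: read the bound of \Cref{thm:two-way-dist-upper-bound} as the convex roof of $\vphi=\max\lbrace I(A\rangle B)_\rho,0\rbrace$ over $M=\mcs\cup\ppt$, reduce to $M=\mcs$ for $p>\tfrac{1}{2}$ via $\tr(\F_d\sigma_{AB})\geq 0$ for PPT states, symmetrize with \Cref{thm:symmetries} for the $U\ox U$ action, and collapse $\hat{\vphi}_G$ to $\vphi_G(p)$ using the convexity supplied by \Cref{lem:werner}. Your explicit verifications of the hypotheses of \Cref{thm:symmetries} --- stability $G\cdot\mcs\subset\mcs$ under local unitaries and invariance of the coherent information --- and your handling of the boundary case $p=\tfrac{1}{2}$ are correct points the paper leaves implicit.
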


\section{Concluding remarks}\label{sec:conclusion}
In this paper we derived upper bounds on the one-way and two-way distillable entanglement of bipartite quantum states.
In both settings we identified `useful' classes of states for which the regularized formulae for $\Done(\cdot)$ resp.~$\Dtwo(\cdot)$ reduce to the coherent information $I(A\rangle B)_\rho$, and thus a single-letter formula. 
These useful states are given by degradable and maximally correlated states, respectively.
Moreover, we identified `useless' states for which the distillable entanglement is always zero.
These are the antidegradable and PPT states, respectively.
Our upper bounds on the distillable entanglement follow from the fact that in both the one-way and two-way LOCC setting it is convex on convex combinations of useful and useless states.
The bounds are similar in spirit to the additive extensions bounds in \cite{SS08}, and always at least as good an upper bound as the entanglement of formation.
We also extended our method to obtain an upper bound on the quantum capacity based on decompositions of a quantum channel into degradable and antidegradable completely positive maps, recovering a result by \textcite{Yang}.

By interpreting our upper bounds as convex roof extensions, we were able to formulate the upper bounds on the distillable entanglement of isotropic and Werner states as a non-convex optimization problem.
For the one-way distillable entanglement of the Choi state of the qubit depolarizing channel, this optimization led to an upper bound on its quantum capacity that is strictly tighter than the best previously known upper bound for large values of the depolarizing parameter.
For the two-way distillable entanglement of both isotropic and Werner states, the non-convex optimization achieves the respective $\ppt$-relative entropy of entanglement, and thus provides new expressions for the latter.

Comparing the one-way and two-way LOCC settings with respect to how our upper bound performs in comparison to previously known upper bounds on the distillable entanglement, we notice the following discrepancy: While we get a strictly tighter bound in the one-way setting in certain cases, we can only achieve the $\ppt$-relative entropy of entanglement in the two-way setting.
This leads us to the following question: Can we develop an extended theory of one-way entanglement distillation in the same spirit as Rains' work on the $\ppt$-distillable entanglement in \cite{Rai99,Rai01}?
One possible approach to develop such a theory could be to augment the class of allowed operations from one-way LOCC to antidegradability-preserving operations.

\paragraph{Acknowledgments}
We would like to thank Christian Majenz and Andreas Winter for helpful discussions during the workshop ``Beyond IID in Information Theory'', July 18-22, 2016 in Barcelona, as well as Dong Yang, for pointing out to us the possibility of extending our method to quantum channels.
We also thank Johannes Bausch, Mark Girard and Will Matthews for helpful feedback.
This material is based upon work supported by the National Science Foundation under Grant Number 1125844.

\appendix

\section{Maximal overlap of antidegradable states with an MES}\label{sec:antidegradable-states}
In order to prove \Cref{lem:antideg-overlap}, we need the following result characterizing antidegradable states:

\begin{lemma}[\cite{Myh10}]\label{lem:antideg-symm-ext}
	A state $\rho_{AB}$ is antidegradable if and only if it has a symmetric extension.
\end{lemma}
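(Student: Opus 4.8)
The plan is to prove both implications through the operational characterization of antidegradability via an antidegrading map $\cA\colon E\to B$ satisfying $\rho_{AB}=\cA(\rho_{AE})$, combined with the essential uniqueness of purifications. Throughout I would fix a purification $|\phi\rangle_{ABE}$ of $\rho_{AB}$ and write $\rho_{AE}=\tr_B\phi_{ABE}$ for the complementary state. I would also record a small preliminary observation that makes both directions uniform: $\rho_{AB}$ admits a swap-invariant symmetric extension $\rho_{ABB'}$ (with $B'\cong B$ and $\tr_{B'}\rho_{ABB'}=\rho_{AB}$) if and only if it admits an extension $\omega_{ABB'}$ with equal marginals $\omega_{AB}=\omega_{AB'}=\rho_{AB}$. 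Indeed, given such an $\omega$, the symmetrized state $\tfrac12\big(\omega_{ABB'}+\bbF_{BB'}\,\omega_{ABB'}\,\bbF_{BB'}\big)$ is swap-invariant and still reduces to $\rho_{AB}$. This allows me to work with equal-marginal extensions regardless of which formulation of symmetric extension is intended.

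For the forward direction, I would assume $\rho_{AB}$ is antidegradable, so by \Cref{def:degradable-states}~(iii) there is an isometry $V\colon E\to B'F$ with $B'\cong B$ such that $|\psi\rangle_{ABB'F}=V|\phi\rangle_{ABE}$ satisfies $\psi_{AB}=\psi_{AB'}=\rho_{AB}$. Tracing out $F$ then gives the state $\psi_{ABB'}=\tr_F\psi_{ABB'F}$, which is an extension of $\rho_{AB}$ with equal marginals $\psi_{AB}=\psi_{AB'}=\rho_{AB}$, hence a symmetric extension after the symmetrization above. This direction is immediate once the definitions are unwound.

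For the reverse direction, I would assume $\rho_{AB}$ has a symmetric extension, taken to be an equal-marginal extension $\omega_{ABB'}$ with $\omega_{AB}=\omega_{AB'}=\rho_{AB}$. The key step is to purify $\omega_{ABB'}$ as $|\chi\rangle_{ABB'R}$ and to observe that $|\chi\rangle_{ABB'R}$ is then a purification of $\rho_{AB}$ with purifying system $B'R$. Since $|\phi\rangle_{ABE}$ is another purification of the same $\rho_{AB}$, essential uniqueness of purifications yields an isometry $W\colon E\to B'R$ with $(\id_{AB}\ox W)|\phi\rangle_{ABE}=|\chi\rangle_{ABB'R}$. I would then define the CPTP map $\cA\colon E\to B$ by $\cA(\cdot)\coloneqq\tr_R(W\cdot W^\dagger)$ followed by the identification $B'\cong B$, and compute
\begin{align}
(\id_A\ox\cA)(\rho_{AE})
&= \tr_{BR}\!\left[(\id_{AB}\ox W)\,\phi_{ABE}\,(\id_{AB}\ox W)^\dagger\right]\\
&= \tr_{BR}\,\chi_{ABB'R} = \chi_{AB'} = \rho_{AB},
\end{align}
where the last equality uses $\omega_{AB'}=\rho_{AB}$. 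This exhibits $\cA$ as an antidegrading map, proving that $\rho_{AB}$ is antidegradable. The main obstacle, and essentially the only nontrivial point, is bookkeeping of which subsystem plays the role of the environment: one must recognize the purifying system $B'R$ of $\chi$ as carrying a copy $B'$ of $B$, so that tracing out $R$ from the relating isometry $W$ produces precisely a channel \emph{into} $B$ rather than into an unrelated space. Once this identification is in place, the computation is routine.
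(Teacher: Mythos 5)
The paper itself gives no proof of this lemma: it is stated with a citation to \textcite{Myh10} and used as a black box in \Cref{sec:antidegradable-states}, so there is no in-paper argument to compare against. Judged on its own, your proof is correct, and it is essentially the standard argument from the cited reference: the preliminary symmetrization step correctly shows that swap-invariant extensions and equal-marginal extensions are interchangeable; the forward direction is the immediate unwinding of \Cref{def:degradable-states}~(iii) (trace out $F$, then symmetrize); and the backward direction via essential uniqueness of purifications is sound, with the computation $(\id_A\ox\cA)(\rho_{AE})=\tr_{BR}\chi_{ABB'R}=\chi_{AB'}=\rho_{AB}$ checking out because $W$ acts on $E$ alone and hence commutes with $\tr_B$. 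Two minor remarks. First, uniqueness of purifications yields a genuine isometry $W\colon E\to B'R$ only when $|B'R|\geq|E|$; in general one gets a partial isometry that is isometric on $\supp\phi_E$. This is harmless here—you may pad the purifying system $R$ of $\omega_{ABB'}$ to make it large enough, or complete $\tr_R(W\cdot W^\dagger)$ to a CPTP map outside $\supp\phi_E$ without affecting the computation, since $\rho_{AE}$ is supported on $\supp\phi_E$—but it deserves a sentence. Second, you can shortcut the final step: the isometry $W\colon E\to B'R$ already witnesses \Cref{def:degradable-states}~(iii) directly with $F=R$, because $\psi_{AB}=\phi_{AB}$ holds automatically (as $W$ acts only on $E$) and $\psi_{AB'}=\chi_{AB'}=\rho_{AB}$; constructing $\cA$ explicitly is only needed if one prefers the CPTP formulation, which the paper notes is equivalent to the isometric one.
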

Using \Cref{lem:antideg-symm-ext}, we can formulate the overlap of antidegradable states with the maximally entangled state as an SDP, for which strong duality holds:
\begin{lemma}\label{lem:antideg-overlap}
	The maximal overlap of an antidegradable state $\rho_{AB}$ with the maximally entangled state $\Phi^+$ can be formulated as the following SDP:
	\begin{align}
	\text{\normalfont maximize: } &\frac{1}{2} \tr(\rho_{ABB'}(\Phi_{AB}\ox \one_{B'} + \Phi_{AB'}\ox \one_B))\\
	\text{\normalfont subject to: } &\tr (\rho_{ABB'})=1\\
	&\rho_{ABB'}\geq 0.
	\end{align}
\end{lemma}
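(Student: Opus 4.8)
The plan is to reduce the optimization over antidegradable states to an optimization over their symmetric extensions, and then to exploit the symmetry of the resulting objective in order to drop the symmetry constraint entirely. The maximal overlap in question is $\max \tr(\rho_{AB}\Phi_{AB})$, where the maximum runs over all antidegradable states $\rho_{AB}$ and $\Phi_{AB}=|\Phi^+\rangle\langle\Phi^+|_{AB}$. By \Cref{lem:antideg-symm-ext}, a state is antidegradable if and only if it admits a symmetric extension, i.e.~a state $\rho_{ABB'}$ with $B'\cong B$ satisfying $\tr_{B'}\rho_{ABB'} = \rho_{AB}$ and $(\one_A\ox\bbF_{BB'})\rho_{ABB'}(\one_A\ox\bbF_{BB'}) = \rho_{ABB'}$, where $\bbF_{BB'}$ is the swap operator exchanging $B$ and $B'$. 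Hence I would first rewrite the maximal overlap as a maximization over all such symmetric extensions $\rho_{ABB'}$.

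The next step is to express the objective in terms of $\rho_{ABB'}$. Using $\tr_{B'}\rho_{ABB'}=\rho_{AB}$ gives $\tr(\rho_{AB}\Phi_{AB}) = \tr(\rho_{ABB'}(\Phi_{AB}\ox\one_{B'}))$, and setting $\rho_{AB'}\coloneqq\tr_B\rho_{ABB'}$ gives the companion identity $\tr(\rho_{AB'}\Phi_{AB'}) = \tr(\rho_{ABB'}(\Phi_{AB'}\ox\one_B))$. These two traces coincide for a symmetric extension: since $\bbF_{BB'}(\Phi_{AB}\ox\one_{B'})\bbF_{BB'} = \Phi_{AB'}\ox\one_B$, cyclicity of the trace together with the swap-invariance of $\rho_{ABB'}$ equates the two expressions. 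Their average is therefore exactly the SDP objective $\frac{1}{2}\tr(\rho_{ABB'}(\Phi_{AB}\ox\one_{B'}+\Phi_{AB'}\ox\one_B))$, which thus equals the overlap $\tr(\rho_{AB}\Phi_{AB})$ for every symmetric extension.

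The crux — and what lets me discard the explicit symmetry constraint to arrive at the stated SDP, which only demands $\rho_{ABB'}\geq 0$ and $\tr\rho_{ABB'}=1$ — is that the objective functional is itself invariant under $B\leftrightarrow B'$. For any feasible $\rho_{ABB'}$, the symmetrized operator $\tilde\rho_{ABB'}\coloneqq\frac{1}{2}\big(\rho_{ABB'}+(\one_A\ox\bbF_{BB'})\rho_{ABB'}(\one_A\ox\bbF_{BB'})\big)$ is again a density operator, is now a genuine symmetric extension of its own marginal $\tilde\rho_{AB}=\tr_{B'}\tilde\rho_{ABB'}$ (so that $\tilde\rho_{AB}$ is antidegradable), and yields the same objective value as $\rho_{ABB'}$ because the swap leaves the objective unchanged. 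Consequently every feasible point has objective value equal to the overlap of some antidegradable state, so the SDP value is at most the maximal overlap; conversely, the symmetric extension of an optimal antidegradable state is feasible with matching value, so the SDP value is at least the maximal overlap, and the two are equal. I expect the only delicate point to be the bookkeeping in the symmetrization step — verifying that $\tilde\rho_{ABB'}$ is genuinely swap-invariant and that its partial trace delivers an antidegradable marginal — whereas the equality of the two objective terms under the swap and the positivity and normalization of $\tilde\rho_{ABB'}$ are routine.
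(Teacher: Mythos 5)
Your proposal is correct and takes essentially the same route as the paper: both rest on \Cref{lem:antideg-symm-ext} together with the observation that symmetrizing an arbitrary state $\rho_{ABB'}$ via $\frac{1}{2}(\rho_{ABB'}+\bbF\rho_{ABB'}\bbF)$ produces a symmetric extension of an antidegradable marginal while leaving the swap-invariant objective $\frac{1}{2}(\Phi_{AB}\ox\one_{B'}+\Phi_{AB'}\ox\one_B)$ unchanged, which is exactly how the symmetry constraint gets absorbed into the objective. Your write-up simply makes explicit the two inequalities (SDP value $\leq$ and $\geq$ the maximal overlap) that the paper's one-line substitution argument compresses.
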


\begin{proof}
	By \Cref{lem:antideg-symm-ext}, the state $\rho_{AB}$ is antidegradable if and only if there is an extension $\rho_{ABB'}$ with $B'\cong B$ satisfying $\tr_{B'}\rho_{ABB'} = \rho_{AB}$ and $\F \rho_{ABB'}\F = \rho_{ABB'}$, where $\F$ is the swap operator on the $BB'$ system.
	Hence, we can write any antidegradable state $\rho_{AB}$ as
	\begin{align}
	\rho_{AB} = \tr_{B'}\trho_{ABB'},
	\end{align}
	where $\trho_{ABB'} = \frac{1}{2}(\rho_{ABB'} + \F \rho_{ABB'}\F)$ for some arbitrary (not necessarily symmetric) state $\rho_{ABB'}$.
	Substituting this in $\tr(\rho_{AB}\Phi_{AB}) = \tr(\trho_{ABB'}(\Phi_{AB}\ox\one_{B'}))$ then gives the SDP in the lemma.
\end{proof}

The solution of the dual problem in \Cref{lem:antideg-overlap} is equal to the largest eigenvalue of the operator $\frac{1}{2} (\Phi_{AB}\ox \one_{B'} + \Phi_{AB'}\ox\one_B)$, and therefore equal to $\frac{1+d}{2d}$ \cite{JV13}.
Hence, this is the maximal overlap of any antidegradable state with the maximally entangled state.

\emergencystretch=3em
\printbibliography[title={References},heading=bibintoc]
\end{document}